
\documentclass[a4paper, 12 pt, twoside, reqno]{amsart}


\usepackage[english]{babel}
\usepackage[utf8]{inputenc}


\setlength{\oddsidemargin}{5mm}
\setlength{\evensidemargin}{5mm}
\setlength{\textwidth}{150mm}
\setlength{\headheight}{0mm}
\setlength{\headsep}{12mm}
\setlength{\topmargin}{0mm}
\setlength{\textheight}{220mm}
\setcounter{secnumdepth}{3}

\usepackage{setspace}
\setstretch{1.1}

\usepackage[euler-digits,euler-hat-accent]{eulervm}


\usepackage{times}
\usepackage{amsmath}
\usepackage{amsfonts}
\usepackage{amssymb}
\usepackage{amsmath}
\usepackage{amsthm}
\usepackage{graphicx}
\usepackage{array}
\usepackage{color}
\usepackage{mathrsfs}
\usepackage{hyperref}
\usepackage{eucal}
\usepackage{esint} 
\usepackage{tikz}
\usepackage{upgreek}
\usepackage{enumitem}

\allowdisplaybreaks

\setlength{\headheight}{14pt}

\theoremstyle{plain}
\newtheorem{theorem}{Theorem}[section]
\newtheorem{corollary}[theorem]{Corollary}
\newtheorem{proposition}[theorem]{Proposition}
\newtheorem{lemma}[theorem]{Lemma}

\theoremstyle{definition}
\newtheorem{definition}[theorem]{Definition}

\theoremstyle{remark}
\newtheorem{remark}[theorem]{Remark} 
\newtheorem{example}[theorem]{Example}

\numberwithin{equation}{section}
\numberwithin{figure}{section}
\numberwithin{table}{section}


\newcommand{\R}{\mathbb{R}}
\newcommand{\N}{\mathbb{N}}
\newcommand{\C}{\mathbb{C}}                           

\newcommand{\Z}{\mathbb{Z}}

\newcommand{\s}[1]{\CMcal{#1}}
\newcommand{\f}[1]{\mathcal{#1}}                  
\newcommand{\bb}[1]{\mathscr{#1}}
\newcommand{\rr}[1]{\mathfrak{#1}}
\newcommand{\n}[1]{\mathbb{#1}}

\newcommand{\nnnorm}[1]{\lvert\hspace{-0.08em}\rvert\hspace{-0.08em}\rvert#1\lvert\hspace{-0.08em}\rvert\hspace{-0.08em}\rvert}

\newcommand{\ketbra}[2]{|#1\rangle\langle#2|}

\newcommand{\expo}[1]{\,\mathrm{e}^{#1}\,}                 

\newcommand{\dd}{\,\mathrm{d}}
\newcommand{ \ii}{\,\mathrm{i}\,}

\newcommand{\virg}[1]{\lq\lq#1\rq\rq}                \newcommand{\ie}{\textsl{i.\,e.\,}}
\newcommand{\eg}{\textsl{e.\,g.\,}}
\newcommand{\cf}{\textsl{cf}.\,}
\newcommand{\etc}{\textsl{etc}.\,}

\DeclareMathOperator{\Tr}{Tr}

\DeclareMathOperator{\esssup}{ess\,sup}

\hypersetup{
pdftoolbar=true,        
pdfmenubar=true,        
pdffitwindow=true,     
pdfstartview=true,    
pdftitle={NCG-Landau-II},    
pdfauthor={Giuseppe De Nittis},     
breaklinks=true, 
colorlinks=true,       
linkcolor=purple,         
citecolor=teal, 
urlcolor=blue, 
bookmarksopen=true, 
filecolor=magenta,      
}

\begin{document}

\title[Dixmier trace and  the IDOS of perturbed magnetic operators]{
Dixmier trace and  the DOS of perturbed magnetic operators}

\author[F. Belmonte]{Fabi\'an Belmonte}

\address[F. Belmonte]{Departamento de Matem\'aticas, Universidad Cat\'olica del Norte, Antofagasta, Chile}
\email{fbelmonte@ucn.cl}

\author[G. De~Nittis]{Giuseppe De Nittis}

\address[G. De~Nittis]{Facultad de Matemáticas \& Instituto de Física,
  Pontificia Universidad Católica de Chile,
  Santiago, Chile.}
\email{gidenittis@mat.uc.cl}

\vspace{2mm}

\date{\today}

\begin{abstract}
The main goal of this work is to provide a description of the {trace per unit volume} in terms of the {Dixmier trace} (regularized by the resolvent of the harmonic oscillator) for a large class of two-dimensional \emph{magnetic operators} perturbed by (homogeneous) {potentials}. One of the payoffs of this result is the possibility of reinterpreting the {density of states} (DOS) of these perturbed magnetic systems via the  {Dixmier trace}, and taking advantage of the fact that this quantity can be conveniently calculated on the basis of the Laguerrre functions that diagonalize the harmonic oscillator.

 \medskip

\noindent
{\bf MSC 2010}:
Primary: 	81R15;
Secondary: 	81V70, 58B34, 	81R60.\\
\noindent
{\bf Keywords}:
{\it Perturbed Landau Hamiltonian, IDOS and DOS,  Dixmier trace, trace per unit volume.}
\end{abstract}

\maketitle

\tableofcontents

\section{Introduction}\label{sec:Intr0}
The main goal of this work is to provide a description of the \emph{trace per unit volume} in terms of the \emph{Dixmier trace} (regularized by the resolvent of the harmonic oscillator) for a large class of two-dimensional \emph{magnetic operators} perturbed by \emph{potential}. The results exposed in this work extend the achievements obtained in \cite{belmonte-denittis-22,denittis-gomi-moscolari-19,denittis-sandoval-00,denittis-sandoval-21} for the case of \emph{unperturbed} magnetic operators, and are complementary to similar results derived in \cite{Bellissard-03,bellissard-elst-schulz-baldes-94} 
for the case of discrete operators, and in \cite{azamov-mcdonald-sukochev-zanin-19} for the case of the free Laplacian perturbed by potentials. Since the trace per unit of volume is the crucial ingredient for the definition of the \emph{integrate density of states} (IDOS), the results obtained in this work supply a solid background for the use of the Dixmier trace in the spectral analysis of perturbed magnetic operators.

\medskip

The first ingredient which enter in the description of our main results is the trace per unit of volume. Let  $\Lambda_{n} \subseteq \R^d$ be
an
 increasing sequence of compact subsets such that $\Lambda_n\nearrow\R^d$ and which satisfies the \emph{F{\o}lner condition} (see \cite{greenleaf-69} for more details). 
 Let $\chi_{\Lambda_n}$ be the projection defined as the multiplication operator by the characteristic function of $\Lambda_n$.
 A bounded operator $S$ acting on $L^2(\R^d)$ admits the trace per unit  volume (with respect  to the  F{\o}lner sequence $\Lambda_n$) if the limit
 \begin{equation}\label{eq:TUV}
\s{T}_{\rm u.v.}(S)\;:=\;\lim_{n\to+\infty} \frac{1}{|\Lambda_n|}{\Tr}_{L^2(\R^2)}( \chi_{\Lambda_n}S \chi_{\Lambda_n} )\;
 \end{equation}
 exists. The operators that admit a trace per unit volume independently of the peculiar election of the F{\o}lner sequence are particularly relevant in physics since they represent observables with \virg{good} thermodynamic properties. In the presence of perturbations by potentials 
which are \emph{homogeneous} with respect to the spatial translations one is forced to replace a single operator $S$ with a \emph{covariant} family of (measurable) operators $S:=\{S_\omega\}_{\omega\in\Omega}$ where the \emph{configuration space} $\Omega$ is a nice topological space endowed with an $\R^d$-action and an ergodic probability measure $\n{P}$ (see Section \ref{Sec:pot}). In this case, the
correct formula for the trace per unit  volume is given by
\begin{equation}\label{eq:tr_un_vol}
 \s{T}_{\rm u.v.}(S)\;:=\;\n{E}\left[ 
\s{T}_{\rm u.v.}(S_\omega)\right]
\end{equation}
where, following  a consolidated tradition, one uses the notation
\[
\n{E}[f]\;:=\; \int_\Omega\dd\n{P}(\omega)\; f(\omega)
\]
for the \emph{expectation} (probabilistic average)
of the measurable function $f:\Omega\to\C$. The trace per unit volume is the central tool for the definition of the IDOS of a self-adjoint (bounded from below) operator $H$.
Let $\chi_{(-\infty,\epsilon]}(H)$ be the spectral projection of $H$ associated with the spectral interval  $(-\infty,\epsilon]$. Then the IDOS of $H$ is the function $N_H:\R\to [0,+\infty]$  defined as
\begin{equation}\label{eq:def_idos}
N_H(\epsilon)\;:=\; \s{T}_{\rm u.v.}\left(\chi_{(-\infty,\epsilon]}(H)\right)\;.
\end{equation}
The \emph{density of states} (DOS)  of $H$ is the
 \emph{Lebesgue-Stieltjes measure} $\mu_H$ associated to $N_H$. There is an extremely extensive literature devoted to the study of the existence and regularity of the IDOS and the DOS of a given operator $H$. The interested reader is referred to the classic monographs \cite{carmona-lacroix-90,pastur-figotin-92}, or the more recent book \cite{veselic-08}. It is also worth mentioning that the trace per unit volume enters  in a crucial way in the \emph{Kubo  formula} which is one of the main result for the study of  transport phenomena in the
linear response regime \cite{bellissard-schulz-baldes-98,elgart-schlein-04,bouclet-germinet-klein-schenker-05,denittis-lein-book,henheik-teufel-21}. 

 \medskip
 
 The second ingredient we need is the Dixmier trace.
 This trace was invented by Dixmier as an example of a non-normal
 trace \cite{dixmier-66}.
 There are several standard references
 devoted to the theory of the Dixmier trace \cite{connes-94,connes-moscovici-95,gracia-varilly-figueroa-01,lord-sukochev-zanin-12,alberti-matthes-02}, but for the unfamiliar reader we will provide here a brief summary. Let $\mu_n(T)$  be the sequence of the \emph{singular values} of a compact operator $T$, \ie    the eigenvalues of  $|T|:=\sqrt{T^*T}$,
 listed in decreasing order and repeated according to their multiplicity. Consider the  new sequence
 \begin{equation}\label{eq:partial_gamma}
\gamma_N(T)\;:=\;\frac{1}{\log(N)}\sum_{n=0}^{N-1}\mu_n(T)\;,\qquad N>1\;.
\end{equation}
 Then
  $T$ is in the \emph{Dixmier ideal} $\rr{S}^{1^+}$ if its \emph{(Calder\'on)} norm
\begin{equation}\label{eq:clad_norm}
\lVert T\rVert_{1^+}\;:=\;\sup_{N>1}\ \gamma_N(T)\;<\;+\infty
\end{equation}
is finite. $\rr{S}^{1^+}$ is a two-sided self-adjoint ideal that is closed with respect to the norm~\eqref{eq:clad_norm}, but not
with respect to the operator norm. Therefore, every  $T\in\rr{S}^{1^+}$ defines 
by means of of \eqref{eq:partial_gamma} a sequence $\gamma_N(T)$ in $\ell^\infty(\N)$.
To define a trace functional with domain the  ideal $\rr{S}^{1^+}$ one needs to choose a \emph{generalized scale-invariant limit} ${\rm Lim}: \ell^{\infty}(\mathbb{N}) \to \mathbb{C}$ and the associated {Dixmier trace} for a positive element  is defined as
\begin{equation}\label{eq:dix_norm-00}
  {\Tr}_{{\rm Dix},{\rm Lim}}(T)\;: =\; {\rm Lim}\big[ \{ \gamma_{N}(T)\}_{N} \big]
  \;,\qquad T \in \rr{S}^{1^+}\;,\;\; T\geqslant0\;.
\end{equation}
This definition extends to non-positive elements of  $\rr{S}^{1^+}$ by linearity. The
resulting trace
is continuous with respect to the  norm~\eqref{eq:clad_norm}, \ie
$|{\Tr}_{{\rm Dix},{\rm Lim}}(T)|\leqslant \|T\|_{1^+}$. 
 {An element $T\in \rr{S}^{1^+}$ is called \emph{measurable} if  
the value of \eqref{eq:dix_norm-00} is independent of the
  choice of the generalized scale-invariant limit ${\rm Lim}$. For a positive element $T\geqslant 0$ this is equivalent to the convergence of a certain Ces\`{a}ro mean of $\gamma_N(T)$ \cite[Chap.~4, Sect.~2, Proposition 6]{connes-94}. In particular, for a $T\geqslant 0$ such that $\gamma_N(T)$ is convergent, one has that $T$ is measurable and 
  \begin{equation}\label{eq:recip_Dix_Tr}
  {\Tr}_{{\rm Dix}}(T)\;: =\;
  \lim_{N\to\infty}\left(\frac{1}{\log(N)}\sum_{n=0}^{N-1}\mu_n(T)\right)\;,
  \end{equation}
independently of the election of the generalized scale-invariant limit.}
The set of measurable operators $\rr{S}^{1^+}_{\rm m}$ is
 a closed subspace of
$\rr{S}^{1^+}$ (but not an ideal) which is invariant under conjugation by bounded invertible operators. This is the class of operators in which will be mainly interested. It is worth mentioning at the end of this short presentation that the Dixmier trace enters as a crucial ingredient for the construction of the \emph{quantize calculus} and the definition of the \emph{Chern character} in Connes' \emph{non-commutative geometry} \cite{connes-94,connes-moscovici-95,gracia-varilly-figueroa-01}.
 
\medskip 
 
It is now the moment to introduce the systems to which our results apply. 
In the Hilbert space $L^2(\R^2)$,  let $\{\psi_{n,m}\}\subset L^2(\R^2)$, with
$n,m\in\N_0:=\{0\}\cup\N$, be the orthonormal basis given by the Laguerre functions \eqref{eq:lag_pol}.
Let us introduce the family $\{\Upsilon_{j\mapsto k}\;|\, (j,k)\in \N_0^2\}$ of \emph{transition operators}   defined by 
\begin{equation}\label{eq:intro:basic_op}
\Upsilon_{j\mapsto k}\psi_{n,m}\;:=\;\delta_{j,n}\;\psi_{k,m}\;,\qquad k,j,n,m\in\N_0\;.
\end{equation}
A direct computation shows that  \cite[Proposition 2.10]{denittis-sandoval-00}
\begin{equation}\label{eq:rel_alg}
(\Upsilon_{j\mapsto k})^*\;=\;\Upsilon_{k\mapsto j}\;,\qquad \Upsilon_{j\mapsto k}\Upsilon_{m\mapsto n}\;=\;\delta_{j,n}\Upsilon_{m\mapsto k}
\end{equation}
for every $j,k,n,m\in \N_0^2$. 
In view of the relations \eqref{eq:rel_alg} the transition operators 
can be chosen as generators of
 a $C^*$-algebra. Let 
\begin{equation}\label{eq:equal_C-Ups}
\bb{C}_B\;=\;C^*(\{\Upsilon_{j\mapsto k}\;|\;k,j\in\N_0\})
\end{equation}
 be the (non-unital) $C^*$-algebra generated inside the bounded operators
$\bb{B}(L^2(\R^2))$ by  the norm closure of polynomials in the generators $\Upsilon_{j\mapsto k}$. 
We will refer to  $\bb{C}_B$ as the $C^*$-algebra of \emph{(unperturbed) magnetic operators}. Such a name is justified by the fact that the Landau Hamiltonian
$H_{B}$ (for a \emph{magnetic field} $B>0$) defined in \eqref{eq:intro_LH} is \emph{affiliated} with $\bb{C}_B$. More precisely, it turns out that the \emph{Landau projections} $\Pi_j$ (the spectral projections of $H_B$) are contained in $\bb{C}_B$ in view of the equality $\Pi_j=\Upsilon_{j\mapsto j}$. The algebra $\bb{C}_B$ has been studied extensively in \cite{denittis-sandoval-00,denittis-sandoval-21} and Section \ref{sec:unp_magn_alg} provides
 a brief presentation of some important aspects concerning this algebra and its enveloping von Neumann algebra $\bb{M}_B$.

\medskip

The algebra of magnetic operators provides information only on the unperturbed dynamics in the presence of a magnetic field $B$. In order to also consider the effect of homogeneous perturbations (periodic, random, \etc) one has to enlarge the algebra $\bb{C}_B$. This can be done by considering 
an ergodic dynamical system $(\Omega,\R^2,\rr{t},\n{P})$, 
 which encodes the information about the possible configuration of the perturbations and their transformation under the spatial translations implemented by the $\R^2$-action $\rr{t}$. As described in detail in  Section \ref{Sec:pot}, to each continuous function $g\in C(\Omega)=:\bb{A}_\Omega$ and any point $\omega\in\Omega$, one can define the multiplication operator 
  $M_{g,\omega}\in \bb{B}(L^2(\R^2))$  given by
$$
\big(M_{g,\omega}\varphi\big)(x)\;:=\;g\big(\rr{t}_{x}(\omega)\big)\varphi(x)\;,\qquad \varphi\in L^2(\R^2)\;.
$$ 
Being a multiplication operator,  $M_{g,\omega}$ can be interpreted as a potential when $g$ is real-valued. We will focus on the products of the type $A M_{g,\omega}$ with $A\in\bb{C}_B$ and $g\in \bb{A}_\Omega$. Let us denote with $(\bb{C}_B\cdot\bb{A}_\Omega)_\omega$ the linear space of finite linear combinations of such products and with 
$[[\bb{C}_B\cdot\bb{A}_\Omega]]_\omega$ its norm-closure inside 
 $\bb{B}(L^2(\R^2))$. One is tempted to see $[[\bb{C}_B\cdot\bb{A}_\Omega]]_\omega$ as a subspace of the minimal  $C^*$-algebra $\rr{A}_{B,\omega}$ generated by $\bb{C}_B$ and the potentials associated to  
  $\bb{A}_\Omega$. However, this is unnecessary, in view of the following result:
   \begin{theorem}\label{teo:ident1}
 For every $\omega\in\Omega$ the space $[[\bb{C}_B\cdot\bb{A}_\Omega]]_\omega$ is a $C^*$-algebra, and in particular
\begin{equation}\label{eq:caract1}
 [[\bb{C}_B\cdot\bb{A}_\Omega]]_\omega\;=\;\rr{A}_{B,\omega}\;.
\end{equation}
 Moreover, $\rr{A}_{B,\omega}$ coincides with a representation (labelled by $\omega$) of the twisted crossed product $\bb{A}_\Omega\rtimes_B\R^2$\;.
 \end{theorem}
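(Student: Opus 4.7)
The plan is to identify $[[\bb{C}_B\cdot\bb{A}_\Omega]]_\omega$ with the image of a covariant representation of the twisted crossed product $\bb{A}_\Omega\rtimes_B\R^2$, and deduce both parts of Theorem \ref{teo:ident1} simultaneously.

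First, I would invoke the \emph{magnetic Weyl description} of $\bb{C}_B$ established in \cite{denittis-sandoval-00}: the algebra $\bb{C}_B$ is the norm-closure of operators of the form $W^B(h):=\int_{\R^2} h(x)\,U^B(x)\,\dd x$, where $U^B(x)$ is the unitary magnetic translation at $x\in\R^2$ and $h$ ranges over a suitable Schwartz class (the transition operators $\Upsilon_{j\mapsto k}$ themselves arising as magnetic Weyl operators with explicit Laguerre symbols). This reduces the whole analysis to the algebra generated by $\{U^B(x)\}_{x\in\R^2}$ and $\{M_{g,\omega}\}_{g\in\bb{A}_\Omega}$.

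Next, I would establish the covariance identity
\[
U^B(x)\,M_{g,\omega}\,U^B(x)^{-1}\;=\;M_{\alpha_x(g),\omega}\;,\qquad \alpha_x(g)(\cdot)\;:=\;g\big(\rr{t}_{-x}(\cdot)\big)\;,
\]
which follows from a direct computation, since the $U(1)$-phases carried by $U^B(\pm x)$ cancel under conjugation of a multiplication operator. An immediate consequence is the commutation formula
\[
M_{g,\omega}\,W^B(h)\;=\;\int_{\R^2} h(x)\,U^B(x)\,M_{\alpha_{-x}(g),\omega}\,\dd x\;,
\]
with a symmetric version on the other side. Because $\|M_{\alpha_{-x}(g),\omega}\|=\|g\|_\infty$ uniformly in $x$, the integral converges in the weak operator topology and is approximated in norm by finite Riemann sums lying in $(\bb{C}_B\cdot\bb{A}_\Omega)_\omega$. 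Iterating this manoeuvre, the product of any two generators $A_1 M_{g_1,\omega}\cdot A_2 M_{g_2,\omega}$ again belongs to $[[\bb{C}_B\cdot\bb{A}_\Omega]]_\omega$; combined with $(A M_{g,\omega})^{*}=M_{\bar g,\omega}A^{*}$ and the same commutation trick to bring the multiplication operator back to the right, this shows that $[[\bb{C}_B\cdot\bb{A}_\Omega]]_\omega$ is a $C^{*}$-subalgebra of $\bb{B}(L^2(\R^2))$. Since by construction it contains both $\bb{C}_B$ and every $M_{g,\omega}$, the minimality of $\rr{A}_{B,\omega}$ yields \eqref{eq:caract1}.

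Finally, the same integral formula suggests defining
\[
\pi_\omega(f)\;:=\;\int_{\R^2} M_{f(x),\omega}\,U^B(x)\,\dd x\;,\qquad f\in\s{S}\big(\R^2,\bb{A}_\Omega\big)\;,
\]
on the Schwartz dense $*$-subalgebra of the twisted crossed product $\bb{A}_\Omega\rtimes_B\R^2$, whose twist is precisely the magnetic flux 2-cocycle encoded in $U^B(x)U^B(y)=\theta_B(x,y)U^B(x+y)$. The $*$-homomorphism property of $\pi_\omega$ reduces once more to the covariance identity, and by the universal property of the crossed product, $\pi_\omega$ extends to a $C^{*}$-morphism whose image coincides with $[[\bb{C}_B\cdot\bb{A}_\Omega]]_\omega=\rr{A}_{B,\omega}$, proving the second assertion. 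The main obstacle is the uniform control of these Weyl integrals: since $x\mapsto U^B(x)$ is only \emph{strongly}, not \emph{norm}-continuous, the integrals above have to be interpreted in the weak/strong operator sense, and a careful approximation argument is needed to interchange the norm closure with this integration procedure.
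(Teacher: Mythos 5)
Your overall strategy -- realize $[[\bb{C}_B\cdot\bb{A}_\Omega]]_\omega$ as the range of a covariant representation $\pi_\omega$ of the twisted crossed product, and use the automatic closedness of that range -- is the same as the paper's, and the final paragraph correctly identifies where the proof must be closed. The problem is that the closing step is exactly where your argument breaks, and you acknowledge it without repairing it.

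Concretely, you claim that $\int_{\R^2} h(x)\,U^B(x)\,M_{\alpha_{-x}(g),\omega}\,\dd x$ is ``approximated in norm by finite Riemann sums lying in $(\bb{C}_B\cdot\bb{A}_\Omega)_\omega$''. This would require the integrand $x\mapsto h(x)\,U^B(x)\,M_{\alpha_{-x}(g),\omega}$ to be norm-continuous and compactly supported (so that it is a Bochner integral with values in $\bb{B}(L^2)$), but since $x\mapsto U^B(x)$ is only strongly continuous, the Riemann sums converge only in the strong operator topology, which is not enough to stay inside the norm closure $[[\bb{C}_B\cdot\bb{A}_\Omega]]_\omega$. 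You flag this yourself as ``the main obstacle'', but leave it open; as written, the proof that $[[\bb{C}_B\cdot\bb{A}_\Omega]]_\omega$ is closed under products does not go through, and hence neither does the claim that it is a $C^*$-algebra.

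The paper resolves this exact difficulty by never approximating the Weyl integral pointwise. Instead it isolates the algebraic dense subspace $\s{F}^\star$ of the twisted crossed product, built from simple tensors $f\odot g$ with $f$ a finite linear combination of Laguerre functions, and shows by a closed-form computation (Proposition \ref{prop:com_teo1}, with $K_{\psi_{j,k}}=(\sqrt{2\pi}\,\ell)^{-1}\Upsilon_{j\mapsto k}$) that $\pi_\omega\bigl((f\odot g)^\star\bigr)=K_{\bar f}\,M_{\bar g,\omega}\in(\bb{C}_B\cdot\bb{A}_\Omega)_\omega$ exactly, with no limiting procedure at all. Density of $F(\R^2)$ in $L^1(\R^2)$ (Riesz means of the Laguerre expansion) together with the automatic continuity of $C^*$-representations then upgrades this to the $C^*$-level, and the closed-range property of $*$-homomorphisms gives that $\pi_\omega(\bb{A}_\Omega\rtimes_B\R^2)=\overline{\pi_\omega(\s{F}^\star)}=[[\bb{C}_B\cdot\bb{A}_\Omega]]_\omega$ is a $C^*$-algebra. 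If you want to salvage your version, replace the Riemann-sum step by precisely this device: choose a dense $*$-subalgebra of the crossed product that is carried by $\pi_\omega$ \emph{exactly} (not approximately) into $(\bb{C}_B\cdot\bb{A}_\Omega)_\omega$, and let the abstract continuity and closed-range theorems do the rest.
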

  \medskip
  
  \noindent
  Theorem \ref{teo:ident1} is the first main result of this work. 
  Although at first glance it may appear somewhat surprising, it
  has the same taste as other results present in the literature like \cite[Theorem 1.1]{georgescu-iftimovici-02} and \cite[Lemma 3.2]{georgescu-iftimovici-06} (but with a totally distinct strategy of the proof).
As shown in Section \ref{sec:pert_magn},
 the $C^*$-algebra $\rr{A}_{B,\omega}$ contains the resolvents of the magnetic operators perturbed by potentials with the reference point, or better the \emph{origin}, fixed by $\omega$. For this reason, we will refer to $\rr{A}_{B,\omega}$ as the algebra of perturbed magnetic operators.
However, to remove the unnecessary and unphysical dependence on the choice of a particular origin $\omega$ it is customary to consider instead of a single operator $S_\omega\in \rr{A}_{B,\omega}$ a complete family $S:=\{S_\omega\}_{\omega\in\Omega}$ whose elements are subject to a suitable covariance relation under translations (see eq. \eqref{eq:cov_cond_V}). This leads us to consider the 
 \emph{full} algebra of perturbed magnetic operators
\begin{equation}\label{eq:int_rep_1}
\rr{A}_{B,\Omega}\;:=\;\int^\oplus_{\Omega}\dd\n{P}(\omega)\;\rr{A}_{B,\omega}\;.
\end{equation}
 acting on the \emph{direct integral} Hilbert space \cite[Part II, Chapter 1]{dixmier-81}
 \begin{equation}\label{eq:dir_int<_hil}
 \s{H}_\Omega\;:=\;\int^\oplus_{\Omega}\dd\n{P}(\omega)\;L^2(\R^2)\;\simeq\;L^2(\Omega,\n{P})\otimes L^2(\R^2)\;.
 \end{equation}
The $C^*$-algebra $\rr{A}_{B,\Omega}$, along with its 
enveloping von Neumann algebra $\rr{M}_{B,\Omega}$, are the right objects to study the properties of magnetic systems perturbed by homogeneous potentials.  
This is perfectly consistent with the existing literature. In fact  Proposition \ref{prop:int_rep_pair} shows that 
  $\rr{A}_{B,\Omega}$ is a  faithful and non-degenerate
  (left-regular) representation of the  \emph{twisted crossed product} $\bb{A}_\Omega\rtimes_B\R^2$ associated to the dynamical system $(\Omega,\R^2,\rr{t},\n{P})$  \cite{busby-smith-70,pedersen-79,packer-raeburn-89,williams-07}. 
  The $C^*$-algebra $\bb{A}_\Omega\rtimes_B\R^2$ plays a privileged and relevant role in the study of topological properties of condensed matter systems since the seminal paper \cite{Bellissard-93},  where it was named the \emph{noncommutative Brillouin zone}.
The identification of   $\rr{M}_{B,\Omega}$ as the von Neumann algebra associated to the 
  faithful 
  (left-regular) representation of $\bb{A}_\Omega\rtimes_B\R^2$ has a further important consequence. In fact it is known that $\rr{M}_{B,\Omega}$ can be endowed with a unique faithful and semi-finite norma trace $\tau_{\n{P}}$ defined on a dense ideal $\rr{I}_{B,\Omega}\subset \rr{M}_{B,\Omega}$. The proof of this fact (see Theorem \ref{theo_trac}) presented here follows very closely the construction in \cite{lenz-99}, and relies on the \emph{Hilbert algebra} structure underlying the $C^*$-algebra 
  $\rr{A}_{B,\Omega}$ as explained in Section \ref{sec:tr_u_v} and Appendix \ref{sec_Hilb_str}. 
 This trace is indeed closely related with the trace per unit volume \eqref{eq:tr_un_vol}. As discussed in Appendix \ref{sec:tra_UV} any $S\in\rr{I}_{B,\Omega}$ admits trace per unit volume 
  (independent of the  election of the F{\o}lner sequence) and the following equality holds true
\begin{equation}\label{eq:intro_001}
\tau_{\n{P}}(S)\;=\; 2\Lambda_B\;\s{T}_{\rm u.v.}(S)\;, 
\end{equation}
where $\Lambda_B:=\pi\ell^2$  has the physical meaning of the area of the \emph{magnetic disk} of radius $\ell$, and $\ell={B}^{-\frac{1}{2}}$ when all the physical constants are set equal to 1.
  
\medskip

We still need a small effort to introduce our second main result.  For $q\in\N$ consider 
  the space $\ell^q(\N_0^2, \bb{A}_\Omega)$ of $q$-summable sequences, \ie $\{g_{n,m}\}\subset\bb{A}_\Omega$
 if and only if $\sum_{(n,m)\in \N_0^2}\|g_{n,m}\|^q_\infty<\infty$.
  Associated to this let us introduce the (formal)  space
\begin{equation}\label{eq:exp_op}
\rr{L}^q_{B,\Omega}\;:=\;\left.\left\{S\;=\;\sum_{(n,m)\in\N_0^2}\Upsilon_{n\mapsto m}M_{g_{n,m}}\;\right|\;\{g_{n,m}\}\in \ell^q(\N_0^2, \bb{A}_\Omega)\right\}\;.
\end{equation}
Here  $M_{g_{n,m}}=\{M_{g_{n,m},\omega}\}_{\omega\in\Omega}$ is the bounded operator on $\bb{B}(\s{H}_\Omega)$ generated by the collection of all the potentials $M_{g_{n,m},\omega}\in\bb{B}(L^2(\R^2))$
associated to $g_{n,m}\in \bb{A}_\Omega$ and $\omega\in\Omega$.  Only  the cases $q=1,2$ are relevant for the aims of this work, and for the needs of this introduction, we focus only on $q=1$. The relevance of $\rr{L}^1_{B,\Omega}$ is that it is norm-dense in $\rr{A}_{B,\Omega}$ and strongly-dense in $\rr{M}_{B,\Omega}$ as consequence of Theorem \ref{teo:ident1}. Moreover, one can prove that $\rr{L}^1_{B,\Omega}\subset \rr{I}_{B,\Omega}$ is contained inside the ideal of definition of the trace $\tau_{\n{P}}$ (see Proposition \ref{prop:_pre_trac_dix}).
 
\medskip

Just as a last bit of information let us introduce the  \emph{harmonic oscillator} $Q$ on $L^2(\R^2)$ described in terms of its  diagonalization on the basis of the Laguerre functions $\psi_{n,m}$ as
\[
Q\psi_{n,m}\;:=\;(n+m+1)\psi_{n,m}\;,\qquad (n,m)\in \N^2_0\;.
\] 
The expression
\begin{equation}\label{eq:op-Q-2}
 Q_{\lambda}^{-s}\;:=\;(Q+\lambda{\bf 1})^{-s}
\end{equation}
  defines a compact operator on $L^2(\R^2)$ for every 
$s>0$ and $\lambda>-1$.

 \medskip
 With all the information provided above, we can present a precise statement of the second main result achieved in this work.
\begin{theorem}\label{theo:main_dix_eq_pot}
Let $S=\{S_\omega\}_{\omega\in\Omega}\in\rr{L}^1_{B,\Omega}$. Then both $Q_{\lambda}^{-1}S_\omega$ and $S_\omega Q_{\lambda}^{-1}$ are measurable operators, \ie elements of  $\rr{S}^{1^+}_{\rm m}$, and
\begin{equation}\label{eq:traXXX_III_XX_z}
\begin{aligned}
\tau_{\n{P}}(S) \;=\; \n{E}\left[{\Tr}_{\rm Dix}\big(Q_{\lambda}^{-1}S_\omega\big)\right]\;&=\;\n{E}\left[{\Tr}_{\rm Dix}\big(S_\omega Q_{\lambda}^{-1}\big)\right]\end{aligned}
\end{equation}
independently of $\lambda>-1$.
\end{theorem}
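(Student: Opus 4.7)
The plan is to prove \eqref{eq:traXXX_III_XX_z} first on the generators $S=\Upsilon_{n\mapsto m}M_g$ (with $g\in\bb{A}_\Omega$ and $(n,m)\in\N_0^2$) and then to extend by linearity to all of $\rr{L}^1_{B,\Omega}$. The extension is justified by the continuity estimate
$$\|Q_\lambda^{-1}\,\Upsilon_{n\mapsto m}\,M_{g_{n,m},\omega}\|_{1^+}\;\leqslant\;\|Q_\lambda^{-1}\,\Upsilon_{n\mapsto m}\|_{1^+}\;\|g_{n,m}\|_\infty$$
combined with $\{g_{n,m}\}\in\ell^1(\N_0^2,\bb{A}_\Omega)$, which guarantees absolute convergence of the decomposition inside the Dixmier ideal; together with $|{\Tr}_{\rm Dix}|\leqslant\|\cdot\|_{1^+}$ and the continuity of $\tau_{\n{P}}$ inherited from the Hilbert-algebra structure of Section~\ref{sec:tr_u_v}, this upgrades termwise identity to the full equality.

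For a generator, set $T_\omega:=Q_\lambda^{-1}\Upsilon_{n\mapsto m}M_{g,\omega}$. Using $Q_\lambda^{-1}\psi_{j,k}=(j+k+1+\lambda)^{-1}\psi_{j,k}$ together with $\Upsilon_{n\mapsto m}\psi_{j,k}=\delta_{j,n}\psi_{m,k}$, the operator $Q_\lambda^{-1}\Upsilon_{n\mapsto m}$ is supported on the ``fiber'' $\{\psi_{n,k}\}_{k\in\N_0}$ and acts there diagonally with eigenvalues $(m+k+1+\lambda)^{-1}\sim 1/k$, placing it (and hence $T_\omega$) in $\rr{S}^{1^+}$. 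A short computation shows that $(Q_\lambda^{-1}-Q_{\lambda'}^{-1})\Upsilon_{n\mapsto m}$ has singular values decaying like $1/k^2$ and is therefore trace-class; its Dixmier trace vanishes and yields the $\lambda$-independence. Analogously, the commutator $[Q_\lambda^{-1},\Upsilon_{n\mapsto m}]$ is trace-class, which together with the tracial property of $\Tr_{\rm Dix}$ modulo $\rr{S}^1$ reduces the equality with $S_\omega Q_\lambda^{-1}$ to the one already proved for $Q_\lambda^{-1}S_\omega$.

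The heart of the proof is the evaluation of $\n{E}[\Tr_{\rm Dix}(T_\omega)]$. Measurability of $T_\omega$ is obtained by applying Connes' criterion \cite[Chap.~4, Sect.~2, Prop.~6]{connes-94} to the logarithmic Cesàro mean
$$\gamma_N(T_\omega)\;=\;\frac{1}{\log N}\sum_{k=0}^{N-1}\frac{\langle\psi_{m,k},M_{g,\omega}\psi_{n,k}\rangle}{m+k+1+\lambda}\,,$$
and the convergence of this sequence must be identified, for $\n{P}$-a.e.\ $\omega$, with $\delta_{n,m}\,\tilde g(\omega)$, where $\tilde g$ is a Birkhoff-type ergodic mean of $g$ along the $\R^2$-action $\rr{t}$. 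The underlying mechanism is that $|\psi_{m,k}|^2$ concentrates, as $k\to\infty$, on the circle of radius $\ell\sqrt{2(m+k+1)}$; after the change of variable $k\mapsto r_k=\ell\sqrt{2(k+m+1)}$ the weighted Laguerre sum above becomes, up to lower-order corrections, a two-dimensional spatial mean of $g(\rr{t}_x\omega)$ over a ball in $\R^2$, and Birkhoff's theorem for the $\R^2$-action yields the limit $\tilde g(\omega)$. The off-diagonal case $n\neq m$ is handled by the orthogonality of Laguerre functions combined with a Riemann--Lebesgue-type oscillation estimate. Taking $\n{E}$ and invoking ergodicity converts $\tilde g(\omega)$ into $\int_\Omega g\,\dd\n{P}$, and this matches $\tau_{\n{P}}(\Upsilon_{n\mapsto m}M_g)=\delta_{n,m}\int_\Omega g\,\dd\n{P}$ computed from the Hilbert-algebra formula of Section~\ref{sec:tr_u_v} and the normalization \eqref{eq:intro_001}. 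The main obstacle is precisely this ergodic step: making the radial concentration of $|\psi_{m,k}|^2$ quantitative enough, uniformly in $\omega$ and in the parameters $(n,m,\lambda)$, to legitimately pass from the Laguerre Cesàro mean to the spatial Birkhoff average.
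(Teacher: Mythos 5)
Your outer structure---establish \eqref{eq:traXXX_III_XX_z} on the generators $\Upsilon_{n\mapsto m}M_{g,\omega}$, then extend to $\rr{L}^1_{B,\Omega}$ by $\ell^1$-summability and the Calder\'on-norm continuity of the Dixmier trace---matches the paper, and the side remarks ($\lambda$-independence via trace-class difference $(Q_\lambda^{-1}-Q_{\lambda'}^{-1})\Upsilon_{n\mapsto m}$; left/right equality via the trace-class commutator $[Q_\lambda^{-1},\Upsilon_{n\mapsto m}]$) are both correct and are in fact how the paper handles them (\cf Remark~\ref{prop:densiti_set_trace_02} area and \cite[Lemma A.2]{denittis-sandoval-21}).

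The genuine gap is exactly where you flag it: the evaluation of
$$\gamma_N(T_\omega)\;=\;\frac{1}{\log N}\sum_{k=0}^{N-1}\frac{\langle\psi_{m,k},M_{g,\omega}\psi_{n,k}\rangle}{m+k+1+\lambda}$$
for a general $g\in\bb{A}_\Omega$. Your mechanism---radial concentration of $|\psi_{m,k}|^2$ and a change of variables turning the sum into a spatial average, then Birkhoff---is heuristically sound but is not a proof: you would need a Birkhoff-type theorem for averages weighted by the (non-box) Laguerre/Ces\`aro kernel, uniformly in the parameters, and this is nontrivial. In particular, individual $|\psi_{m,k}|^2$ do concentrate on annuli, but what actually governs the limit is the \emph{sum} $\sum_{m=0}^{N-1}R_m^{(i,j)}(N\xi)$, not a single term, and showing that this kernel converges in a usable sense to $\delta_{i,j}\chi_{[0,1]}(\xi)$ is a real piece of analysis. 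Moreover, the off-diagonal ``Riemann--Lebesgue oscillation'' you invoke is not available in any off-the-shelf form here.

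What the paper does to close this gap is a two-step decoupling that you do not have. First (Lemma~\ref{lem:densiti_set_trace_0}, Lemma~\ref{lem:densiti_set_trace_1}, Proposition~\ref{lem:densiti_set_trace}) it proves the identity for $M_{g,\omega}$ replaced by a multiplication operator $P_\Sigma$ with $\Sigma\subset\R^2$ of finite density: in polar coordinates with the scaling $r=N\xi$, the matrix elements reduce to integrals of $\tilde\chi_\Sigma(N\xi,\theta)$ against $\s{G}_N^{(i,j)}(\xi)$, and the scaling-limit formula (Lemma~\ref{lemm_scal_lim}, Lemma~\ref{lemm_scal_lim2}, Corollary~\ref{rk:GDCT}) plus generalized dominated convergence give the limit $\delta_{i,j}\,\text{dens}[\Sigma]$, with the off-diagonal case killed by the same kernel estimate rather than a separate oscillation argument; a Stolz--Ces\`aro step transports this from the plain Ces\`aro means $\s{D}_N$ to the logarithmic ones $\s{W}_N$ and hence to the Dixmier trace. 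Second (Proposition~\ref{prop:densiti_set_trace_02}), a general continuous $g\geqslant0$ is approximated uniformly on $\Omega$ by finite step functions $\sum_r\delta r\,\chi_{\Omega^\delta_r}$; each level set pulls back under $\rr{t}_{-x}(\omega)$ to a subset of $\R^2$ whose density equals $\n{P}[\Omega^\delta_r]$ for a.e.\ $\omega$ by the \emph{ordinary} Birkhoff ergodic theorem, and the uniform approximation is controlled via the Calder\'on-norm bound $\|Q_\lambda^{-1}\Upsilon_{j\mapsto k}\|_{1^+}$. In short, the paper isolates the harmonic-analytic content (Laguerre scaling limit) from the ergodic content (Birkhoff for indicators) so that neither step needs a weighted ergodic theorem. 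To make your proposal complete you would need to supply either the paper's two-step reduction or a genuine weighted ergodic theorem for the Laguerre kernel; without one of these, the core step is a plausible heuristic, not a proof.
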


\medskip

The proof of Theorem \ref{theo:main_dix_eq_pot} is postponed to Section \ref{sect:dixmier_gen_elem} and is based on a series of technical results (Lemmas \ref{lem:densiti_set_trace_0}, \ref{lem:densiti_set_trace_1} and Propositions \ref{lem:densiti_set_trace}, \ref{prop:densiti_set_trace_02}) aimed to prove that the claim holds for the building block operators
$\Upsilon_{j\mapsto k} M_{g,\omega}$. Then, the final result is obtained by a continuity argument in the topology induced by the 
Calder\'on norm \eqref{eq:clad_norm}. It is also important to point out that a crucial ingredient of the proof is the \emph{scaling-limit formula} for certain combinations of the Laguerre polynomials described in Appendix \ref{app_scal_lim}. In particular the content of Lemma \ref{lemm_scal_lim} extends and completes a partial result initialy proved in \cite[Lemma 3]{hupfer-leschke-warzel-01}.

\medskip
 
By combining the content of  Theorem \ref{theo:main_dix_eq_pot}, the equality \eqref{eq:intro_001} and the definition \eqref{eq:def_idos} one infers the possibility of representing the IDOS of a large class of elements in  $\rr{M}_{B,\Omega}$ in terms of the Dixmier trace regularized by the operator 
 $Q_{\lambda}^{-1}$. This observation paves the way to extend the results obtained in \cite{belmonte-denittis-22} for unperturbed magnetic operators to magnetic operators perturbed by homogeneous potentials. In particular, it is expected to obtain an extension of the \emph{residue formula} \cite[Theorem 1.1]{belmonte-denittis-22} and 
 \emph{energy shell formula} \cite[Theorem 1.2]{belmonte-denittis-22} for the computation of the IDOS in the presence of perturbations. There is also a second aspect that deserves further investigation. The equality \eqref{eq:traXXX_III_XX_z} is established only for elements in the dense subspace $\rr{L}^1_{B,\Omega}$, but one is tempted to believe that the equality must occur on the whole domain
of definition $\rr{I}_{B,\Omega}$ of the trace $\tau_{\n{P}}$. However, this question is still unresolved even in the unperturbed case (see \eg \cite[Remark 2.29]{denittis-sandoval-00}).

\medskip

 It is worth spending a few words in this presentation for 
 a comparison with the stimulating paper \cite{azamov-mcdonald-sukochev-zanin-19}. 
Theorem \ref{theo:main_dix_eq_pot} can be interpreted as  the \virg{magnetic version} of \cite[Theorem 1.1]{azamov-mcdonald-sukochev-zanin-19} in the special case $d=2$. Aside from the obvious similarity, there is a relevant difference. The \virg{weight} introduced in \cite[Theorem 1.1]{azamov-mcdonald-sukochev-zanin-19} is the multiplication operator by the function $(1+|x|^2)^{-1}$ which is evidently not compact. 
On the contrary, the \virg{weight} $Q_{\lambda}^{-1}$ used in Theorem \ref{theo:main_dix_eq_pot} is a compact operator which is diagonalized on the basis of the  {generalized Laguerre functions} $\psi_{n,m}$ defined in \eqref{eq:lag_pol}.
The latter fact provides a significant computational advantage. 

\medskip

Finally the compactness of $Q_{\lambda}^{-1}$, along with the content of Theorem \ref{theo:main_dix_eq_pot} has relevance in the study of the topological interpretation of certain magnetic phenomena like the \emph{quantum Hall effect} (QHE). Let us start by claiming that any result that establishes an equality between the trace per unit volume and the Dixmier trace provides at the same time a bridge between the study of certain thermodynamics quantities, like the transport coefficients, and 
the topological invariants inside the Connes' quantized calculus. This is  
 exactly the spirit of the \emph{Kubo-Chern duality} proved in the seminal paper \cite{bellissard-elst-schulz-baldes-94},  which represents the most exhaustive explanation of the topological nature of QHE for discrete (random) systems. The strategy of \cite{bellissard-elst-schulz-baldes-94}
has been repeated in  \cite {denittis-sandoval-00,denittis-sandoval-21} 
obtaining a {Kubo-Chern duality} for continuous and \emph{unperturbed} magnetic systems based on the use of a (magnetic) Dirac operator with \emph{compact} resolvent proportional to $Q_{\lambda}^{-\frac{1}{2}}$.
The compactness of the resolvent of the Dirac operator is a new important ingredient in the construction of a non-commutative geometry for magnetic systems since it permits a natural discretization (and approximations) of the formulas for the calculation of the topological invariants. Theorem \ref{theo:main_dix_eq_pot} paves the way to use the same Dirac operator 
also in the case of magnetic systems perturbed by potentials. This question is currently a matter of investigation.

%
%
%
%
%
%

 \medskip
 
 \noindent
{\bf Acknowledgements.}
GD's research is supported by the grant \emph{Fondecyt Regular - 1230032}. This work would not have been possible without the stimulating ideas that J. Bellissard shared with GD way back in 2012 in a hotel of Sendai.

\section{The algebra of unperturbed magnetic operators}
\label{sec:unp_magn_alg}
In this section, we provide some more information about the algebra of 
 unperturbed magnetic operators introduced in Section \ref{sec:Intr0}. A more complete exposition is contained in \cite{denittis-sandoval-00}.
 
 \medskip

The \emph{(dual) magnetic translations}\footnote{The name magnetic translations is common in the condensed matter community since the works of Zak \cite{zak1,zak2}. Mathematically, they are also known as Weyl systems.}
on $L^2(\R^2)$ are the unitary operators 
defined by
\begin{equation}\label{eq:intro_000}
  \left(V(a)\psi\right)(x)\; =\; \expo{\ii\frac{x\wedge a}{2\ell^{2}}}\;\psi(x-a)\;, \qquad a \in \R^2\;,\quad \psi \in L^2(\R^2)\;
 \end{equation}
where $x\wedge a :=x_{1}a_{2} -
    x_{2}a_{1}$, for all $x=(x_1,x_2)$ and $a=(a_1,a_2)$.
A direct 
  computation shows that 
  \[
  \begin{aligned}
    V(a)V(b)\;&=\; \expo{\ii\frac{b\wedge a}{2\ell^{2}}}\; V(a+b)
  \end{aligned}
  \;,\qquad a,b \in \R^2\;.
\]
The parameter $\ell>0$ will be interpreted as the \emph{magnetic length} of the system and physically  it is proportional to $B^{-\frac{1}{2}}$, where $B>0$ is the strength of a constant magnetic field perpendicular to the plane $\R^2$. Therefore, $\ell \to \infty$ represents the \emph{singular} limit of a vanishing magnetic field 
\cite[Remark 2.2]{denittis-sandoval-00}.
Let $\bb{V}:=C^*(\{V(a)\;|\;a\in\R^2\})$ be the $C^*$-algebra generated by the unitaries $V(a)$. The \emph{magnetic von Neumann algebra} $\bb{M}_B$ (or magnetic algebra, for short) is by definition the {commutant} of $\bb{V}$  \cite[Proposition 2.18]{denittis-sandoval-00} \ie,
\begin{equation}\label{int_VNA}
\bb{M}_B\;:=\;\bb{V}'\;=\;\{A\in\bb{B}(L^2(\R^2))\;|\; AB-BA=0\;,\;\;\forall\; B\in\bb{V} \}\:.
\end{equation}
The name magnetic algebra is justified by the fact the \emph{Landau Hamiltonian} 
\begin{equation}\label{eq:intro_LH}
  H_{\rm L}\;: =\; \frac{1}{2} \left(-\ii \ell\frac{\partial}{\partial x_{1}}  - \frac{1}{2 \ell} x_{2}\right)^2\;+\;
  \frac{1}{2} \left(-\ii \ell\frac{\partial}{\partial x_{2}}  + \frac{1}{2 \ell} x_{1}\right)^2\;,
\end{equation}
is affiliated to $\bb{M}_B$ (\cf \cite[Section 2.3]{denittis-sandoval-00}).

\medskip

Consider  the Hilbert space $L^2(\R^2)$ and let $\{\psi_{n,m}\}\subset L^2(\R^2)$, with
$n,m\in\N_0$, be the orthonormal 
\emph{Laguerre
  basis} defined by 
\begin{equation}\label{eq:lag_pol}
\psi_{n,m}(x)\;:=\;\psi_0(x)\ \sqrt{\frac{n!}{m!}}\left[\frac{x_1-\ii x_2}{\ell\sqrt{2}}\right]^{m-n}L_{n}^{(m-n)}\left(\frac{|x|^2}{2\ell^2}\right)\; ,
\end{equation}
where
\begin{equation}\label{eq:lag_pol-00}
  L_n^{(\alpha)}\left(\xi\right)\;:=\;\sum_{j=0}^{n}\frac{(\alpha+n)(\alpha+n-1)\ldots(\alpha+j+1)}{j!(n-j)!}\left(-\xi\right)^j\;,\quad\alpha,\xi\in \R
\end{equation}
are the {generalized Laguerre polynomial} of degree $n$ (with the usual convention $0!=1$) and 
\begin{equation}\label{eq:herm1}
\psi_{0,0}(x)\;=\;\psi_{0}(x)\;:=\;\frac{1}{\sqrt{2\pi}\ell}\ \expo{-\frac{|x|^2}{4\ell^2}}\;.
\end{equation}

\medskip

The  magnetic  algebra $\bb{M}_B$ defined by \eqref{int_VNA} coincides with the enveloping von Neumann algebra of the $C^*$-algebra $\bb{C}_B$ defined by \eqref{eq:equal_C-Ups}, \ie
 $
\bb{M}_B=\bb{C}_B''
$ 
where on the right-hand side one has the bicommutant  of  $\bb{C}_B$ \cite[Proposition 2.18]{denittis-sandoval-00}. Therefore,  $\bb{M}_B$ is generated by the weak limits of the polynomials of the {transition operators}   defined by 
\eqref{eq:intro:basic_op}.

 \section{The algebra of potentials}\label{Sec:pot}
This section is devoted to the description of perturbative potentials from an algebraic perspective.  Following a commonly accepted point of view, the perturbations by \emph{homogeneous} potentials are introduced through an \emph{ergodic dynamical system} given by the quadruple $(\Omega,\R^2,\rr{t},\n{P})$, where:
\begin{itemize}
\item $\Omega\neq\emptyset$ is a compact and metrizable (hence separable) space\footnote{With this requirements $\Omega$ turns out to be a \emph{Polish} space and the pair $(\Omega,\mathtt{Bor}(\Omega))$ is a \emph{standard} Borel space.} 
  endowed with its  Borel $\sigma$-algebra $\mathtt{Bor}(\Omega)$;
  \vspace{1mm}
\item
  $\mathbb{P}$ is a (complete) probability Borel measure\footnote{\label{note010}With this assumption the triple $(\Omega,\mathtt{Bor}(\Omega),\n{P})$ is a \emph{standard probability} space.
 From  \cite[Th{\'e}or{\`e}me 4-3]{delarue-93} it turns out  that the associated Hilbert space $L^2(\Omega,\n{P})$ is separable.}, namely  $\mathbb{P}(\Omega)=1$;
  \vspace{1mm}
\item
  The support of $\mathbb{P}$ coincides with the whole space $\Omega$\footnote{This means that any open set $\Sigma\subset\Omega$ has a positive measure $\n{P}(\Sigma)>0$.};
  \vspace{1mm}
\item
  $\rr{t}:\R^2\to\mathrm{Homeo}(\Omega)$ is a representation of the group $\R^2$ by homeomorphisms of the space $\Omega$ such that the group-action $\rr{t}:\R^2\times\Omega\to\Omega$ given by $(a,\omega)\mapsto\rr{t}_a(\omega)$ is jointly continuous;
  \vspace{1mm}
\item The measure $\n{P}$ is invariant by $\rr{t}$, \ie $\n{P}(\Sigma)=\n{P}(\rr{t}_a(\Sigma))$ for all $a\in\R^2$ and $\Sigma\in\mathtt{Bor}(\Omega)$;
  \vspace{1mm}
\item The measure $\n{P}$ is ergodic, \ie if $\Sigma\in\mathtt{Bor}(\Omega)$ satisfies $\rr{t}_a(\Sigma)=\Sigma$ for all $a\in\R^2$, then
  $\n{P}(\Sigma)=1$ or $\n{P}(\Sigma)=0$.
 \end{itemize}

\medskip

In the following $\Omega$ will be called the \emph{hull of potentials.}
Let 
 $\bb{A}_\Omega:={C}(\Omega)$ be the   $C^\ast$-algebra of the continuous complex-valued functions on $\Omega$ 
endowed with the usual norm of the uniform convergence $\|\ \|_\infty$. This is a commutative $C^\ast$-algebra which is also separable since $\Omega$ is metrizable \cite[Theorem 2.4]{chou-12}. 
The fact that the group action $(a,\omega)\mapsto\rr{t}_a(\omega)$ is \emph{jointly} continuous implies the following relevant fact:
\begin{lemma}[{\cite[Lemma 2.5]{williams-07}}]\label{lemma:automorph}
For each $a\in\R^2$ the map $T_a:\bb{A}_\Omega\to\bb{A}_\Omega$ defined by
$T_a(g)(\omega):=g(\rr{t}_{-a}(\omega))$, for all  $g\in\bb{A}_\Omega$, is an automorphism of  $\bb{A}_\Omega$. 
Moreover, the map $T:\R^2\to\text{\upshape Aut}(\bb{A}_\Omega)$, which assigns to each vector $a\in\R^2$ the related automorphism $T_a$, defines an action of $\R^2$ on $\bb{A}_\Omega$ which is strongly continuous. 
\end{lemma}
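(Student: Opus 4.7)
The plan is to handle the algebraic and the topological parts separately. First I would verify that each $T_a$ is a $\ast$-automorphism by four routine checks. Since $\rr{t}_{-a}\colon\Omega\to\Omega$ is a homeomorphism, the pullback $g\mapsto g\circ \rr{t}_{-a}$ sends $C(\Omega)$ into $C(\Omega)$, and preserves pointwise sums, scalar multiplication, complex conjugation, and pointwise products, so $T_a$ is a unital $\ast$-homomorphism of $\bb{A}_\Omega$. It is isometric for $\|\cdot\|_\infty$ because $\rr{t}_{-a}$ is a bijection of $\Omega$, and the relation $\rr{t}_{-a}\circ \rr{t}_{a}=\mathrm{id}_\Omega$ gives $T_a\circ T_{-a}=T_0=\mathrm{id}_{\bb{A}_\Omega}$, hence $T_a\in\text{\upshape Aut}(\bb{A}_\Omega)$ with inverse $T_{-a}$.

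Next I would check that $T\colon\R^2\to\text{\upshape Aut}(\bb{A}_\Omega)$ is a group homomorphism: since $\rr{t}$ is a representation of the group $\R^2$, one has $\rr{t}_{-a}\circ\rr{t}_{-b}=\rr{t}_{-(a+b)}$, and this yields $T_a\circ T_b=T_{a+b}$ at once. Together with the previous step this shows that $T$ is an action of $\R^2$ on $\bb{A}_\Omega$ by $\ast$-automorphisms.

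The nontrivial part is the strong continuity. Fix $g\in \bb{A}_\Omega$ and $a_0\in\R^2$; the claim is $\|T_a(g)-T_{a_0}(g)\|_\infty\to 0$ as $a\to a_0$. I would argue by contradiction using compactness of $\Omega$: assume there exist $\varepsilon>0$, a sequence $a_n\to a_0$ in $\R^2$ and points $\omega_n\in\Omega$ with
\[
\bigl|g(\rr{t}_{-a_n}(\omega_n))-g(\rr{t}_{-a_0}(\omega_n))\bigr|\;\geqslant\; \varepsilon\;.
\]
By compactness of $\Omega$ we may extract a subsequence $\omega_n\to \omega_\infty$. The \emph{joint} continuity of the action $\rr{t}\colon\R^2\times\Omega\to\Omega$ then gives $\rr{t}_{-a_n}(\omega_n)\to \rr{t}_{-a_0}(\omega_\infty)$, while $\rr{t}_{-a_0}(\omega_n)\to \rr{t}_{-a_0}(\omega_\infty)$ by continuity of $\rr{t}_{-a_0}$. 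Applying the continuity of $g$ to both sequences, the left-hand side above tends to $0$, contradicting the lower bound $\varepsilon$. Hence $T$ is strongly continuous.

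The main obstacle is precisely the last step: passing from the pointwise continuity of $\omega\mapsto g(\rr{t}_{-a}(\omega))$ in $a$, which is trivial, to uniform convergence on $\Omega$ requires both the compactness of $\Omega$ and the \emph{joint} (not merely separate) continuity of $\rr{t}$; the contradiction/subsequence argument is the cleanest way to combine these two hypotheses, and it is here that the specific standing assumptions on $(\Omega,\R^2,\rr{t},\n{P})$ stated at the beginning of Section \ref{Sec:pot} are essential.
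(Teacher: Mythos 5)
Your proof is correct, and since the paper does not give its own proof of this lemma (it simply cites Williams), there is no alternative argument in the paper to compare against. Your algebraic steps for showing each $T_a$ is a $\ast$-automorphism with inverse $T_{-a}$, and that $a\mapsto T_a$ is a group homomorphism via $\rr{t}_{-a}\circ\rr{t}_{-b}=\rr{t}_{-(a+b)}$, are exactly the routine verifications; your contradiction-by-subsequence argument for strong continuity correctly combines compactness and metrizability of $\Omega$ (to extract a convergent subsequence) with the joint continuity of the action, and is equivalent to the more common formulation via uniform continuity of $(a,\omega)\mapsto g(\rr{t}_{-a}(\omega))$ on $\overline{B_r(a_0)}\times\Omega$. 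The only point worth flagging is that Williams proves the statement for a general locally compact Hausdorff $\Omega$, where one must use nets and truncate outside a compact set since $C_0(X)$ replaces $C(\Omega)$; your sequential argument relies on the paper's standing assumption that $\Omega$ is compact and metrizable, which you correctly invoked.
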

According to the standard terminology, the triple  $(\bb{A}_\Omega,\R^2,T)$ defines a  \emph{separable $C^\ast$-dynamical system} over the hull $\Omega$  \cite[Sect. 7.4]{pedersen-79}. The action of $\R^2$ endows $\bb{A}_\Omega$  also of a differential structure and one can define a dense subalgebra $\text{Diff}(\bb{A}_\Omega)\subset \bb{A}_\Omega$ made of 
\emph{differentiable} elements. This fact is discussed in detail in Appendix \ref{sec_dif_pot} and is clarified here for possible applications. 
 
 \medskip
 
Let us now introduce a family of representations of $\bb{A}_\Omega$ on the Hilbert space $L^2(\R^2)$. For every $\omega\in\Omega$, let $\pi_\omega:\bb{A}_\Omega\to \bb{B}(L^2(\R^2))$ be the map defined by $\pi_\omega:g\mapsto M_{g,\omega}$ where  $M_{g,\omega}$ is the multiplication operator given by
$$
\big(M_{g,\omega}\varphi\big)(x)\;:=\;g\big(\rr{t}_{x}(\omega)\big)\varphi(x)\;,\qquad \varphi\in L^2(\R^2)\;.
$$ 
The map  $\pi_\omega$ is evidently a homomorphism of $C^*$-algebras, \ie a representation. 

\medskip

From its very definition one hast that
$$
\|\pi_\omega(g)\|\;=\;\|M_{g,\omega}\|\;=\;\sup_{x\in{\R^2}}\big|g\big(\rr{t}_{x}(\omega)\big)\big|
\;=\;\sup_{\omega'\in{\rm Orb}(\omega)}\big|g\big(\omega'\big)\big|\;\leqslant\;\|g\|_\infty
$$
where ${\rm Orb}(\omega):=\{\rr{t}_{x}(\omega)\;|\; x\in\R^2\}$ is the \emph{orbit} of $\omega$. 

\medskip

Let us recall that a  point $\omega\in\Omega$ it is called \emph{transitive} if it has dense orbit, namely if
$\overline{{\rm Orb}(\omega)}=\Omega$. The  the dynamical system $(\Omega,\R^2, \rr{t})$ is called \emph{minimal} if every point $\omega\in\Omega$ is transitive.
\begin{proposition}\label{prop:faith1}
If $\omega\in\Omega$ is transitive then representation $\pi_\omega$ is faithful. If $(\Omega,\R^2, \rr{t})$ is {minimal}, then every representation $\pi_\omega$ is faithful.
\end{proposition}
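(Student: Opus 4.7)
The plan is to exploit the explicit norm formula for $\pi_\omega$ already recorded in the excerpt, namely
\[
\|\pi_\omega(g)\|\;=\;\sup_{\omega'\in{\rm Orb}(\omega)}\bigl|g(\omega')\bigr|\;,
\]
and promote it to an \emph{isometry} under the transitivity hypothesis. Since a $\ast$-homomorphism between $C^\ast$-algebras is faithful whenever it is isometric (indeed, injectivity alone suffices for a $\ast$-homomorphism to be isometric, but here the reverse deduction is the convenient one), this will complete the argument.

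First I would assume $\omega\in\Omega$ is transitive, i.e.\ $\overline{{\rm Orb}(\omega)}=\Omega$. Given $g\in\bb{A}_\Omega=C(\Omega)$, continuity of $g$ together with the density of ${\rm Orb}(\omega)$ in $\Omega$ yields
\[
\sup_{\omega'\in{\rm Orb}(\omega)}\bigl|g(\omega')\bigr|\;=\;\sup_{\omega'\in\overline{{\rm Orb}(\omega)}}\bigl|g(\omega')\bigr|\;=\;\sup_{\omega'\in\Omega}\bigl|g(\omega')\bigr|\;=\;\|g\|_\infty\;.
\]
Combined with the norm formula above this gives $\|\pi_\omega(g)\|=\|g\|_\infty$ for every $g\in\bb{A}_\Omega$, so $\pi_\omega$ is an isometric $\ast$-homomorphism. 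In particular $\ker\pi_\omega=\{0\}$, which is exactly the statement that $\pi_\omega$ is faithful.

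For the second assertion I would simply invoke the definition of minimality: if $(\Omega,\R^2,\rr{t})$ is minimal then every $\omega\in\Omega$ is transitive, and the first part applies pointwise to give faithfulness of each $\pi_\omega$.

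I do not anticipate any real obstacle here, since the norm identity has been established just before the statement and the remaining content is continuity combined with density. The only small subtlety worth noting explicitly in the write-up is that one does not need to invoke the general theorem \virg{injective $\ast$-homomorphism $\Rightarrow$ isometric} — the isometric property is obtained directly from the explicit formula, and faithfulness follows as a corollary.
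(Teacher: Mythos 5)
Your proposal is correct and follows essentially the same approach as the paper: both use the established norm formula $\|\pi_\omega(g)\|=\sup_{\omega'\in{\rm Orb}(\omega)}|g(\omega')|$, upgrade the supremum to $\|g\|_\infty$ by continuity of $g$ and density of the orbit, and deduce faithfulness from isometry (the paper cites the equivalence \virg{faithful iff isometric} from Bratteli–Robinson, whereas you observe directly that isometry implies trivial kernel, which is an equivalent and equally short route).
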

\proof
Let us recall that for a $C^*$-algebra a representation is faithful if and only if it is isometric \cite[Proposition 2.3.3]{bratteli-robinson-87}. 
Since
\[
\|\pi_\omega(g)\|\;=\;\sup_{\omega'\in{\rm Orb}(\omega)}\big|g\big(\omega'\big)\big|\;=\;\sup_{\omega'\in\overline{{\rm Orb}(\omega)}}\big|g\big(\omega'\big)\big|
\]
in view of the continuity of $g$, one infers that the transitivity of $\omega$ implies the isometry of $\pi_\omega$. The claim about the minimality follows immediately.
\qed

\medskip

The $C^*$-algebra
$$
\pi_\omega(\bb{A}_\Omega)\;\subset\;\bb{B}\big(L^2(\R^2)\big)
$$
 will be called the \emph{algbera of $\omega$-potentials}. Let $M_{g,\omega}\in \pi_\omega(\bb{A}_\Omega)$ and $V(a)$ the magnetic translation   \eqref{eq:intro_000}. The one can check that
 $V(a)^*M_{g,\omega}V(a)=M_{g,\rr{t}_a(\omega)}$, or
 $$
 V(a)^*\;\pi_\omega(\bb{A}_\Omega)\;V(a)\;=\;
 \pi_{\rr{t}_a(\omega)}(\bb{A}_\Omega)\;,\qquad \forall\; a\in\R^2\;, \quad
 \forall\;\omega\in\Omega\;.
 $$
 This formula is known as \emph{covariance relation}.

 \medskip

 Let us consider the direct integral~
 $ \s{H}_\Omega$
defined by equation \eqref{eq:dir_int<_hil}.
 In view of Note \ref{note010} it follows that $\s{H}_\Omega$ is separable.
The algebra of bounded operators on $\s{H}_\Omega$ will be denoted with $\bb{B}(\s{H}_\Omega)$. However, 
in order to benefit from the direct integral structure of $\s{H}_\Omega$, it is  more convenient to focus on the sub-algebra of  \emph{decomposable} operators $\bb{D}(\s{H}_\Omega)\subset \bb{B}(\s{H}_\Omega)$~\cite[Part II, Chapter 2]{dixmier-81}. An element $A\in\bb{D}(\s{H}_\Omega)$ can be thought of as a family $A:=\{A_\omega\}_{\omega\in\Omega}$ so that:
\begin{itemize}
\item[i)] $A_\omega\in \bb{B}(L^2(\R^2))$ for almost all $\omega\in\Omega$ and the maps 
$$
\Omega\;\ni\;\omega\;\longmapsto\; \langle\phi, A_\omega\varphi \rangle_{L^2}\;\in\;\C
$$ are measurable for every choice of $\phi,\varphi\in L^2(\R^2)$;  
  \vspace{1mm}
\item[ii)] the family $A:=\{A_\omega\}_{\omega\in\Omega}$ is essentially bounded in the sense
  that 
  $$
  \esssup_{\omega\in\Omega}\lVert A_\omega\rVert\;<\;\infty\;.
  $$
\end{itemize}
The set $\bb{D}(\s{H}_\Omega)$ is a von Neumann sub-algebra of $\bb{B}(\s{H}_\Omega)$ and the norm of an element $A\in\bb{D}(\s{H}_\Omega)$ is given by
\begin{equation}\label{eq:rand_op_norm_direct}
  \lVert A \rVert\;:=\; \esssup_{\omega\in\Omega}\lVert A_\omega\rVert\;.
\end{equation}
Decomposable operators act on vectors of  $\s{H}_\Omega$ fiberwise. More precisely consider
 a decomposable operator $A:=\{A_\omega\}_{\omega\in\Omega}$ and  a vector
$\hat{\varphi}:=\{\varphi_\omega\}_{\omega\in\Omega}\in\s{H}_\Omega$, then
\begin{equation}\label{eq:rand_op}
  A\hat{\varphi}\;=\;\{(A\hat{\varphi})_\omega\}_{\omega\in\Omega}\;=\;\{A_\omega\varphi_\omega\}_{\omega\in\Omega}\;,
\end{equation}
and this justifies the notation
$$
A\;=\;\int^\oplus_{\Omega}\dd\n{P}(\omega)\;A_\omega\;.
$$
Let $A,B\in \bb{D}(\s{H}_\Omega)$ be two decomposable operators.
It is worth recalling that the condition $A=B$ is equivalent to $A_\omega=B_\omega$ for almost all $\omega\in\Omega$. 

\medskip

Let $g\in \bb{A}_\Omega$ and consider the family $M_g:=\{\pi_\omega(g)\}_{\omega\in\Omega}$. It follows that $M_g$ is a decomposable operator over  $\s{H}_\Omega$ such that $\|M_g\|\leqslant \|g\|_\infty$. Moreover the map $\pi:\bb{A}_\Omega\to\bb{D}(\s{H}_\Omega)$ defined by
$$
\pi(g)\;:=\;M_g\;=\;\int^\oplus_{\Omega}\dd\n{P}(\omega)\;\pi_\omega(g)\;=\;\int^\oplus_{\Omega}\dd\n{P}(\omega)\;M_{g,\omega}
$$
provides a representation of $\bb{A}_\Omega$ over $\s{H}_\Omega$ \cite[Section 8.1]{dixmier-77}.
\begin{proposition}\label{prop:rep_pair}
The representation $\pi$  is faithful and non-degenerate.
\end{proposition}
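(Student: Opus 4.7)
The plan is to dispatch non-degeneracy first as an immediate consequence of the compactness of $\Omega$, and then reduce faithfulness to Proposition \ref{prop:faith1} via a standard ergodic argument.

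For non-degeneracy, I would use that $\Omega$ is compact, so $\bb{A}_\Omega=C(\Omega)$ is unital, with unit $\mathbf{1}_\Omega$. Evaluating $\pi_\omega(\mathbf{1}_\Omega)$ at any $\omega$ one gets the multiplication operator by the constant function $1$, which is $\Id_{L^2(\R^2)}$. Hence $\pi(\mathbf{1}_\Omega)=\int^\oplus_\Omega\dd\n{P}(\omega)\,\Id_{L^2(\R^2)}=\Id_{\s{H}_\Omega}$, which trivially implies that $\pi(\bb{A}_\Omega)\s{H}_\Omega$ is (all of) $\s{H}_\Omega$, hence dense.

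For faithfulness, the strategy is to invoke Proposition \ref{prop:faith1}. Since $\pi$ is a $*$-homomorphism between $C^*$-algebras, it suffices to show it is injective. Assume $\pi(g)=M_g=0$ in $\bb{D}(\s{H}_\Omega)$. By the identification of decomposable operators \emph{modulo} equality almost everywhere, this means $M_{g,\omega}=0$ for $\n{P}$-almost every $\omega\in\Omega$, and in particular $\pi_\omega(g)=0$ for $\omega$ in a set $\Omega_0\subseteq\Omega$ of full $\n{P}$-measure. By Proposition \ref{prop:faith1}, if $\Omega_0$ contains at least one transitive point then $g=0$; so I only have to verify that $\n{P}$-almost every $\omega$ is transitive.

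The key ergodic step proceeds as follows. Since $\Omega$ is compact metric it is second countable, so fix a countable basis $\{U_n\}_{n\in\N}$ of its topology, composed of non-empty open sets. For each $n$ define
\[
\Omega_n\;:=\;\bigl\{\omega\in\Omega\;|\;{\rm Orb}(\omega)\cap U_n\neq\emptyset\bigr\}\;=\;\bigcup_{x\in\R^2}\rr{t}_{-x}(U_n)\;.
\]
Each $\Omega_n$ is open, translation-invariant, and contains the non-empty open set $U_n$, which has positive $\n{P}$-measure because $\n{P}$ has full support. By ergodicity each such invariant set must satisfy $\n{P}(\Omega_n)=1$. Therefore $\Omega_{\mathrm{trans}}:=\bigcap_{n\in\N}\Omega_n$ still has full measure, and its elements are precisely the points with dense orbit in $\Omega$. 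Intersecting $\Omega_{\mathrm{trans}}$ with the full-measure set $\Omega_0$ above yields a transitive $\omega$ with $\pi_\omega(g)=0$, and Proposition \ref{prop:faith1} then gives $g=0$, proving injectivity and hence faithfulness.

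The only delicate point is the ergodic argument identifying $\Omega_{\mathrm{trans}}$ as a full-measure set; everything else is either immediate from the unitality of $C(\Omega)$ or a direct invocation of Proposition \ref{prop:faith1}. This argument works uniformly and does not rely on the minimality of the dynamical system, only on ergodicity, full-support and second countability of $\Omega$.
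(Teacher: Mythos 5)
Your proof is correct, but it takes a genuinely different route from the paper's. The paper proves faithfulness directly from the full-support hypothesis and continuity of $g$, without any ergodicity: if $g(\omega_0)\neq 0$, continuity gives an open $\Sigma$ with positive $\n{P}$-measure on which $g\neq 0$; but for every $\omega\in\Sigma$ one has $\omega\in{\rm Orb}(\omega)$, so $g$ does not vanish on ${\rm Orb}(\omega)$ and $M_{g,\omega}\neq 0$, contradicting $M_{g,\omega}=0$ for $\n{P}$-a.e.\ $\omega$. Your argument instead reduces to Proposition \ref{prop:faith1} by first establishing (via the standard argument over a countable basis, using ergodicity, full support, and second countability) that $\n{P}$-almost every point is transitive, and then picking a transitive point in the full-measure set where $\pi_\omega(g)=0$. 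Both arguments are valid. Yours is conceptually tidy and reuses Proposition \ref{prop:faith1}, and the "a.e.\ transitivity" lemma is a reusable fact; but it invokes strictly more structure (ergodicity and second countability) than is actually needed, whereas the paper's proof shows that full support alone suffices for faithfulness of $\pi$. The non-degeneracy argument via unitality of $C(\Omega)$ is the same in both.
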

\proof
Assume that $M_g=0$. Therefore, $M_{g,\omega}=0$  for $\n{P}$-almost all $\omega$ and in turn
\[
 \|M_{g,\omega}\|
\;=\;\sup_{\omega'\in{\rm Orb}(\omega)}\big|g\big(\omega'\big)\big|\;=\;0
\]
Therefore, the restriction of $g$ to ${\rm Orb}(\omega)$ vanishes   for $\n{P}$-almost all $\omega$. However, the continuity of $g$ implies that if $g(\omega_0)\neq 0$, then there is an open neighborhood $\Sigma$ of $\omega_0$ such that $g(\omega)\neq 0$, for every $\omega\in \Sigma$. This means that the restriction of $g$ to ${\rm Orb}(\omega)$ does not vanish identically for every $\omega\in \Sigma$. Since $\n{P}(\Sigma)>0$, we get a contradiction. Hence $g=0$ on $\Omega$, \ie $\pi$ is faithful. Moreover, $C(\Omega)$ is unital, thus $\pi$ is obviously non-degenerate.
\qed 

\medskip

We will refer to 
\begin{equation}\label{eq:full_pot}
\bb{P}_\Omega\;:=\;\pi(\bb{A}_\Omega)\;\subset\;\bb{D}(\s{H}_\Omega)\;\subset\;\bb{B}\big(\s{H}_\Omega\big)
\end{equation}
as
the \emph{(full) algebra of potentials}.  

\medskip

Let us end this section with an ingredient that will be useful in Sections \ref{sec:cross_prod} and \ref{sec:tr_u_v}. By mimicking equation \eqref{eq:intro_000} we can introduce the \emph{(direct) magnetic translations} given by 
\begin{equation}\label{eq;dir_MT}
  \left(U(a)\psi\right)(x)\; =\; \expo{-\ii\frac{x\wedge a}{2\ell^{2}}}\;\psi(x-a)\;, \qquad a \in \R^2\;,\quad \psi \in L^2(\R^2)\;.
 \end{equation}
A direct computation shows that 
 \[
  \begin{aligned}
    U(a)U(b)\;&=\; \expo{\ii\frac{a\wedge b}{2\ell^{2}}}\; U(a+b)
  \end{aligned}
  \;,\qquad a,b \in \R^2\;
\]
Moreover, one has that  $U(a)V(b)=V(b)U(a)$ for every $a,b\in\R^2$, namely the $U(a)$'s and the $V(b)$'s form two families of mutually commuting unitary operators. 
The operators $U(a)$'s can be lifted to operators on
the direct integral $\s{H}_\Omega$. For every $a\in\R^2$ let $\rr{U}(a)$ be the
operator defined by
\begin{equation}\label{eq:magn_tral_dir}
 \rr{U}(a)\hat{\varphi}\;=\;\{(\rr{U}(a)\hat{\varphi})_\omega\}_{\omega\in\Omega}\;=\;\{U(a)\varphi_{\omega}\}_{\omega\in\Omega}
\end{equation}
for every  $\hat{\varphi}:=\{\varphi_\omega\}_{\omega\in\Omega}\in\s{H}_\Omega$. From the definition,
it results that the operators $\rr{U}(a)$'s   preserve the fibers of
$\s{H}_\Omega$, \ie
they are decomposable. It is a matter of a direct check to prove that
 the $\rr{U}(a)$'s are
still unitary and meet  the composition rule
\begin{equation}\label{eq:magn_tral2_dir}
 \rr{U}(a)\rr{U}(b)\;=\;\expo{\ii\frac{a\wedge b}{2\ell^2}}\;\rr{U}(a+b)\;,\qquad\quad a,b\in\R^2\,.
\end{equation}
In other words, they provide a strongly continuous, projective unitary representation of
the group $\R^2$ on the Hilbert space $\s{H}_\Omega$. 

\medskip

If $g\in\bb{A}_\Omega$ then direct check shows that
\[
U(a)^*\;\pi_\omega(g)\;U(a)\;=\;
 \pi_{\rr{t}_a(\omega)}(g)\;,\qquad \forall\; a\in\R^2\;, \quad
 \forall\omega\in\Omega\;.
\] 
This means that also the  ${U}(a)$'s provide a covariance relation for the elements of $\bb{A}_\Omega$. 
Since $\pi_{\rr{t}_a(\omega)}(g)=\pi_\omega(T_a g)$, 
with  $T_a:\bb{A}_\Omega\to\bb{A}_\Omega$ the map  described in Lemma \ref{lemma:automorph}, one obtains that $U(a)^*\pi_\omega(g)U(a)=\pi_{\omega}(T_a g)$ for every $a\in\R^2$, $\omega\in\Omega$ and $g\in \bb{A}_\Omega$. In terms of the direct integral representation $\pi$,
and of the lifted  magnetic translations $\rr{U}(a)$ the latter equation reads
\[
\rr{U}(a)^*\pi(g)\rr{U}(a)\;=\;\pi(T_a g)\;,\qquad \forall\; a\in\R^2\;, \quad
 \forall\; g\in \bb{A}_\Omega\;.
\]

\begin{remark}[Representing pair]\label{rk_rep_pair}
 In view of 
Proposition \ref{prop:rep_pair} the representation $\pi$ is faithful and non-degenerate.
Moreover, the mapping $a\mapsto \rr{U}(a)$ is a strongly continuous (hence weakly measurable) projective unitary representation of  $\R^2$ on $\s{H}_\Omega$. In summary the 
pair $(\pi,\rr{U})$ given by the
representation $\pi$ along with  (direct) magnetic translations $\R^2\ni a\mapsto \rr{U}(a)$ is a \emph{(twisted) representing pair} for 
the $C^\ast$-dynamical system $(\bb{A}_\Omega,\R^2,T)$  according to the definition  \cite[p. 511]{busby-smith-70}.  
\hfill $\blacktriangleleft$
\end{remark}

\section{The algebra of perturbed magnetic operators}\label{sec:pert_magn}
In this section, we combine the two algebras described in Sections \ref{sec:unp_magn_alg} and \ref{Sec:pot} to obtain a larger algebra 
suitable to the study of perturbed magnetic systems.

\medskip

Let us start by introducing the algebra of magnetic operators perturbed by  
$\omega$-potentials. By taking inspiration from \cite{georgescu-iftimovici-02,georgescu-iftimovici-06}, we will introduce the linear space
\[
(\bb{C}_B\cdot\bb{A}_\Omega)_\omega\;:=\;\left.\left\{\sum_{n=1}^NA_n\pi_\omega(g_n)\;\right|\; A_n\in\bb{C}_B\;, g_n\in\bb{A}_\Omega\right\}
\]
of finite linear combinations of the products of magnetic operators 
$A_n$ and $\omega$-potentials $\pi_\omega(g_n)=M_{g,\omega}$.
Evidently, $(\bb{C}_B\cdot\bb{A}_\Omega)_\omega\subset \bb{B}(L^2(\R^2))$ and we will denote its closure by
\[
[[\bb{C}_B\cdot\bb{A}_\Omega]]_\omega\;:=\;\overline{(\bb{C}_B\cdot\bb{A}_\Omega)_\omega}^{\;\|\;\|}\;\subset\;\bb{B}(L^2(\R^2))
\] 
Since $\bb{A}_\Omega$ is unital one has that $\bb{C}_B\subset [[\bb{C}_B\cdot\bb{A}_\Omega]]_\omega$.
Let us introduce also the minimal $C^*$-algebra generated in $\bb{B}(L^2(\R^2))$ by the products $AM_{g,\omega}$ (and their adjoints  $M_{\overline{g},\omega}A^*$), denoted here by
\[
 \rr{A}_{B,\omega}\;:=\;C^*(\{AM_{g,\omega}\;|\;A\in\bb{C}_B\;, M_{g,\omega}\in\pi_\omega(\bb{A}_\Omega)\})\;\subset\; \bb{B}(L^2(\R^2))\;.
\]
At first glance
\[
[[\bb{C}_B\cdot\bb{A}_\Omega]]_\omega\;\subseteq\;\rr{A}_{B,\omega}\;\subset\;\bb{B}(L^2(\R^2))\;,
\]
but, as anticipated in Section \ref{sec:Intr0}, the first inclusion is indeed an equality.

 \proof[Proof of Theorem \ref{teo:ident1}]
By definition $\rr{A}_{B,\omega}$ is  the smallest $C^*$-algebra containing the products $AM_{g,\omega}$, and in turn $(\bb{C}_B\cdot\bb{A}_\Omega)_\omega$. Therefore, if one can prove that $[[\bb{C}_B\cdot\bb{A}_\Omega]]_\omega$ is a $C^*$-algebra then the equality \eqref{eq:caract1} follows immediately. The latter claim follows by looking at the $C^*$-algebra $\bb{A}_\Omega\rtimes_B\R^2$ introduced in Section \ref{sec:cross_prod}. One has that  every $\omega\in\Omega$ defines a representation $\pi_\omega:\bb{A}_\Omega\rtimes_B\R^2\to \bb{B}(L^2(\R^2))$. Moreover, Proposition \ref{prop:com_teo1} assures that there is a dense subset $\s{D}\subset\bb{A}_\Omega\rtimes_B\R^2$
such that 
\begin{equation}\label{con_theo}
\pi_\omega(\s{D})\;\subset\; (\bb{C}_B\cdot\bb{A}_\Omega)_\omega\;\quad\text{and}\quad\;\overline{\pi_\omega(\s{D})}^{\;\|\;\|}\;=\;[[\bb{C}_B\cdot\bb{A}_\Omega]]_\omega\;.
\end{equation}
 In view of the continuity of $\pi_\omega$ (being a $C^*$-algebra representation) one then infers that 
$\pi_\omega(\bb{A}_\Omega\rtimes_B\R^2)=[[\bb{C}_B\cdot\bb{A}_\Omega]]_\omega$, proving that 
$[[\bb{C}_B\cdot\bb{A}_\Omega]]_\omega$ is indeed a $C^*$-algebra.
Finally, equality \eqref{eq:caract1} can also be read as $\rr{A}_{B,\omega}=\pi_\omega(\bb{A}_\Omega\rtimes_B\R^2)$.
 \qed
 
 \medskip
 
We will refer to the $C^*$-algebra $\rr{A}_{B,\omega}$ as the algebra of \emph{$\omega$-perturbed magnetic operators}. We will justify this name below by considering some specific example, but to do this we will make use of the following result \cite[Proposition 2.2.7]{bratteli-robinson-87}:
\begin{lemma}\label{lemma:inv_sub}
Let $\bb{C}\subset \bb{B}$ two unital $C^*$-algebras, and $A\in\bb{C}$. If $A$ is invertible in $\bb{B}$ then $A^{-1}\in \bb{C}$.
\end{lemma}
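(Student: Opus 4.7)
The plan is to invoke the spectral-invariance property for $C^*$-subalgebras, for which the key input is that the spectrum of a self-adjoint element does not change under passage to a larger ambient $C^*$-algebra.

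First I would reduce the general claim to the self-adjoint case by setting $T:=A^*A\in\bb{C}$. Since $A$ is invertible in $\bb{B}$, so is $T$, with inverse $A^{-1}(A^*)^{-1}\in\bb{B}$. Assuming for the moment that $T^{-1}\in\bb{C}$, the element $T^{-1}A^*\in\bb{C}$ is a two-sided inverse of $A$: a left inverse because $T^{-1}A^*\cdot A=T^{-1}T=\ttb{1}$, and a right inverse because $A\cdot T^{-1}A^*=AA^{-1}(A^*)^{-1}A^*=\ttb{1}$. Hence $A^{-1}=T^{-1}A^*\in\bb{C}$.

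It therefore suffices to prove the self-adjoint version: for $T=T^*\in\bb{C}$ invertible in $\bb{B}$, one has $T^{-1}\in\bb{C}$. Denote by $\sigma_{\bb{C}}(T)$ and $\sigma_{\bb{B}}(T)$ the spectra of $T$ computed in $\bb{C}$ and in $\bb{B}$, respectively. Since the two algebras share the unit, every element invertible in $\bb{C}$ is invertible in $\bb{B}$, whence $\sigma_{\bb{B}}(T)\subseteq\sigma_{\bb{C}}(T)$. Moreover, a standard Banach-algebra argument exploiting the continuity of inversion on the group of invertible elements yields $\partial\sigma_{\bb{C}}(T)\subseteq\sigma_{\bb{B}}(T)$. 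Now comes the decisive use of self-adjointness: both spectra are contained in $\R$, and any compact subset of $\R$, regarded as a subset of $\C$, has empty interior, so every one of its points is a boundary point. Consequently $\sigma_{\bb{C}}(T)=\partial\sigma_{\bb{C}}(T)\subseteq\sigma_{\bb{B}}(T)$, which combined with the opposite inclusion gives $\sigma_{\bb{C}}(T)=\sigma_{\bb{B}}(T)$. Since $T$ is invertible in $\bb{B}$, we have $0\notin\sigma_{\bb{B}}(T)=\sigma_{\bb{C}}(T)$, and therefore $T^{-1}$ exists already in $\bb{C}$.

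No essential obstacle is expected; the only delicate point is the boundary inclusion $\partial\sigma_{\bb{C}}(T)\subseteq\sigma_{\bb{B}}(T)$, which is a classical Banach-algebra fact, and the statement as a whole is a well-known textbook result (as witnessed by the cited reference).
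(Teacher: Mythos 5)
The paper does not prove this lemma at all; it simply quotes it as \cite[Proposition~2.2.7]{bratteli-robinson-87}. Your proof is correct and is the standard argument: reduce to the self-adjoint case via $T=A^*A$ (and observe that once $T^{-1}\in\bb{C}$, the element $T^{-1}A^*\in\bb{C}$ is forced to equal $A^{-1}$), then invoke spectral permanence, namely $\sigma_{\bb{B}}(T)\subseteq\sigma_{\bb{C}}(T)$ together with $\partial\sigma_{\bb{C}}(T)\subseteq\sigma_{\bb{B}}(T)$, and conclude equality because the spectrum of a self-adjoint element is real and therefore consists entirely of boundary points. The only step that deserves a word is the boundary inclusion, which follows from the openness of the invertible group and continuity of inversion (if $\lambda_n\to\lambda$ with $\lambda_n\in\rho_{\bb{C}}(T)$ and $\lambda\in\rho_{\bb{B}}(T)$, then $(T-\lambda_n)^{-1}\to(T-\lambda)^{-1}$ in $\bb{B}$, and norm-closedness of $\bb{C}$ puts $(T-\lambda)^{-1}$ back in $\bb{C}$); you flagged it correctly. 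In short, your argument is sound and supplies the proof that the paper delegates to the reference.
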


\medskip

Let $H$ be a (not necessarily bounded) self-adjoint operator on $L^2(\R^2)$ such that $R(H):=(H-\ii{\bf 1})^{-1}\in\bb{C}_B$. An example of this type is given by the  Landau Hamiltonian \eqref{eq:intro_LH}. A standard argument shows that $R(H)\in \bb{C}_B$ implies that 
 $R_z(H):=(H-z{\bf 1})^{-1}\in\bb{C}_B$ for every $z\in\rho(H)$. In fact from the first resolvent identity one infers that $R_z(H)W_z(H)=R(H)$ with $W_z(H):={\bf 1}+(\ii-z)R(H)\in \bb{C}_B^+$,
where we denoted with $\bb{C}_B^+:=\C{\bf 1}+\bb{C}_B$ the standard unitalization of  
 $\bb{C}_B$ inside $\bb{B}(L^2(\R^2))$.
The   operator $W_z(H)$ admits a bounded inverse given by $W_z(H)^{-1}=(H-\ii{\bf 1})R_z(H)\in \bb{B}(L^2(\R^2))$. However, Lemma \ref{lemma:inv_sub} implies that $W_z(H)^{-1}\in\bb{C}_B^+$ and in turn $R_z(H)=R(H)W_z(H)^{-1}\in\bb{C}_B$.
Now let $V\in C^{\rm u}_{\rm b}(\R^2)$ be a uniformly continuous bounded function
and assume that $V:\R^2\to\R$. We will see $V$ as the multiplication operator on  $L^2(\R^2)$ defined by $(V\psi)(x):=V(x)\psi(x)$ and we will refer to it as the \emph{potential} $V$. As a consequence of the Kato-Rellich theorem \cite[Theorem X.12]{reed-simon-II} the \emph{perturbed} operator
\[
H_V\;:=\;H+V
\]
 is self-adjoint on the same domain of $H$. Let $z\in\rho(H)\cap\rho(H_V)$ be in the intersection of the two resolvent sets (\eg $z=\ii$). Then the second resolvent identity provides
$G_z(H,V)R_z(H_V)=R_z(H)$, with $G_z(H,V):= \big[{\bf 1}+R_z(H)V\big]$.
 Let us assume for the moment that there exists a {dynamical system} $(\Omega,\R^2,\rr{t})$, a continuous function $v:\Omega\to\R$ and a point $\omega\in\Omega$ such that $V(x)=v(\rr{t}_{x}(\omega))$. Then,  $R_z(H)V\in (\bb{C}_B\cdot\bb{A}_\Omega)_\omega$ and in turn $G_z(H,V)\in \rr{A}_{B,\omega}^+$. However,  $G_z(H,V)$ is invertible with inverse $G_z(H,V)^{-1}={\bf 1}-R_z(H_V)V\in \bb{B}(L^2(\R^2))$. From Lemma \ref{lemma:inv_sub} one infers that  $G_z(H,V)^{-1}\in \rr{A}_{B,\omega}^+$ and from the equation $R_z(H_V)=G_z(H,V)^{-1}R_z(H)$ one concludes that 
  $R_z(H_V)\in\rr{A}_{B,\omega}$. Finally, with the same argument used at the beginning of this paragraph one can prove that $R_z(H_V)\in\rr{A}_{B,\omega}$ for all $z\in\rho(H_V)$.
    In conclusion, we showed that the $C^*$-algebra $\rr{A}_{B,\omega}$ contains all the resolvents of the perturbed magnetic operator $H_V$, and this justifies its 
 name.
 
 \medskip

It is worth complementing the discussion above with some examples that clarify the role of the dynamical system $(\Omega,\R^2,\rr{t})$. 
\begin{example}[Constant perturbations]
Let $\Omega_\ast:=\{\ast\}$ be a singleton. Then $\bb{A}_{\Omega_\ast}=C(\{\ast\})=\C$ and in turn 
$(\bb{C}_B\cdot\bb{A}_{\Omega_\ast})_{\ast}=\bb{C}_B$. This (trivial) case describes magnetic operators perturbed by constant potentials, including the unperturbed case of an everywhere zero potential.
 \hfill $\blacktriangleleft$
\end{example}

\begin{example}[Periodic perturbations]
Let 
$\alpha,\beta\in\R^2$ be two non-aligned vectors and consider the two-dimensional lattice
$\Gamma:=\{n_1\alpha+n_2\beta\;|\;(n_1,n_2)\in\Z^2\}\simeq\Z^2$ along with the quotient $\Omega_\Gamma:=\R^2/\Gamma\simeq \n{T}^2$ homeomorphic to a two-dimensional torus. The space $\Omega_\Gamma$ inherits the action by translations of $\R^2$ on itself. More precisely, let us denote with $\R^2\ni y\mapsto[y]\in \Omega_\Gamma$ the canonical projection. Then, for every $x\in\R^2$ and $[y]\in \Omega_\Gamma$ one defines $\rr{t}_{x}([y]):=[y+x]$. The resulting dynamical system $(\Omega_\Gamma,\R^2,\rr{t})$ is evidently minimal. 
Let   $\bb{A}_{\Omega_\Gamma}:=C(\Omega_\Gamma)$, $v\in \bb{A}_{\Omega_\Gamma}$ a real-valued function and   $[y_0]\in\Omega_\Gamma$ a reference point. Then, one can define  the multiplication operator on $\bb{B}(L^2(\R^2))$ given by the function $V(x):=v([y_0+x])$. It turns out that $V$ is 
a continuous 
$\Gamma$-periodic function, and in particular $V\in C^{\rm u}_{\rm b}(\R^2)$. For that reason, we will refer to  $V$ as a \emph{$\Gamma$-periodic potential}. It is also true  that every continuous $\Gamma$-periodic function $V$ defines a $v\in \bb{A}_{\Omega_\Gamma}$ by $v([x]):=V(x)$. Moreover, in view of the minimality of the action of the translations on  $\Omega_\Gamma$ one can always use $[0]$ as the preferred reference point in the representation of the potential. In conclusion the $C^*$-algebra $\rr{A}_{B,\Gamma}:= [[\bb{C}_B\cdot\bb{A}_{\Omega_\Gamma}]]_{[0]}$ results to be the algebra associated to
  magnetic operators perturbed by 
 $\Gamma$-periodic potentials. 
 \hfill $\blacktriangleleft$
\end{example}

\begin{example}[Family of perturbations and their translations]
A more general situation is described in \cite[Section 2.4]{Bellissard-93}. 	 
First of all, by recalling that $L^\infty(\R^2)$ is isometrically isomorphic to the dual of $L^1(\R^2)$, one can see every element of  
  $C^{\rm u}_{\rm b}(\R^2)\subset L^\infty(\R^2)$ as a functional over $L^1(\R^2)$. Let $\bb{F}\subset C^{\rm u}_{\rm b}(\R^2)$ a family of functions such that $r_{\bb{F}}:=\sup\{\|f\|_\infty\;|\; f\in\bb{F}\}<+\infty$. Moreover, for every $a\in\R^2$ 
  and for every $f\in\bb{F}$ let $\rr{t}_{a}(f)\in C^{\rm u}_{\rm b}(\R^2)$ be the function defined by
  $\rr{t}_{a}(f)(x):=f(x-a)$. Let
 \[
 \Omega_{\bb{F}}\;:=\;\overline{\left\{\rr{t}_{a}(f)\;|\; a\in\R^2\;,\;\; f\in\bb{F}\right\}}^{\;w^*}
 \] 
 where the closure is with respect to the $\ast$-weak topology of the dual space of $L^1(\R^2)$.
Moreover, since translations do not increase the norm, one has that  $\Omega_{\bb{F}}$ is contained in the closed ball of radius ${r_{\bb{F}}}$ in $L^\infty(\R^2)$.  Since the Banach-Alaoglu theorem states that closed balls are compact with respect to the $\ast$-weak topology, it turns out that $\Omega_{\bb{F}}$ is a closed subset of a compact set, hence compact. Moreover the map $a\mapsto\rr{t}_a$ induces a continuous action of $\R^2$ on $\Omega_{\bb{F}}$. Now every point $\omega\in  \Omega_{\bb{F}}$ can be identified by a function $V_\omega\in L^\infty(\R^2)$
and one can consider the function $v: \Omega_{\bb{F}}\to\C$ defined by
$v(\omega)=V_\omega(0)$. It turns out that $V_\omega(x)=v(\rr{t}_{x}(\omega))$ and that $V_\omega\in C^{\rm u}_{\rm b}(\R^2)$ \cite[Corollary 2.4.2]{Bellissard-93}. The $C^*$-algebra $[[\bb{C}_B\cdot\bb{A}_{\Omega_{\bb{F}}}]]_\omega$ describes magnetic operators perturbed by potentials in  $\bb{F}$ and translations of them.
 \hfill $\blacktriangleleft$
\end{example}

Returning to the general discussion let us observe that in view of the fact that  $\bb{C}_B$ commute with the magnetic translations and the covariance of the representations $\pi_\omega$, one obtains the extended covariance relation
\begin{equation}\label{eq:cov_cond_V}
 V(a)^*\;\rr{A}_{B,\omega}\;V(a)\;=\;
 \rr{A}_{B,\rr{t}_a(\omega)}\;,\qquad \forall\; a\in\R^2\;, \quad
 \forall\omega\in\Omega\;.
\end{equation}
Moreover, by collecting all the $C^*$-agebras $\rr{A}_{B,\omega}$ in the direct integral one obtain the \emph{full} algebra of perturbed magnetic operators $\rr{A}_{B,\Omega}$ defined in \eqref{eq:int_rep_1}.

\section{Relation with the twisted crossed product}\label{sec:cross_prod}
In this section we will describe the $C^*$-algebras $\rr{A}_{B,\omega}$ and $\rr{A}_{B,\Omega}$ 
 as representations of an \virg{abstract} $C^*$-algebra known as \emph{twisted crossed product} \cite{busby-smith-70,pedersen-79,packer-raeburn-89,williams-07}.

 \medskip

Let us start  with the Lebesgue-Bochner space
$$
\s{L}^1\;:=\;L^1(\R^2,\bb{A}_{\Omega})
$$
  of integrable functions from $\R^2$ to $\bb{A}_{\Omega}$  \cite[Appendix B]{williams-07}. This is a Banach space   with respect to the $L^1$-type norm
\begin{equation}\label{eq_L1-norm}
  \nnnorm{F}_1\;:=\;\frac{1}{2\pi\ell^2}\int_{\R^2}\dd x\;\lVert F(x)\rVert _\infty\;.
\end{equation}
We need also to introduce the  space of the compactly supported functions
over $\R^2$ with values in $\bb{A}_{\Omega}$ 
$$
\s{K}\;:=\;C_c(\R^2,\bb{A}_{\Omega})\;,
$$ 
and the space 
$$
\s{K}_0\;:=\;C_c(\Omega\times\R^2)\;,
$$ 
of the compactly supported continuous functions on
$\Omega\times\R^2$.
\begin{lemma} The Banach space  $\s{L}^1$ is separable. Moreover, one has the following inclusions 
$
\s{K}_0\subset \s{K} \subset \s{L}^1
$
and both $\s{K}_0$ and $\s{K}$ are dense in $\s{L}^1$ with respect to the norm-topology of $\nnnorm{\;}_1$. 
\end{lemma}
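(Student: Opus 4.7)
The plan has three parts, each fairly standard in the theory of Bochner spaces.

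First I would dispose of the separability statement. Since $\Omega$ is compact metrizable, the $C^*$-algebra $\bb{A}_\Omega=C(\Omega)$ is a separable Banach space (a metric on $\Omega$ produces a countable dense subset, and polynomials in finitely many of its coordinate functions yield a countable dense $\Q$-subalgebra via Stone--Weierstrass). Combined with the $\sigma$-finiteness of Lebesgue measure on $\R^2$, the classical theorem on Bochner integrable functions (see \eg Williams~\cite{williams-07}, Appendix B) gives that $L^1(\R^2,\bb{A}_\Omega)$ is separable.

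Next I would check the two inclusions by exhibiting the natural identification of function spaces. Given $F\in \s{K}_0=C_c(\Omega\times\R^2)$, define $\hat F:\R^2\to\bb{A}_\Omega$ by $\hat F(x)(\omega):=F(\omega,x)$. The support of $\hat F$ in $\R^2$ is the image under the second projection of $\supp(F)\subset\Omega\times\R^2$, and hence compact. Continuity of $\hat F$ with values in $(\bb{A}_\Omega,\|\cdot\|_\infty)$ amounts to $\sup_\omega|F(\omega,x)-F(\omega,x_0)|\to 0$ as $x\to x_0$, which follows from uniform continuity of $F$ on the compact set $\Omega\times K$, for $K$ any compact neighborhood of $x_0$. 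This shows $\s{K}_0\hookrightarrow\s{K}$. The inclusion $\s{K}\subset\s{L}^1$ is then immediate: any $F\in\s{K}$ is bounded (continuous with values in a Banach space, on a compact set) and supported on a compact set, so $\nnnorm{F}_1\le (2\pi\ell^2)^{-1}|\supp F|\,\sup_{x}\|F(x)\|_\infty<\infty$; measurability is automatic from continuity and separability of $\bb{A}_\Omega$.

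Finally the density. That $\s{K}=C_c(\R^2,\bb{A}_\Omega)$ is dense in $\s{L}^1$ is the standard density theorem for Bochner integration on locally compact spaces with a regular Borel measure (approximate a simple function of the form $\sum c_i\chi_{E_i}$ first by replacing $E_i$ with compacts, then smoothing via Urysohn / convolution to get a continuous compactly supported approximation). For the density of $\s{K}_0$ it is enough to show $\s{K}_0$ is dense in $\s{K}$ for the norm $\nnnorm{\cdot}_1$, and in fact the above identification already upgrades: given $F\in\s{K}$, the map $\tilde F(\omega,x):=F(x)(\omega)$ is jointly continuous on $\Omega\times\R^2$ (use the estimate $|\tilde F(\omega,x)-\tilde F(\omega_0,x_0)|\le\|F(x)-F(x_0)\|_\infty+|F(x_0)(\omega)-F(x_0)(\omega_0)|$) and has support contained in $\Omega\times\supp(F)$, which is compact. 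Hence $\tilde F\in\s{K}_0$, and the identification $F\mapsto\tilde F$ is isometric, so $\s{K}_0$ and $\s{K}$ coincide under this correspondence, which trivially gives the claimed density.

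The main subtlety is the last step: one must verify that the pointwise identification between $C_c(\Omega\times\R^2)$ and $C_c(\R^2,C(\Omega))$ actually respects continuity in both directions, which is a compactness/uniform continuity argument that uses essentially the compactness of $\Omega$. Once that is in place, the lemma is a combination of well-known facts.
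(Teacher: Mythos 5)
Your proof is correct, and it actually departs from the paper's route at the most interesting point, namely the treatment of $\s{K}_0$ inside $\s{K}$. The paper asserts (with a footnote citing Williams, Remark 2.32) that the inclusion $\s{K}_0\subset\s{K}$ is in general a proper one, and consequently obtains the density of $\s{K}_0$ indirectly: it observes that $C_c(\R^2)\odot\bb{A}_\Omega$ sits inside $\s{K}_0$ and invokes Williams' Lemma 1.87, which gives density of the algebraic tensor product in $\s{K}$ for the inductive limit topology, hence a fortiori for $\nnnorm{\;}_1$. You instead prove directly that the natural correspondence $F(\omega,x)\leftrightarrow \hat F(x)(\omega)$ is a bijection between $C_c(\Omega\times\R^2)$ and $C_c(\R^2,\bb{A}_\Omega)$, i.e.\ that $\s{K}_0=\s{K}$, after which density of $\s{K}_0$ follows with no further work. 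Your verification of the nontrivial direction — that every $F\in C_c(\R^2,C(\Omega))$ gives a jointly continuous function of compact support on $\Omega\times\R^2$ — is correct, and it is precisely where the compactness of $\Omega$ enters (both for the compactness of $\Omega\times\supp F$ and for the triangle-inequality estimate you use). The properness cited by the paper is a phenomenon of the general Williams framework, where the coefficient algebra need not be $C(\Omega)$ for compact $\Omega$; in the present setting your equality $\s{K}_0=\s{K}$ is the sharper statement, and it makes the argument more elementary and self-contained than the paper's, at the modest cost of needing the explicit uniform-continuity verification. The separability argument and the density of $\s{K}$ are handled the same way as in the paper (separability of $\bb{A}_\Omega$ plus the standard Bochner-space facts from Williams' Appendix B), so no issues there.
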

\proof
The inclusion $\s{K}\subset\s{L}^1$ is evident and the density of $\s{K}$ is proved in 
 \cite[Proposition B.33
]{williams-07}. The proper\footnote{This inclusion is proper in general, see \eg~\cite[Remark
  2.32]{williams-07}.} inclusion $\s{K}_0\subset\s{K}$
is  given by the identification
of  functions on $\Omega\times\R^2$ with functions from $\R^2$ to
$C(\Omega)$. For the density of $\s{K}_0$, one first observes that 
the
algebraic tensor product $C_c(\R^2)\odot \bb{A}_{\Omega}$ is
  contained in $\s{K}_0$. Then,
 \cite[Lemma 1.87]{williams-07} proves that $C_c(\R^2)\odot \bb{A}_{\Omega}$ is dense in 
 $\s{K}$ in the inductive limit topology, and therefore for the topology  induced by the norm $\nnnorm{\;}_1$. The separability of  $\s{L}^1$ follows from the density of simple functions in 
 $\s{L}^1$ \cite[Proposition B.33
]{williams-07} along with the
  the separability of $\bb{A}_{\Omega}$ and 
  $L^1(\R^2)$ (equivalently from the fact that  the Borel $\sigma$-algebra of $\R^2$ is countably generated). 
  \qed

\medskip

We will need also a second density result. For that let us introduce the linear spaces
\[
\s{F}\;:=\;F(\R^2)\odot \bb{A}_{\Omega}\;,\qquad\s{S}\;:=\;S(\R^2)\odot \bb{A}_{\Omega}
\]
where $F(\R^2)$ is the space of finite linear combinations  of Laguerre functions \eqref{eq:lag_pol}, and   $S(\R^2)$ denotes the space of Schwartz functions on $\R^2$. 
Since $F(\R^2)\subset S(\R^2)$ one infers that $\s{F}\subset \s{S}$.
Elements of $\s{F}$ or $\s{S}$ are finite linear combinations of simple tensors $(f\odot g)(\omega,x):=f(x)g(\omega)$
with  $g\in\bb{A}_{\Omega}$ and $f$ an element of $F(\R^2)$ or $S(\R^2)$, respectively.
\begin{lemma}\label{lemma:F_dens} 
The linear spaces $\s{F}$ and $\s{S}$ are dense in  $\s{L}^1$. 
\end{lemma}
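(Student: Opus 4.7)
The plan is to deduce both density statements from a single density result, namely that $\s{F}$ is already dense in $\s{L}^1$; since $F(\R^2)\subset S(\R^2)$ implies $\s{F}\subset\s{S}\subset\s{L}^1$, the density of $\s{F}$ forces the density of $\s{S}$ as well.

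To prove $\s{F}$ dense in $\s{L}^1$, I would proceed by a two-step reduction. First I would observe that the previous lemma has already provided a dense subset, either $\s{K}$ or the algebraic tensor product $C_c(\R^2)\odot\bb{A}_\Omega$ (which was shown there to be dense in $\s{K}$, hence in $\s{L}^1$, in the topology induced by $\nnnorm{\,\cdot\,}_1$). Therefore it suffices to approximate every simple tensor $f\odot g$ with $f\in C_c(\R^2)$ and $g\in\bb{A}_\Omega$ by elements of $\s{F}$ in the norm $\nnnorm{\,\cdot\,}_1$. The crucial estimate is
\[
\nnnorm{f\odot g\,-\,\sum_{j}c_j\,\psi_{n_j,m_j}\odot g}_1\;\leqslant\;\frac{\|g\|_\infty}{2\pi\ell^2}\,\Big\|f-\sum_{j}c_j\psi_{n_j,m_j}\Big\|_{L^1(\R^2)},
\]
which reduces everything to the scalar approximation problem of showing that $F(\R^2)$ is dense in $L^1(\R^2)$.

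For this last scalar statement I would use the fact that the Laguerre functions \eqref{eq:lag_pol} are the eigenfunctions of the two-dimensional harmonic oscillator $Q$ (see the diagonalization given just before \eqref{eq:op-Q-2}), so finite linear combinations of $\psi_{n,m}$ are dense in the Schwartz space $S(\R^2)$ with respect to the Schwartz (Fr\'echet) topology. Since the Schwartz topology is stronger than the $L^1$-topology and $S(\R^2)$ is dense in $L^1(\R^2)$, one obtains that $F(\R^2)$ is $L^1$-dense. Assembling these ingredients produces, for a given $F\in\s{L}^1$ and $\epsilon>0$, first an approximant in $C_c(\R^2)\odot\bb{A}_\Omega$ of the form $\sum_k f_k\odot g_k$ within $\epsilon/2$, then an approximant in $\s{F}$ obtained by replacing each $f_k$ by a finite Laguerre sum, chosen so that the total error is at most $\epsilon$.

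The only mildly delicate step is the classical density of $F(\R^2)$ in $S(\R^2)$ for the Schwartz topology; one may either invoke it as a well-known fact about harmonic-oscillator eigenexpansions (via decay of the expansion coefficients in all seminorms $\|(1+Q)^k\,\cdot\,\|_{L^2}$) or derive it from the one-dimensional case by a tensor-product argument. Everything else is routine: the continuity of the bilinear map $(f,g)\mapsto f\odot g$ from $L^1(\R^2)\times\bb{A}_\Omega$ to $\s{L}^1$, together with a triangle-inequality diagonal argument, closes the proof without further surprises.
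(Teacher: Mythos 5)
Your proof is correct, and it follows the same top-level reduction as the paper: prove density of $\s{F}$ only (since $\s{F}\subset\s{S}$), and reduce the tensor statement to the scalar density of $F(\R^2)$ in $L^1(\R^2)$ via a norm estimate on simple tensors plus the density of an algebraic tensor product in $\s{L}^1$ already established in the preceding lemma. The genuine divergence is in how the scalar density of $F(\R^2)$ in $L^1(\R^2)$ is obtained. The paper invokes a summability result for Laguerre expansions directly in $L^1$: the Riesz means of the Laguerre expansion of an $L^1$-function converge in $L^1$ (Thangavelu, Theorem 2.5.1), and the Riesz means live in $F(\R^2)$. You instead go through the Schwartz space: $F(\R^2)$ is dense in $S(\R^2)$ in the Fr\'echet topology (because the $\psi_{n,m}$ span the same finite-dimensional eigenspaces of the two-dimensional harmonic oscillator as the product Hermite functions, whose Schwartz-density is classical), the $L^1$-norm is controlled by the Schwartz seminorms in dimension two, and $S(\R^2)$ is dense in $L^1(\R^2)$. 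Both chains are valid. The paper's route is shorter once the summability theorem is granted, but imports a nontrivial harmonic-analysis fact; your route replaces it by a more elementary, if slightly longer, argument based on oscillator eigenexpansion convergence in Schwartz topology — a step you rightly flag as the only delicate point, and one for which a clean reference (e.g.\ via decay of coefficients in the seminorms $\|(1+Q)^k\cdot\|_{L^2}$, as you indicate) should be supplied if the argument were to be written out in full.
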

\proof
In view of the inclusion $\s{F}\subset \s{S}$, it is sufficient to prove the claim only for $\s{F}$. The main point is that  the space ${F}(\R^2)$ is dense in $L^1(\R^2)$. A way to prove this  is to consider the \emph{Riesz means} of the expansion of the functions in  $L^1(\R^2)$ on the Laguerre basis $\psi_{n,m}$ 
\cite[Theorem 2.5.1]{thangavelu-93}. This implies that elements in $L^1(\R^2)\odot \bb{A}_{\Omega}$ (for instance simple functions) are arbitrarily well approximated by elements in  ${\s{F}}$ with respect to the 
norm $\nnnorm{\;}_1$. The density of $L^1(\R^2)\odot \bb{A}_{\Omega}$ in the space of the  $L^1$-Bochner integrable functions \cite[Proposition~B.33]{williams-07} concludes the proof.  \qed

\medskip

The Banach space $\s{L}^1$ can be  made into a Banach $\ast$-algebra by means of the following two operations
\begin{align}\label{eq:sub_alg_struct0} 
(F\star G)(\omega,x)\;:&=\;
\frac{1}{2\pi\ell^2}\int_{\R^2}\dd y\; F(\omega,y)\; G\left(\rr{t}_{-y}(\omega), x-y\right)\;\Theta_B(y,x)   \\  
F^\star(\omega,x)\;:&=\;
\overline{F(\rr{t}_{-x}(\omega),-x)}\label{eq:sub_alg_struct1} \;.
\end{align}
where in \eqref{eq:sub_alg_struct0}  we introduced the notation  $
\Theta_B(x,y):=\expo{\ii\frac{x\wedge y}{2\ell^2}}$. The map $\Theta_B:\R^2\times\R^2 \to \n{S}^1$ meets the   \emph{2-cocycle} condition
  \[
    \Theta_B(x,y+z) \Theta_B(y,z)\;  =\; \Theta_B(x,y) \Theta_B(x+y,z)\;,\qquad\quad x,y,z\in\R^2
  \]
and the \emph{normalization} conditions $\Theta_B(x,\pm x)=\Theta_B(0, x)=\Theta_B(x,0)=1$.
\begin{proposition}
The quadruple $\s{L}_B^1:=(\s{L}^1, \star, ^\star,\nnnorm{\;}_1)$ is a separable Banach $\ast$-algebra and 
$\s{K}_0$ and $\s{K}$ are dense subalgebras.
\end{proposition}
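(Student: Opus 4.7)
The plan is to verify the Banach $*$-algebra axioms for $(\s{L}^1, \star, {}^\star, \nnnorm{\;}_1)$ and then check that $\s{K}_0$ and $\s{K}$ are closed under $\star$ and ${}^\star$. Separability of $\s{L}^1$ and density of $\s{K}_0$ and $\s{K}$ have already been established in the preceding lemmas, so only the algebraic structure and the stability under the two operations remain.

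First I would prove that $\star$ is well defined and satisfies $\nnnorm{F\star G}_1 \leqslant \nnnorm{F}_1\nnnorm{G}_1$. Since $|\Theta_B|\equiv 1$ and each translation automorphism $T_{-y}$ of $\bb{A}_\Omega$ is isometric, one has the pointwise bound
\[
\|(F\star G)(x)\|_\infty \;\leqslant\; \frac{1}{2\pi\ell^{2}}\int_{\R^2}\|F(y)\|_\infty\,\|G(x-y)\|_\infty\,\dd y\;,
\]
and a Fubini--Tonelli argument yields the submultiplicativity. Joint measurability of $(\omega,x,y)\mapsto F(\omega,y)G(\rr{t}_{-y}\omega,x-y)\Theta_B(y,x)$ is a consequence of the joint continuity of $\rr{t}$ combined with standard Bochner-integration arguments.

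Next I would handle the involutive properties. The identity $(F^\star)^\star = F$ is immediate from $\rr{t}_{x}\circ\rr{t}_{-x}=\mathrm{id}$ and $\overline{\overline{z}}=z$, while the isometry $\nnnorm{F^\star}_1 = \nnnorm{F}_1$ follows from the fact that $\omega\mapsto\rr{t}_{-x}\omega$ is a bijection of $\Omega$, so $\|F^\star(x)\|_\infty = \|F(-x)\|_\infty$, combined with the invariance of Lebesgue measure under $x\mapsto -x$. For the identities $(F\star G)\star H = F\star (G\star H)$ and $(F\star G)^\star = G^\star\star F^\star$, I would expand both sides as iterated integrals, perform the substitution $y\mapsto y-z$ in the inner integral (respectively $y\mapsto -y$), and observe that the $F,G,H$-dependent factors match. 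What remains is a pair of identities on the twist: for associativity,
\[
\Theta_B(z,y)\,\Theta_B(y,x) \;=\; \Theta_B(y-z,x-z)\,\Theta_B(z,x)\;,
\]
which is a direct consequence of the bilinearity and antisymmetry of $\wedge$ (equivalently, of the 2-cocycle relation together with the normalization $\Theta_B(a,\pm a)=1$); for the adjoint, $\overline{\Theta_B(-y,-x)}=\Theta_B(x-y,x)$, which again follows from antisymmetry.

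Finally, for the stability of $\s{K}$ and $\s{K}_0$: if $F,G\in\s{K}$ have $x$-supports in compact sets $K_F,K_G\subset\R^2$, then $\supp_x(F\star G)\subseteq K_F+K_G$ is compact, and continuity of $x\mapsto(F\star G)(x)\in\bb{A}_\Omega$ follows from uniform continuity of $G$ on compacts together with dominated convergence; the involution obviously preserves compact support and continuity. For $\s{K}_0$, whose elements are jointly continuous on $\Omega\times\R^2$, the joint continuity of $F\star G$ and $F^\star$ is a further consequence of the joint continuity of $\rr{t}$. The main obstacle will be the careful bookkeeping of the twist factors $\Theta_B$ through the associativity and adjoint computations; once the correct substitutions are in place, all remaining identities reduce to elementary properties of the symplectic form on $\R^2$.
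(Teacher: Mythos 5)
Your proposal is correct, and its overall structure matches the paper's, but you fill in the computational details that the paper simply delegates to the literature: the paper's proof records the Bochner--Young inequality giving $\nnnorm{F\star G}_1\leqslant\nnnorm{F}_1\nnnorm{G}_1$ and then cites Busby--Smith and Williams for the verification that $(\star,{}^\star)$ yields a $\ast$-algebra, adapted to the nontrivial $2$-cocycle. You instead carry out the associativity and involution computations explicitly. Your two twist identities check out: for the associativity, after the substitution you need $\Theta_B(y,z)\Theta_B(z,x)=\Theta_B(z-y,x-y)\Theta_B(y,x)$, which reduces to $(z-y)\wedge(x-y)+y\wedge x = y\wedge z + z\wedge x$ by bilinearity and antisymmetry of $\wedge$; for the adjoint, $\overline{\Theta_B(-y,-x)}=\Theta_B(x-y,x)$ since both equal $\exp\!\big(-\ii\,y\wedge x/2\ell^2\big)$. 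One thing you make more explicit than the paper is that $\s{K}$ and $\s{K}_0$ are not merely dense subspaces but actually closed under $\star$ and $^\star$ (the support estimate $\supp_x(F\star G)\subseteq K_F+K_G$ and preservation of continuity, joint continuity using the joint continuity of the action). The paper leaves this implicit in the references. Both routes are valid; yours is self-contained while the paper's is shorter by relying on the standard crossed-product literature.
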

\proof
The product $\star$ is a twisted convolution,
hence the fact that  $f\star g$ is still in $\s{L}^1$ can be proved by the Bochner generalization of the Young's inequalities which provides
\[
 \nnnorm{F\star G}_1\;\leqslant\;  \nnnorm{F}_1  \nnnorm{G}_1 \;,\qquad F,G\in\s{L}^1\;.
\]
 A detailed calculation of the fact that 
the operations \eqref{eq:sub_alg_struct0} and  \eqref{eq:sub_alg_struct1} endowe  $\s{L}^1$ with the structure of a $\ast$-algebra is provided in \cite[Section 6]{busby-smith-70}. See also the discussion in \cite[Section
 2.3]{williams-07} and \cite[Section
 B.2.1]{williams-07} which can be easily adapted to the case of a non-trivial twisted  2-cocycle.
 \qed 
 
\medskip

One refers to $\s{L}_B^1$ as the \emph{group algebra} on $\R^2$ with values in $\bb{A}_{\Omega}$ \emph{twisted} by $\Theta_B$ \cite[Section 2]{busby-smith-70}.

\begin{remark}[Approximate identity]
  \label{rk:approx_ident} 
  The Banach $\ast$-algebra $\s{L}^1_B$ has (left) sequential approximate identities \cite[Theorem 3.2]{busby-smith-70}.  An explicit realization is provided by the family $\{\imath_n\}_{n\in\N}$ with $\imath_n(\omega,x):=n^2\frac{\pi\ell^2}{2}\chi_{{n}}(x)$ where $\chi_{{n}}$ is the characteristic function of the square $Q_n:=[-n^{-1},n^{-1}]^2\subset\R^2$. By construction $\nnnorm{\imath_n}_1=1$ for all $n\in\N$.
 If one replaces   $\chi_{n}$ with suitable smooth functions supported in $Q_{n}$ one gets$\{\imath_n\}_{n\in\N}\subset\s{K}_0$.
   \hfill $\blacktriangleleft$
\end{remark}

\medskip

The realization of the Banach $\ast$-algebra $\s{L}^1_B$ is the principal step 
towards the definition of the relevant $C^*$-algebra.
To equip $\s{L}^1_B$ with a $C^\ast$-structure we have to replace the norm $\nnnorm{\;}_1$, which does not verifies the $C^\ast$-property, with a new $C^*$-norm.  This  can be done by introducing  the \emph{universal norm} of an element  $F\in \s{L}^1_B$ as
\begin{equation}\label{eq13}
\|F\|_{\rm u}\;:=\;\sup\left\{\|\rho(F)\|_{\bb{B}(\s{H})}\;\left|\; (\rho,\s{H})\ \text{is a Hilbert space $\ast$-representation of}\ \ \s{L}^1_B\right\}\right.\;.
\end{equation}
Since every $\ast$-representation is automatically continuous \cite[Proposition 2.3.1]{bratteli-robinson-87}, the supremum is well defined and 
\begin{equation}\label{ineq_univ_norm}
\|F\|_{\rm u}\;\leqslant\; \nnnorm{F}_1\;,\qquad \forall\; F\in \s{L}^1_B\;.
\end{equation}
Equation \eqref{eq13} defines a $C^\ast$-seminorm on $\s{L}^1_B$. In fact $\|\;\|_{\rm u}$ turns out to be  a norm since there exist faithful Hilbert space representations of  $\s{L}^1_B$ (see Proposition \ref{prop:int_rep_pair}).
The resulting $C^\ast$-algebra 
\[
\bb{A}_\Omega\rtimes_B\R^2\;:=\;\overline{\s{L}^1_B}^{\;\|\;\|_{\rm u}}
\]
is called the \emph{twisted crossed product} of the dynamical system $(\Omega,\R^2,\rr{t})$ in mathematical literature \cite{busby-smith-70,pedersen-79,packer-raeburn-89,williams-07}.

\begin{remark}\label{rk:one-point}
In the special case $\Omega=\{\ast\}$, or equivalently $\bb{A}_\Omega=\C$, the 
twisted crossed product described above reduces to the twisted group $C^*$-algebra of the group $\R^2$. This particular case has been considered in \cite[Sections 2.2 \& 2.3]{denittis-sandoval-00} under the name of \emph{magnetic group $C^*$-algebra}. In more detail one has that every $f\in L^1(\R^2)$ defines a magnetic operator $K_f\in \bb{C}_B$ through the prescription
\[
(K_f\varphi)(x)\;:=\;\frac{1}{2\pi\ell^2}\int_{\R^2}\dd y\; f(x-y)\;\Theta_B(x,y)\;\varphi(y)
\]
and the mapping $K:f\mapsto K_f$ is an injective representation of the convolution algebra $L^1(\R^2)$. In fact, compared with the representation $\pi$ introduced in  \cite[p.13]{denittis-sandoval-00} one gets that 
$K_f=\pi(f^-)$ where $f^-(x):=f(-x)$.
Denoting with $\bb{G}_B$ the clausure of $L^1(\R^2)$ with respect to the universal norm one gets that $K:\bb{G}_B\to \bb{C}_B$ is an isomorphism of  $C^*$-algebras.
With the same argument of Lemma \ref{lemma:F_dens}, one deduces that $K(F(\R^2))$ is dense in 
$\bb{C}_B$. As a final remark, by comparing the proof of \cite[Proposition 2.10]{denittis-sandoval-00} with the fact that $\psi_{j,k}^-=(-1)^{j-k}\psi_{j,k}$, one infers that $K_{\psi_{j,k}}=(\sqrt{2\pi} \ell)^{-1}\Upsilon_{j\mapsto k}$.
\hfill $\blacktriangleleft$
\end{remark}

\medskip

The definition of the universal norm and inequality \eqref{ineq_univ_norm} imply two important facts. First of all  every Hilbert space $\ast$-representation of 
$\s{L}^1_B$ has a unique continuous extension to a representation of 
$\bb{A}_\Omega\rtimes_B\R^2$.
The second is the content of the following result:
\begin{lemma}\label{lemma:dens04}
Every  subspace  of $\s{L}^1_B$ 
which is dense in the topology of the norm $\nnnorm{\;}_1$,
is also dense in $\bb{A}_\Omega\rtimes_B\R^2$ with respect to the topology of the universal norm $\|\;\|_{\rm u}$.
\end{lemma}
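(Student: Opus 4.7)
The plan is to exploit the chain of norm comparisons and density already at hand: on one side the completion-by-definition gives density of $\s{L}^1_B$ inside $\bb{A}_\Omega\rtimes_B\R^2$ in the universal norm, and on the other side the comparison inequality $\|F\|_{\rm u}\leqslant\nnnorm{F}_1$ from \eqref{ineq_univ_norm} transports $\nnnorm{\;}_1$-approximations into $\|\;\|_{\rm u}$-approximations. The whole argument should therefore reduce to a two-step $\epsilon/2$ approximation.

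In detail, I would fix a subspace $\s{D}\subset\s{L}^1_B$ which is $\nnnorm{\;}_1$-dense, take any $A\in\bb{A}_\Omega\rtimes_B\R^2$ and $\epsilon>0$, and first use that $\bb{A}_\Omega\rtimes_B\R^2=\overline{\s{L}^1_B}^{\;\|\;\|_{\rm u}}$ to pick $F\in\s{L}^1_B$ with $\|A-F\|_{\rm u}<\epsilon/2$. Next I would use the $\nnnorm{\;}_1$-density of $\s{D}$ in $\s{L}^1_B$ to select $G\in\s{D}$ with $\nnnorm{F-G}_1<\epsilon/2$. Applying \eqref{ineq_univ_norm} to $F-G\in\s{L}^1_B$ gives $\|F-G\|_{\rm u}\leqslant\nnnorm{F-G}_1<\epsilon/2$, and the triangle inequality then yields $\|A-G\|_{\rm u}<\epsilon$, so $\s{D}$ is $\|\;\|_{\rm u}$-dense in $\bb{A}_\Omega\rtimes_B\R^2$.

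There is essentially no obstacle here: the only nontrivial input is the universal-norm bound \eqref{ineq_univ_norm}, which has already been established, and the fact that the twisted crossed product is by construction the $\|\;\|_{\rm u}$-completion of $\s{L}^1_B$. The lemma is really just a soft consequence packaged for later use (for instance, to promote the $\nnnorm{\;}_1$-density statements for $\s{K}$, $\s{K}_0$, $\s{F}$, $\s{S}$ proved earlier to $\|\;\|_{\rm u}$-density in the full crossed product).
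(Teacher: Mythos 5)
Your proof is correct and matches the paper's argument exactly: a two-step $\epsilon/2$ approximation that first uses the $\|\;\|_{\rm u}$-density of $\s{L}^1_B$ in the crossed product, then the $\nnnorm{\;}_1$-density of $\s{D}$ in $\s{L}^1_B$, and finally the domination $\|\;\|_{\rm u}\leqslant\nnnorm{\;}_1$ together with the triangle inequality. Nothing to add.
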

\proof
Let   
${\f{D}}\subset \s{L}^1_B$ 
be a $\nnnorm{\;}_1$-dense   subspace. The claim follows
by observing that  for each $\epsilon>0$ and each $F\in\bb{A}_\Omega\rtimes_B\R^2$ there are a $F_0\in \s{L}^1_B$  and a $F_\epsilon\in{\f{D}}$  such that $\|F-F_0\|_{\rm u}<\epsilon/2$ and $\nnnorm{F_0-F_\varepsilon}_1<\epsilon/2$. The inequality 
$$
\|F-F_\epsilon\|_{\rm u}\;\leqslant\;\|F-F_0\|_{\rm u}+\|F_0-F_\epsilon\|_{\rm u}\;\leqslant\;\|F-F_0\|_{\rm u}+\nnnorm{F_0-F_\epsilon}_1\;\leqslant\;\epsilon
$$
 completes the proof.
\qed

\medskip

For the next result we need one more definition. Let $\s{D}$ be a linear subspace of $\bb{A}_\Omega\rtimes_B\R^2$. Then its adjoint is the linear space
\[
\s{D}^\star\;:=\;\left\{F^\star\in\bb{A}_\Omega\rtimes_B\R^2\;|\; F\in\s{D}   \right\}\;.
\]
\begin{lemma}\label{lemma:dens04-2}
If $\s{D}\subset \bb{A}_\Omega\rtimes_B\R^2$ is a dense subspace then so is also its adjoint $\s{D}^\star$.
\end{lemma}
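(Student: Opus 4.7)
The plan is to exploit the fact that the involution $F\mapsto F^\star$ is an isometry on the $C^*$-algebra $\bb{A}_\Omega\rtimes_B\R^2$, combined with the involutive property $F^{\star\star}=F$. This reduces the density of $\s{D}^\star$ to the density of $\s{D}$ by a straightforward symmetry argument.

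First I would verify the continuity (indeed isometry) of the involution with respect to $\|\ \|_{\rm u}$. For any Hilbert space $\ast$-representation $(\rho,\s{H})$ of $\s{L}^1_B$, the adjoint satisfies $\|\rho(F^\star)\|_{\bb{B}(\s{H})}=\|\rho(F)^*\|_{\bb{B}(\s{H})}=\|\rho(F)\|_{\bb{B}(\s{H})}$. Taking the supremum over all such representations in the definition \eqref{eq13} gives $\|F^\star\|_{\rm u}=\|F\|_{\rm u}$ for every $F\in\s{L}^1_B$, and this extends by continuity to the whole $C^*$-algebra $\bb{A}_\Omega\rtimes_B\R^2$. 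In particular, $F\mapsto F^\star$ is a continuous (isometric) bijection of $\bb{A}_\Omega\rtimes_B\R^2$ onto itself with $F^{\star\star}=F$, as follows from equation \eqref{eq:sub_alg_struct1}.

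Now let $G\in\bb{A}_\Omega\rtimes_B\R^2$ be arbitrary. Since $G^\star\in\bb{A}_\Omega\rtimes_B\R^2$ and $\s{D}$ is dense, for every $\epsilon>0$ there exists $F\in\s{D}$ with $\|F-G^\star\|_{\rm u}<\epsilon$. Applying the involution and using its isometry property,
\[
\|F^\star-G\|_{\rm u}\;=\;\|(F-G^\star)^\star\|_{\rm u}\;=\;\|F-G^\star\|_{\rm u}\;<\;\epsilon\;.
\]
Since $F^\star\in\s{D}^\star$, this shows that $G$ is approximated arbitrarily well by elements of $\s{D}^\star$, hence $\s{D}^\star$ is dense in $\bb{A}_\Omega\rtimes_B\R^2$.

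There is no real obstacle here; the only thing that requires a line of justification is the isometry of the involution, which is automatic in any $C^*$-algebra but deserves to be traced back to the definition \eqref{eq13} of the universal norm in order to make the argument self-contained within the framework developed in this section.
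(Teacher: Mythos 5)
Your proof is correct and takes essentially the same approach as the paper: both rely on the isometry $\|F^\star\|_{\rm u}=\|F\|_{\rm u}$ of the involution together with the fact that $\bb{A}_\Omega\rtimes_B\R^2$ is closed under the adjoint. You simply spell out more explicitly why the universal norm is preserved under $\star$ and how the involutive property $F^{\star\star}=F$ turns the isometry into a density-preserving bijection.
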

\proof
The claim follows by the equality $\|F^\star\|_{\rm u}=\|F\|_{\rm u}$ for every $F\in \bb{A}_\Omega\rtimes_B\R^2$, which expresses the  continuity of the adjoint, and the fact that $\bb{A}_\Omega\rtimes_B\R^2$ is closed under the adjoint.
\qed

\medskip 

The importance of Lemmas \ref{lemma:dens04} and  \ref{lemma:dens04-2} relies on the fact that they  allow us to check (linear) proprieties of the full $C^\ast$-algebra $\bb{A}_\Omega\rtimes_B\R^2$ simply looking at  the 
 linear subspaces $\s{K}$, $\s{K}_0$, $\s{S}$ or $\s{F}$ (and their adjoints) which are much easier to handle. 

\medskip

We are now in position to introduce certain special representations of  $\bb{A}_\Omega\rtimes_B\R^2$. First of all let us observe that given an element $F\in \s{L}^1_B$ and a point $\omega\in\Omega$ one can define an operator $\pi_{\omega}(F)$ on $L^2(\R^2)$ by the prescription
\begin{equation}\label{eq:c01-ññ}
  (\pi_{\omega}(F)\varphi)(x)\;:=\;\frac{1}{2\pi\ell^2}\int_{\R^2}\dd y\; F\left(\rr{t}_{x}(\omega),
    x-y\right)\;\Theta_B(x,y)\;\varphi(y)\;.
\end{equation}
As a consequence of the Young’s inequality one gets
\[
\|\pi_{\omega}(F)\varphi\|\;\leqslant\;\nnnorm{F}_1\;\|\varphi\|
\]
proving that the operator norm of $\pi_{\omega}(F)$ is  bounded by $\nnnorm{F}_1$. Moreover, a direct computation shows that $\pi_{\omega}(F\star G)=\pi_{\omega}(F)\pi_{\omega}(G)$
and $\pi_{\omega}(F^\star)=\pi_{\omega}(F)^*$. This means that the mapping $\pi_\omega: F\mapsto \pi_{\omega}(F)$ provides a $\ast$-representation of $\s{L}^1_B$ into $\bb{B}(L^2(\R^2))$
which extends to a $C^*$-representation of the twisted crossed product $\bb{A}_\Omega\rtimes_B\R^2$. We will refer to $\pi_{\omega}$ as the $\omega$-\emph{left-regular representation} and it turns out that
\[
\pi_\omega(\bb{A}_\Omega\rtimes_B\R^2)\;=\;[[\bb{C}_B\cdot\bb{A}_\Omega]]_\omega\;=\;\rr{A}_{B,\omega}
\]
 with the notation introduced in Section \ref{sec:pert_magn}. In fact the following result completes the proof of
  Theorem \ref{teo:ident1}.
\begin{proposition}\label{prop:com_teo1}
There is a dense subset $\s{D}\subset \bb{A}_\Omega\rtimes_B\R^2$ that satisfies the conditions in 
\eqref{con_theo}.
\end{proposition}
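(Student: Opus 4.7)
The natural candidate is $\s{D}:=\s{F}^\star$, the involution-image of the subspace $\s{F}=F(\R^2)\odot\bb{A}_\Omega$ from Lemma \ref{lemma:F_dens}. The $\|\;\|_{\rm u}$-density of $\s{D}$ in $\bb{A}_\Omega\rtimes_B\R^2$ then follows by stacking three already-established facts: Lemma \ref{lemma:F_dens} shows $\s{F}$ is $\nnnorm{\;}_1$-dense in $\s{L}^1$, Lemma \ref{lemma:dens04} promotes this to $\|\;\|_{\rm u}$-density in the full twisted crossed product, and Lemma \ref{lemma:dens04-2} transfers density across the involution.

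The key computational input is the identification of $\pi_\omega$ on simple tensors. For $f\in F(\R^2)$ and $g\in\bb{A}_\Omega$, plugging $(f\odot g)(\omega,z)=f(z)g(\omega)$ into \eqref{eq:c01-ññ} and pulling the $g$-factor out of the $y$-integral gives
$$
\pi_\omega(f\odot g)\;=\;M_{g,\omega}\,K_f,
$$
where $K_f\in\bb{C}_B$ is the magnetic convolution operator from Remark \ref{rk:one-point}. Taking adjoints, and using that $\pi_\omega$ is a $\ast$-homomorphism and that $\bb{C}_B$ is closed under involution,
$$
\pi_\omega\big((f\odot g)^\star\big)\;=\;K_f^*\,M_{\bar g,\omega}\;=\;K_f^*\,\pi_\omega(\bar g)\;\in\;(\bb{C}_B\cdot\bb{A}_\Omega)_\omega.
$$
Linear extension yields $\pi_\omega(\s{D})\subset(\bb{C}_B\cdot\bb{A}_\Omega)_\omega$, which is the first half of \eqref{con_theo}, and taking operator-norm closures produces the inclusion $\overline{\pi_\omega(\s{D})}^{\;\|\;\|}\subseteq[[\bb{C}_B\cdot\bb{A}_\Omega]]_\omega$.

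For the reverse inclusion I would argue directly on generators: it suffices to approximate every product $A\,M_{g,\omega}$ with $A\in\bb{C}_B$, $g\in\bb{A}_\Omega$, by elements of $\pi_\omega(\s{D})$. By Remark \ref{rk:one-point} the set $K(F(\R^2))$ is norm-dense in $\bb{C}_B$; since the involution is isometric on a $C^\ast$-algebra, so is its image $\{K_f^*:f\in F(\R^2)\}$. Choosing $f_n\in F(\R^2)$ with $K_{f_n}^*\to A$, the estimate $\|(K_{f_n}^*-A)M_{g,\omega}\|\leqslant\|K_{f_n}^*-A\|\,\|g\|_\infty$ shows $K_{f_n}^*\,M_{g,\omega}\to A\,M_{g,\omega}$, while each approximant equals $\pi_\omega((f_n\odot\bar g)^\star)\in\pi_\omega(\s{D})$. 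Linearity and one final closure then give $[[\bb{C}_B\cdot\bb{A}_\Omega]]_\omega\subseteq\overline{\pi_\omega(\s{D})}^{\;\|\;\|}$, completing \eqref{con_theo}.

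There is no hard analysis hidden anywhere in this plan; the only subtlety is bookkeeping around the involution. The formula $\pi_\omega(f\odot g)=M_{g,\omega}K_f$ puts the potential on the \emph{left} of the magnetic factor, whereas $(\bb{C}_B\cdot\bb{A}_\Omega)_\omega$ is defined with the magnetic factor on the left. It is precisely the swap coming from taking the adjoint inside $\s{L}^1_B$ (and exploiting Lemma \ref{lemma:dens04-2} to preserve density) that produces the correct ordering of the factors; this is where writing up the proof needs the most care.
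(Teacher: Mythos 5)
Your proof is correct and follows essentially the same route as the paper: take $\s{D}=\s{F}^\star$, establish $\pi_\omega(f\odot g)=M_{g,\omega}K_f$ on simple tensors, pass the product to the correct side by taking adjoints, and stack Lemmas \ref{lemma:F_dens}, \ref{lemma:dens04}, \ref{lemma:dens04-2} together with the density of $K(F(\R^2))$ in $\bb{C}_B$. If anything your write-up is a touch more careful than the paper's, which writes $(M_{g,\omega}K_f)^*=K_{\bar f}M_{\bar g,\omega}$ whereas the adjoint is really $K_{\overline{f^-}}M_{\bar g,\omega}$; your version $K_f^*M_{\bar g,\omega}$ sidesteps that inessential slip.
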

\proof
Let us start by showing that $\pi_\omega(\s{F}^\star)\subset\pi_\omega(\s{S}^\star)\subset(\bb{C}_B\cdot\bb{A}_\Omega)_\omega$. The first inclusion is obvious. Concerning the second inclusion 
let $f\odot g\in \s{S}$  with $f\in S(\R^2)$ and $g\in\bb{A}_\Omega$.
Then, by definition
\[
\begin{aligned}
(\pi_\omega(f\odot g)\varphi)(x)\;&=\;
\frac{1}{2\pi\ell^2}\int_{\R^2}\dd y\; f(x-y)g\left(\rr{t}_{x}(\omega)\right)\;\Theta_B(x,y)\;\varphi(y)\\
&=\;M_{_{g,\omega}}\left(\frac{1}{2\pi\ell^2}\int_{\R^2}\dd y\; f(x-y)\;\Theta_B(x,y)\;\varphi(y)\right)\\
&=\;(M_{_{g,\omega}}K_f\varphi)(x)
\end{aligned}
\]
where $K_f\in \bb{C}_B$ is the magnetic operator associated with the symbol $f$ 
as described in Remark \ref{rk:one-point}. Since
\[
\pi_\omega\left((f\odot g)^\star\right)\;=\;(\pi_\omega(f\odot g))^*\;=\;(M_{_{g,\omega}}K_f)^*\;=\;K_{\overline{f}}M_{_{\overline{g},\omega}}
\]
one infers that $\pi_\omega(\s{S}^\star)\subset (\bb{C}_B\cdot\bb{A}_\Omega)_\omega$. Now, in view of
 Lemmas \ref{lemma:dens04}
and \ref{lemma:dens04-2} one has that $\s{F}^\star$ is dense in $\bb{A}_\Omega\rtimes_B\R^2$.
Moreover $\pi_\omega(\s{F}^\star)$ turns out to be dense in $(\bb{C}_B\cdot\bb{A}_\Omega)_\omega$
since the subspace $K(F(\R^2))$ is dense in $\bb{C}_B$ as discussed at the end of Remark \ref{rk:one-point}. Therefore,  both $\s{F}^\star$ or $\s{S}^\star$ meet the conditions in \eqref{con_theo}.
\qed

\begin{remark}\label{rk:sep_ide_C*}
The separability of $\s{L}^1_B$ implies that also $\bb{A}_\Omega\rtimes_B\R^2$, 
and in turn $\rr{A}_{B,\omega}$ are separable. Moreover, from  Remark \ref{rk:approx_ident} it follows that 
$\rr{A}_{B,\omega}$ has a sequential approximate identity given by the family 
of operators $I_{\omega,n}:=\pi_\omega(\imath_n)$ acting as
$$
(I_{\omega,n}\varphi)(x)\;=\;\left(\frac{n}{2}\right)^2\int_{Q_n+x}\dd y\; \Theta_B(x,y)\;\varphi(y)
$$
where $Q_n+x$ denotes the cube $Q_n$ translated by the vector $x$.
\hfill $\blacktriangleleft$
\end{remark}

\medskip

There is a second relevant representation of the twisted crossed product $\bb{A}_\Omega\rtimes_B\R^2$ on the Hilbert space $L^2(\Omega\times\R^2)$ given by the product measure on $\Omega\times\R^2$. For a given 
$F\in \s{L}^1_B$  one can define an operator $\pi(F)$ acting on  $\Phi\in L^2(\Omega\times\R^2)$ according to the prescription
\begin{equation}\label{eq:c01-ññ_MM}
  (\pi(F)\Phi)(\omega,x)\;:=\;\frac{1}{2\pi\ell^2}\int_{\R^2}\dd y\; F\left(\rr{t}_{x}(\omega),
    x-y\right)\;\Theta_B(x,y)\;\Phi(\omega,y)\
\end{equation}
 Exactly as above, one can check that $\pi$ defines a $\ast$-representation of the Banach $\ast$-algebra $\s{L}^1_B$, and therefore extends to a representation of the {twisted crossed product} $\bb{A}_\Omega\rtimes_B\R^2$.
We will refer to $\pi$ as the \emph{integrated left-regular representation}.

\medskip

In order to understand the relation between the representation $\pi$ and the family of representations $\{\pi_\omega\}_{\omega\in\Omega}$ it is convenient to look at the identification 
$\s{H}_\Omega\simeq L^2(\Omega\times\R^2)$ with the direct integral  
introduced in Section \ref{Sec:pot}. By identifying $\Phi\in L^2(\Omega\times\R^2)$ with an element $\hat{\phi}\in \s{H}_\Omega$ by means of $\Phi(\omega,x)=\phi_\omega(x)$, one can read equation \eqref{eq:c01-ññ_MM} in the following form:
\[
(\pi(F)\hat{\phi})_\omega(x)\;=\;(\pi_\omega(F)\phi_\omega)(x)\;.
\]
This shows that $\pi(F)$ acts as a decomposable operator on $\s{H}_\Omega$, and
\[
\pi(F)\;=\;\int^\oplus_{\Omega}\dd\n{P}(\omega)\;\pi_\omega(F)\;.
\]
By density this result extend to the full algebra $\bb{A}_\Omega\rtimes_B\R^2$. Finally, a comparison with definition \eqref{eq:int_rep_1} and Theorem \ref{teo:ident1} provide that
\[
\pi(\bb{A}_\Omega\rtimes_B\R^2)\;=\;\rr{A}_{B,\Omega}\;.
\]

\begin{proposition}\label{prop:int_rep_pair}
The representation $\pi$ defined by \eqref{eq:c01-ññ_MM} is faithful and non-degenerate.
\end{proposition}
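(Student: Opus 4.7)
Non-degeneracy follows from the approximate identity $\{\imath_n\}_{n \in \N} \subset \s{L}_B^1$ described in Remark \ref{rk:approx_ident}. Substituting $\imath_n(\omega, z) = n^2 \frac{\pi\ell^2}{2} \chi_n(z)$ directly into \eqref{eq:c01-ññ_MM} one obtains, for every $\hat\phi = \{\phi_\omega\}_{\omega}\in \s{H}_\Omega$,
\[
\bigl(\pi(\imath_n)\hat\phi\bigr)_\omega(x) \;=\; \frac{n^2}{4} \int_{x+Q_n} \Theta_B(x, y)\, \phi_\omega(y) \dd y .
\]
Since $\Theta_B(x, x) = 1$ and the cubes $Q_n$ shrink to the origin, the Lebesgue differentiation theorem gives, for $\n{P}$-a.e.\ $\omega$ (so that $\phi_\omega \in L^2(\R^2) \subset L^1_{\mathrm{loc}}(\R^2)$), pointwise-a.e.\ convergence $(\pi(\imath_n)\hat\phi)_\omega(x) \to \phi_\omega(x)$. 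Combined with the uniform estimate $\|\pi(\imath_n)\| \leq \nnnorm{\imath_n}_1 = 1$ and dominated convergence, this upgrades to $\pi(\imath_n)\hat\phi \to \hat\phi$ in $\s{H}_\Omega$, showing that every vector lies in the closed linear span of $\pi(\bb{A}_\Omega \rtimes_B \R^2)\s{H}_\Omega$. Hence $\pi$ is non-degenerate.

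For faithfulness I exploit the direct-integral decomposition $\pi = \int^\oplus_\Omega \dd\n{P}(\omega)\, \pi_\omega$ established just before the statement of the Proposition. Because the operator norm of a decomposable operator coincides with the essential supremum of the fiber norms, any $F \in \bb{A}_\Omega \rtimes_B \R^2$ with $\pi(F)=0$ satisfies $\pi_\omega(F) = 0$ for $\n{P}$-a.e.\ $\omega$. Approximating $F$ by $F_n \in \s{L}_B^1$ in the universal norm and using the contractive bound $\|\pi_\omega(F_n - F)\| \leq \|F_n - F\|_{\rm u}$, which is uniform in $\omega$, I can pass to the limit in $n$ and conclude that $\pi_\omega(F) = 0$ actually holds for \emph{every} $\omega \in \Omega$.

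The remaining step reduces the problem to proving that a single $\pi_{\omega_0}$ is faithful on $\bb{A}_\Omega \rtimes_B \R^2$, and this is where I expect the main technical difficulty to lie. The plan is to select $\omega_0$ with dense orbit---the existence of such a point under the standing hypotheses (ergodicity plus $\supp\n{P}=\Omega$) follows from a standard Birkhoff-type argument applied to a countable basis of open sets of the metrizable space $\Omega$---and to identify $\pi_{\omega_0}$ with the left-regular representation of the twisted crossed product associated with the covariant pair $(\hat\pi_{\omega_0}, V)$ on $L^2(\R^2) \simeq L^2(\R^2,\C)$. Since $\R^2$ is amenable and $\hat\pi_{\omega_0}$ is faithful on $\bb{A}_\Omega$ by Proposition \ref{prop:faith1}, the general theory of twisted crossed products by amenable groups \cite{packer-raeburn-89,williams-07} then forces the regular representation $\pi_{\omega_0}$ to be faithful on the full $C^*$-algebra, so $\pi_{\omega_0}(F)=0$ implies $F=0$ and the proof concludes.
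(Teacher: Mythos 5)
Your treatment of the two halves should be assessed separately.

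\textbf{Non-degeneracy.} Your approach via the explicit approximate identity $\{\imath_n\}$ is more elementary than the paper's, which simply cites \cite[Theorem 3.3]{busby-smith-70} after rewriting $\pi$ as the integrated form of the representing pair $(\pi,\rr{U})$ (eq.~\eqref{eq:rep_int}). The idea is sound, but the justification is not quite a proof: pointwise-a.e.\ convergence $(I_{\omega,n}\phi_\omega)(x)\to\phi_\omega(x)$ together with the uniform bound $\|I_{\omega,n}\|\leqslant1$ does \emph{not} yield $L^2(\R^2)$-convergence of the fiber --- you have no pointwise dominating function in $x$. The standard fix is to first prove $I_{\omega,n}\psi\to\psi$ in $L^2(\R^2)$ for $\psi\in C_c(\R^2)$ (where uniform continuity handles both the shrinking cube and the phase $\Theta_B$) and then extend by density using the uniform bound $\|I_{\omega,n}\|\leqslant1$, after which dominated convergence over $\Omega$ closes the argument. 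Routine, but missing as written.

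\textbf{Faithfulness.} This is where the proposal has a genuine gap, and the route differs substantially from the paper's. Two problems:

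(i) The ``upgrade from a.e.\ $\omega$ to every $\omega$'' is not achieved by the approximation argument you give. Replacing $F$ by $F_n\in\s{L}^1_B$ and passing to the limit in $n$ leaves the set $\{\omega : \pi_\omega(F)=0\}$ unchanged; it does not enlarge it. What one actually needs is that $\omega\mapsto\|\pi_\omega(F_n)\|$ is continuous for $F_n\in\s{K}_0$, hence $\omega\mapsto\|\pi_\omega(F)\|$ is continuous as a uniform limit, and then $\supp\n{P}=\Omega$ upgrades a.e.\ vanishing to everywhere vanishing. Neither continuity nor full support appears in your argument.

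(ii) More seriously, the reduction to a single fiber does not close. You identify $\pi_{\omega_0}$ with ``the left-regular representation associated with the covariant pair $(\hat\pi_{\omega_0},V)$'' (it should be $U$, not $V$, but set that aside) and claim that amenability of $\R^2$ plus faithfulness of $\hat\pi_{\omega_0}$ on $\bb{A}_\Omega$ force faithfulness on the crossed product. This is not a valid implication. The representation $\pi_{\omega_0}$ on $L^2(\R^2)$ is the \emph{integrated form of a covariant pair}, not the regular representation induced from $\hat\pi_{\omega_0}$: the latter would live on $L^2(\R^2,L^2(\R^2))$. Amenability identifies the full and reduced crossed products, where the reduced norm is computed from a regular representation built on a \emph{faithful} representation of the coefficient algebra. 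A covariant pair $(\rho,U)$ with $\rho$ faithful on $A$ need not yield a faithful representation of $A\rtimes G$ even for amenable $G$ (already visible for $A=\C$, $G=\Z$: $C^*(\Z)\simeq C(\T)$ has many non-faithful covariant representations). Whether $\pi_{\omega_0}$ is faithful for a transitive $\omega_0$ is a genuine question about the induced representation $\mathrm{Ind}(\mathrm{ev}_{\omega_0})$ and touches the Effros--Hahn circle of ideas; it is not a soft consequence of amenability, and you have not supplied the argument.

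By contrast, the paper avoids the detour entirely: it shows by a change of variables that $\pi(F)=\frac{1}{2\pi\ell^2}\int\dd y\;\pi(F_y)\rr{U}(y)$, identifies this as the integrated regular representation of the representing pair $(\pi,\rr{U})$ (where $\pi|_{\bb{A}_\Omega}$ is faithful by Proposition~\ref{prop:rep_pair}), and then invokes \cite[Corollary 7.7.8]{pedersen-79}, which gives faithfulness of this regular representation directly from amenability. That is a one-step argument once the integral representation is in place; you should either reproduce it or supply the missing theory behind faithfulness of $\pi_{\omega_0}$.
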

\proof
The change $y\mapsto x-y$ of variable in  \eqref{eq:c01-ññ_MM} provides 
$$
\begin{aligned}
(\pi(F)\hat{\phi})_\omega(x)\;&=\;\frac{1}{2\pi\ell^2}\int_{\R^2}\dd y\; F\left(\rr{t}_{x}(\omega),
    y\right)\;\Theta_B(y,x)\;\phi_\omega(x-y)\\
    &=\;\frac{1}{2\pi\ell^2}\int_{\R^2}\dd y\; F\left(\rr{t}_{x}(\omega),
    y\right)\;(\rr{U}(y)\hat{\phi})_\omega(x)\;
\end{aligned}
$$
where $\rr{U}(y)$ is the direct magnetic translation introduced at the end of Section \ref{Sec:pot}.
By considering $F:\R^2\to \bb{A}_\Omega$ as a the map which associate to $y\in\R^2$ the element $F_y:=F(\cdot, y)\in\bb{A}_\Omega$,
then one can interpret $F\left(\rr{t}_{x}(\omega),
    y\right)\psi_\omega(x)$ as $(\pi_\omega(F_y)\psi_\omega)(x)$ 
for every $\omega\in\Omega$. By passing at the integrated representation one gets
 that
\[
(\pi(F)\hat{\phi})_\omega(x)\;=\;\left(\frac{1}{2\pi\ell^2}\int_{\R^2}\dd y\;\pi(F_y)\rr{U}(y)\hat{\phi}\right)_\omega(x)
\]
for every $\hat{\phi}\in\s{H}_\Omega$, and in turn 
\begin{equation}\label{eq:rep_int}
\pi(F)\;=\;\frac{1}{2\pi\ell^2}\int_{\R^2}\dd y\;\pi(F_y)\rr{U}(y)\;.
\end{equation}
This shows that  $\pi(F)$ corresponds to the \emph{integrated (regular) representation} induced by the representing pair $(\pi,\rr{U})$ described in Remark \ref{rk_rep_pair}
according to \cite[Theorem 3.3]{busby-smith-70} (see also  \cite[Section 7.7]{pedersen-79}). In particular, in \cite[Theorem 3.3]{busby-smith-70} it is proven that ${\pi}$ is a  non-degenerate representation. Moreover, it immediately follows that
$$
\| \pi(F)\|\;\leqslant\; \frac{1}{2\pi\ell^2}\int_{\R^2}\dd y\; \|\pi(F_y)\|\;=\;\nnnorm{F}_1\;.
$$
The faithfulness of the representation ${\pi}$ is a consequence of  the amenability of the group $\R^2$ \cite[Corollary 7.7.8]{pedersen-79}. \qed


\section{The canonical trace}\label{sec:tr_u_v}
In order to introduce the canonical trace on the $C^*$-algebra $\rr{A}_{B,\Omega}$ we need first to consider the associated von Neumann algebra.
In fact, von Neumann algebras are the natural objects where to set a solid theory of traces.
The construction of the canonical trace described bellow follows the ideas of \cite{lenz-99} and is based on the fact that there is a natural \emph{Hilbert algebra structure} underlying  $\bb{A}_\Omega\rtimes_B\R^2$ (Appendix \ref{sec_Hilb_str}).

\medskip

Let us introduce the \emph{von Neumann
algebra of perturbed magnetic operators} as the bicommutant of $\rr{A}_{B,\Omega}$, \ie
\[
\rr{M}_{B,\Omega}\;:=\;\rr{A}_{B,\Omega}''\;\subset\;\bb{B}(\s{H}_\Omega)\;.
\]
In view of the integral representation \eqref{eq:rep_int} one has the following characterization of $\rr{M}_{B,\Omega}$ \cite[Theorem 3.3 (3)]{busby-smith-70}:
\begin{proposition}\label{prop:caratVN1}
The von Neumann algebra $\rr{M}_{B,\Omega}$ coincides with the von Neumann 
subalgebra of $\bb{B}(\s{H}_\Omega)$ generated by the (full) algebra of potentials $\bb{P}_\Omega$ defined by \eqref{eq:full_pot} and by the family  $\{\rr{U}(a)\;|\; a\in\R^2\}$ of the (direct) magnetic translations defined by \eqref{eq:magn_tral_dir}.
\end{proposition}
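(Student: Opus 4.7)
\textbf{Proof plan for Proposition \ref{prop:caratVN1}.} The strategy is to prove the two inclusions between $\rr{M}_{B,\Omega}$ and the von Neumann algebra $\rr{N}$ generated by $\bb{P}_\Omega \cup \{\rr{U}(a)\mid a\in\R^2\}$. The \emph{easy} direction is $\rr{M}_{B,\Omega}\subset\rr{N}$: thanks to the integral formula \eqref{eq:rep_int}, for every $F\in\s{L}^1_B$ the operator
\[
\pi(F)\;=\;\frac{1}{2\pi\ell^2}\int_{\R^2}\dd y\;\pi(F_y)\,\rr{U}(y)
\]
is a Bochner integral whose integrand $\pi(F_y)\,\rr{U}(y)$ takes values in $\rr{N}$, because $\pi(F_y)=M_{F_y}\in\bb{P}_\Omega$ and $\rr{U}(y)\in\rr{N}$. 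Since $\rr{N}$ is norm-closed, $\pi(F)\in\rr{N}$. By the density of $\s{L}^1_B$ in $\bb{A}_\Omega\rtimes_B\R^2$ and the norm continuity of $\pi$, this gives $\rr{A}_{B,\Omega}=\pi(\bb{A}_\Omega\rtimes_B\R^2)\subset \rr{N}$, and taking bicommutants yields $\rr{M}_{B,\Omega}\subset \rr{N}''=\rr{N}$.

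The less trivial direction is $\rr{N}\subset \rr{M}_{B,\Omega}$, which amounts to showing that each $M_g\in \bb{P}_\Omega$ and each $\rr{U}(a)$ can be approximated, in the strong operator topology, by elements of $\rr{A}_{B,\Omega}$. For the potential part I would exploit the approximate identity $\{\imath_n\}\subset \s{L}^1_B$ from Remark \ref{rk:approx_ident}: given $g\in \bb{A}_\Omega$, consider $G_n(\omega,x):=g(\omega)\imath_n(\omega,x)\in\s{L}^1_B$. A short direct computation from \eqref{eq:c01-ññ_MM} shows that $\pi(G_n)=M_g\,\pi(\imath_n)$, and since $\pi(\imath_n)$ strongly converges to the identity on $\s{H}_\Omega$ (by the approximate identity property lifted fiberwise through \eqref{eq:c01-ññ}), one concludes that $\pi(G_n)\to M_g$ strongly, hence $M_g\in \rr{M}_{B,\Omega}$.

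For the magnetic translation $\rr{U}(a)$, I would consider instead the translated approximate identity $F_n(\omega,x):=\imath_n(\omega,x-a)$, which is again in $\s{L}^1_B$. Changing variables $z:=x-y-a$ in \eqref{eq:c01-ññ_MM}, one finds
\[
(\pi_\omega(F_n)\varphi)(x)\;=\;\frac{n^2}{4}\int_{Q_n}\dd z\;\Theta_B\bigl(x,x-a-z\bigr)\,\varphi(x-a-z)\;,
\]
and using $\Theta_B(x,x-a-z)=\expo{-\ii\,x\wedge(a+z)/(2\ell^2)}$ together with the continuity of the cocycle, the right-hand side converges pointwise to $\expo{-\ii\,x\wedge a/(2\ell^2)}\varphi(x-a)=(U(a)\varphi)(x)$ as $n\to\infty$. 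A dominated convergence argument upgrades this to strong convergence $\pi(F_n)\to \rr{U}(a)$, whence $\rr{U}(a)\in \rr{M}_{B,\Omega}$.

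The main obstacle is the control of the cocycle phase in the translation step: one must check that the oscillatory factor $\Theta_B(x,x-a-z)$ is uniformly close to $\Theta_B(x,x-a)$ for $z\in Q_n$ and passes to the limit against $\varphi\in L^2(\R^2)$, so the convergence is strong and not merely weak. Equicontinuity of the cocycle on compact sets, together with the $L^2$-continuity of translations, makes this routine but requires some care. All the remaining verifications (that $G_n, F_n\in \s{L}^1_B$, norm bounds, density, etc.) are direct applications of Lemmas \ref{lemma:dens04}, \ref{lemma:dens04-2} and Proposition \ref{prop:int_rep_pair}, and the overall scheme is the specialization to our twisted setting of the standard argument \cite[Theorem 3.3 (3)]{busby-smith-70}.
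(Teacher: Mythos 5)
Your proof is correct, but it takes a more hands-on route than the paper, which simply invokes \cite[Theorem 3.3 (3)]{busby-smith-70} once it has identified, via the integral formula \eqref{eq:rep_int}, that $\pi$ is the integrated regular representation induced by the representing pair $(\pi,\rr{U})$. In effect you re-derive the relevant part of the Busby--Smith theorem in the present setting. Both directions of your argument check out: the inclusion $\rr{M}_{B,\Omega}\subset\rr{N}$ from the integral formula, and the inclusion $\rr{N}\subset\rr{M}_{B,\Omega}$ by approximating $M_g$ and $\rr{U}(a)$ in the strong topology by $\pi(G_n)=M_g\pi(\imath_n)$ and $\pi(F_n)$ with $F_n(\omega,x)=\imath_n(\omega,x-a)$; your cocycle computation reduces the $\rr{U}(a)$-approximation to $\frac{n^2}{4}\int_{Q_n}\dd z\,\lVert U(z)\psi-\psi\rVert_{L^2}\to 0$, which does work by strong continuity of $U$ and dominated convergence over $\Omega$. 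The self-contained version is more instructive; the paper's version is more economical.

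One small imprecision worth fixing: in the first inclusion you argue ``Since $\rr{N}$ is norm-closed, $\pi(F)\in\rr{N}$,'' but the integral in \eqref{eq:rep_int} cannot be a Bochner integral in operator norm, because $y\mapsto\rr{U}(y)$ is only strongly, not norm, continuous. The integral must be read in the weak/strong sense, and the correct closure property to invoke is that $\rr{N}$, being a von Neumann algebra, is weakly closed. The cleanest way to conclude is to commute a generic $T\in\rr{N}'$ through the weak integral: $T\pi(F_y)\rr{U}(y)=\pi(F_y)\rr{U}(y)T$ for all $y$ gives $T\pi(F)=\pi(F)T$, hence $\pi(F)\in\rr{N}''=\rr{N}$. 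This does not change the overall scheme, only the justification of that step.
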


\medskip

As for the unperturbed cases where the relevant von Neumann algebra \eqref{int_VNA} is a commutant, we can shows that also $\rr{M}_{B,\Omega}$ is the commutant inside $\bb{B}(\s{H}_\Omega)$ of certain symmetries.
For that we need a lift of the (dual) magnetic translations $V(a)$'s  on
the direct integral $\s{H}_\Omega$. For every $a\in\R^2$  and every  $\hat{\varphi}:=\{\varphi_\omega\}_{\omega\in\Omega}\in\s{H}_\Omega$, let $\rr{V}(a)$ be the
operator defined by
\begin{equation}\label{eq:magn_tral}
 \rr{V}(a)\hat{\varphi}\;=\;\{(\rr{V}(a)\hat{\varphi})_\omega\}_{\omega\in\Omega}\;=\;\{V(a)\varphi_{\rr{t}_{a}(\omega)}\}_{\omega\in\Omega}\;.
\end{equation}
 From the definition
it results that the operators $\rr{V}(a)$ do not preserve the fibers of
$\s{H}_\Omega$. In fact $\rr{V}(a)$ sends elements of the fiber over $\rr{t}_{a}(\omega)$
into elements of the fiber over $\omega $ according to the rule $
\varphi_{\rr{t}_{a}(\omega)}\mapsto(\rr{V}(a)\hat{\varphi})_\omega$ given by
\begin{equation}\label{eq:magn_tral-LL}
(\rr{V}(a)\hat{\varphi})_\omega(x)\;:=\;\expo{\ii\frac{x\wedge a}{2\ell^2}}\varphi_{\rr{t}_{a}(\omega)}(x-a)\;,\qquad x\in\R^2\;,\quad \omega\in\Omega\;.
\end{equation}
In particular, a comparison between~\eqref{eq:magn_tral}
and~\eqref{eq:rand_op} shows that the operators $\rr{V}(a)$ are
not  decomposable. However, in view of the
invariance of the measure $\n{P}$ one can check that the $\rr{V}(a)$'s are
still unitary. Moreover, they meet  the composition rule
\begin{equation}\label{eq:magn_tral2}
 \rr{V}(a)\rr{V}(b)\;=\;\expo{\ii\frac{b\wedge a}{2\ell^2}}\;\rr{V}(a+b)\;,\qquad\quad a,b\in\R^2\;,
\end{equation}
providing a projective unitary representation of
the group $\R^2$ on the Hilbert space $\s{H}_\Omega$. 
The covariance condition \eqref{eq:cov_cond_V} translate as $\rr{V}(a)\pi(g)\rr{V}(a)^*=\pi(g)$ for every $a\in\R^2$ and $g\in\bb{A}_\Omega$. Moreover, in view of the commutation relations between the $U(a)$'s and the $V(b)$'s one infers that $\rr{U}(a)\rr{V}(b)=\rr{V}(b)\rr{U}(a)$ for every $a,b\in\R^2$.
Let $\bb{V}_\Omega:=C^*(\{\rr{V}(a)\;|\;a\in\R^2\})$ be the $C^*$-algebra generated by the unitaries $\rr{V}(a)$'s and consider the commutant
\begin{equation}\label{int_VNA_Omega}
\bb{V}'_\Omega\;=\;\{A\in\bb{B}(\s{H}_\Omega))\;|\; AB-BA=0\;,\;\;\forall\; B\in\bb{V}_\Omega \}\:.
\end{equation}
\begin{proposition}
It holds true that
$\rr{M}_{B,\Omega}\;=\;\bb{V}'_\Omega \cap \bb{D}(\s{H}_\Omega)
$.
\end{proposition}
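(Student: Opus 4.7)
The plan is to prove the two inclusions separately.

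The easy inclusion $\rr{M}_{B,\Omega}\subseteq\bb{V}'_\Omega\cap\bb{D}(\s{H}_\Omega)$ I would check on generators, using Proposition \ref{prop:caratVN1}. Both $\pi(g)$ and $\rr{U}(a)$ are manifestly decomposable by construction (the first has fibers $M_{g,\omega}$, the second the constant fiber $U(a)$), and both commute with every $\rr{V}(b)$: the identity $\rr{V}(b)\pi(g)\rr{V}(b)^*=\pi(g)$ was recorded just above the proposition, and the fiberwise relation $U(a)V(b)=V(b)U(a)$ from the end of Section \ref{Sec:pot} lifts to $\rr{U}(a)\rr{V}(b)=\rr{V}(b)\rr{U}(a)$ on the direct integral. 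Since $\bb{V}'_\Omega\cap\bb{D}(\s{H}_\Omega)$ is itself a von Neumann algebra (intersection of two), it contains the von Neumann algebra generated by these generators.

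For the reverse inclusion $\bb{V}'_\Omega\cap\bb{D}(\s{H}_\Omega)\subseteq\rr{M}_{B,\Omega}$, I would proceed via the bicommutant theorem $\rr{M}_{B,\Omega}=(\rr{A}_{B,\Omega}')'$. Let $\s{Z}:=\bb{D}(\s{H}_\Omega)'$; under the identification $\s{H}_\Omega\simeq L^2(\Omega,\n{P})\otimes L^2(\R^2)$ this is the \emph{diagonal algebra} $\{M_f\otimes{\bf 1}\,:\,f\in L^\infty(\Omega,\n{P})\}$. Given $A\in\bb{V}'_\Omega\cap\bb{D}(\s{H}_\Omega)$, the decomposability condition is equivalent to $A$ commuting with $\s{Z}$, while $A\in\bb{V}'_\Omega$ means $A$ commutes with $\bb{V}_\Omega$. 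Thus $A$ commutes with the von Neumann algebra $\s{Z}\vee\bb{V}_\Omega''$. The whole reduction will work provided one establishes the identification
\[
\rr{A}_{B,\Omega}'\;=\;\s{Z}\vee\bb{V}_\Omega''\;,
\]
since then $A\in(\rr{A}_{B,\Omega}')'=\rr{A}_{B,\Omega}''=\rr{M}_{B,\Omega}$.

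The inclusion $\supseteq$ in the displayed identity is a routine check on generators: $\s{Z}$ commutes with each $\pi(g)$ (a decomposable multiplication operator whose fiber is a function of $x$ for each fixed $\omega$) and with each $\rr{U}(a)$ (which acts trivially on the $\omega$-variable), while commutation of $\bb{V}_\Omega$ with the generators of $\rr{A}_{B,\Omega}$ was recorded in the first paragraph. The substantive step, and the main obstacle, is the reverse inclusion $\rr{A}_{B,\Omega}'\subseteq\s{Z}\vee\bb{V}_\Omega''$: this is the classical duality saying that the commutant of the integrated left-regular representation of a twisted $C^*$-dynamical system by an amenable group is generated by the diagonal scalars and the right-regular action. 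I would deduce it from the Busby--Smith framework \cite{busby-smith-70}, exploiting the amenability of $\R^2$ as already used in Proposition \ref{prop:int_rep_pair}, or equivalently derive it from the Hilbert algebra structure underlying $\rr{A}_{B,\Omega}$ following \cite{lenz-99} (see Appendix \ref{sec_Hilb_str}). A concrete alternative would be to apply any $B\in\rr{A}_{B,\Omega}'$ to the sequential approximate identity $\{I_{\omega,n}\}$ of Remark \ref{rk:sep_ide_C*} and invoke the density statement of Lemma \ref{lemma:F_dens} to reduce the analysis to simple tensors $f\odot g$, for which the commutant can be computed explicitly in terms of $\s{Z}$ and $\bb{V}_\Omega$.
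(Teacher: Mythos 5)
Your proposal is correct and follows essentially the same route as the paper: the easy inclusion by checking generators via Proposition \ref{prop:caratVN1}, and the hard inclusion by reducing to the commutation statement $\rr{A}_{B,\Omega}'\subseteq\s{Z}\vee\bb{V}_\Omega''$, which is exactly what the paper establishes in Lemmas \ref{lemma_str1} and \ref{lemma_str2} (via the unitary $J$, the left/right algebras $\rr{L},\rr{R}$, and the Hilbert-algebra commutation theorem $\rr{R}'=\rr{L}$). Your phrasing in terms of $(\rr{A}_{B,\Omega}')'$ and the diagonal algebra $\s{Z}$ is a clean repackaging of the paper's Corollary \ref{corol:commut}, and your acknowledgement that the key step follows from the Lenz-style Hilbert-algebra analysis (Appendix \ref{sec_Hilb_str}) is precisely how the paper closes the gap.
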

\proof
Since the intersection of von Neumann algebras is a von Neumann algebra \cite[Part II, Chapter 1, Proposition 1]{dixmier-81}, one gets that
the intersection of $\bb{V}'_\Omega$ and $\bb{D}(\s{H}_\Omega)$ is a von Neumann algebra. 
Moreover, form Proposition \ref{prop:caratVN1} one infers that $\rr{M}_{B,\Omega}\subseteq\bb{V}'_\Omega \cap \bb{D}(\s{H}_\Omega)$. 
To prove the opposite inclusion (see Corollary \ref{corol:commut}), and in turn the equality, one needs to explore  more in depth the structure of the Banach algebra $\s{L}^1_B$ and its 
integrated left-regular representation $\pi$. 
\qed

\medskip

In Section \ref{sec:cross_prod} it has been shown
  that the function space $\s{K}_0=C_c(\Omega\times\R^2)$ has the structure of a normed $\ast$-algebra with the operations defined by 
\eqref{eq:sub_alg_struct0} and \eqref{eq:sub_alg_struct1}. Indeed $\s{K}_0$ has a richer structure which is inherited from the natural inclusion of $\s{K}_0$ in the 
Hilbert space 
$L^2(\Omega\times\R^2)$.
More explicitly one can endow $\s{K}_0$ 
with the inner product
$\langle\langle\;,\;\rangle\rangle_{0}:\s{K}_0\times\s{K}_0\to\C$ defined 
 by
\begin{equation}\label{eq:inner_prod}
 \langle\langle
  F_1,F_2\rangle\rangle_{0}\;:=\; \frac{1}{2\pi\ell^2}\int_{\R^2}\dd x\int_{\Omega}\dd\n{P}(\omega)\;
  \overline{F_1(\omega,x)}F_2(\omega,x)
  \end{equation}
for all $F_1,F_2\in \s{K}_0$. The $\ast$-algebra $\s{K}_0$ endowed with 
this structure turns out to be a \emph{Hilbert algebra} according to the definition in \cite[Part I, Chapter 5]{dixmier-81} (see also \cite{nakano-50,kervin-69,rieffel-69}).
\begin{proposition}\label{prop:hilb_alg_ext}
The quadruple $(\s{K}_0,  \star, ^\star, \langle\langle\;,\;\rangle\rangle_{0})$ is a Hilbert algebra, namely:
  \begin{itemize}
  \item[(i)] $\langle\langle F_1,F_2\rangle\rangle_{0}=\langle\langle
    F_2^\star,F_1^\star\rangle\rangle_{0}$ for all $F_1,F_2\in \s{K}_0$;\vspace{1mm}
    
  \item[(ii)] $\langle\langle G\star F_1 ,F_2\rangle\rangle_{0}=\langle\langle
    F_1, G^\star\star F_2\rangle\rangle_{0}$ for all $F_1,F_2, G\in \s{K}_0$;\vspace{1mm}

  \item[(iii)] The set $\{F\star G\ |\ F,G\in \s{K}_0\}$ is dense in $L^2(\Omega\times\R^2)$;\vspace{1mm}
    
  \item[(iv)] For all $G\in \s{K}_0$ the linear maps $F\mapsto F\star G$ and $F\mapsto G\star F$ are continuous with respect to the topology induced by the inner product.
 \end{itemize}

  \end{proposition}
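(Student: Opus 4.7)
The four axioms split naturally: (i) and (ii) are algebraic identities for the inner product, (iv) is an $L^2$-boundedness statement, and (iii) is a density statement. I would verify (i) and (ii) by direct calculation from the definitions \eqref{eq:sub_alg_struct0}, \eqref{eq:sub_alg_struct1}, \eqref{eq:inner_prod}, making repeated use of Fubini together with the invariance of $\dd\n{P}$ under the $\R^2$-action $\rr{t}$ to cancel the $\omega$-shifts introduced by the involution and the twisted convolution. Axiom (iv) should follow from a Schur/Young-type estimate built on the compact support of elements of $\s{K}_0$, and axiom (iii) from the (smooth version of the) approximate identity of Remark \ref{rk:approx_ident}.

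For (i) the computation is essentially one line: expanding $F^\star(\omega,x)=\overline{F(\rr{t}_{-x}(\omega),-x)}$ turns the right-hand side into
\[
\frac{1}{2\pi\ell^2}\int_{\R^2}\dd x\int_\Omega \dd\n{P}(\omega)\;F_2(\rr{t}_{-x}(\omega),-x)\,\overline{F_1(\rr{t}_{-x}(\omega),-x)}\;,
\]
and the change $x\mapsto -x$ combined with $\omega\mapsto \rr{t}_{x}(\omega)$ (whose pushforward preserves $\dd\n{P}$) reproduces $\langle\langle F_1,F_2\rangle\rangle_0$. For (ii) the same recipe works, but with a triple integral in $(\omega,x,y)$ on each side; combining Fubini with the substitutions $y\mapsto -y$, $\omega \mapsto \rr{t}_{-y}(\omega)$, $x\mapsto x-y$ one matches the two integrands up to the phase identity $\overline{\Theta_B(-y,x-y)}=\Theta_B(y,x)$, which is immediate from the antisymmetry of $\wedge$.

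For (iv), given $G\in\s{K}_0$ with $\supp G\subset \Sigma\times K$ compact and $M:=\|G\|_\infty$, the pointwise bound
\[
|(F\star G)(\omega,x)|\;\leqslant\;\frac{M}{2\pi\ell^2}\bigl(|F(\omega,\cdot)|\ast \chi_K\bigr)(x)
\]
combined with the classical Young inequality $\|u\ast \chi_K\|_{L^2(\dd x)}\leqslant |K|\,\|u\|_{L^2(\dd x)}$ and an integration in $\dd\n{P}$ produces $\|F\star G\|_{L^2}\leqslant \frac{M|K|}{2\pi\ell^2}\|F\|_{L^2}$; left-multiplication by $G$ is handled identically, modulo one further use of the invariance of $\dd\n{P}$ to remove the $\rr{t}_{-y}(\omega)$-shift that appears in $G\star F$. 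For (iii), replacing the characteristic function $\chi_n$ by a smooth bump places the approximate identity $\{\imath_n\}$ inside $\s{K}_0$; for any $F\in \s{K}_0$ uniform continuity and the concentration of $(2\pi\ell^2)^{-1}\imath_n$ at the origin yield $\imath_n\star F\to F$ uniformly and with support in a fixed compact set, hence in $L^2$, and since $\s{K}_0$ is already $L^2$-dense this shows that $\s{K}_0\star\s{K}_0$ is $L^2$-dense. The main nuisance I expect is the bookkeeping of the cocycle phases in (ii) and of the $\omega$-shifts in (iv), including keeping the order of the substitutions consistent; conceptually none of the steps is deep.
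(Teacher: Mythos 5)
Your argument is correct, and for items (i), (ii) and (iii) it follows essentially the same route the paper sketches (direct computations with Fubini and invariance of the measure for (i)--(ii), a smooth approximate identity in $\s{K}_0$ for (iii)). The only substantive divergence is in item (iv). The paper establishes the operator bound in the sharper form
$\langle\langle F\star G,F\star G\rangle\rangle_{0}\leqslant\nnnorm{G}_1^2\langle\langle F,F\rangle\rangle_{0}$,
using the pointwise majorization $|G(\rr{t}_{-y}(\omega),x-y)|\leqslant \|G(x-y)\|_\infty$ and then Young's $L^1\ast L^2\subset L^2$, so that the constant $\nnnorm{G}_1$ depends only on the group-algebra norm of $G$; and it disposes of left multiplication by the adjoint trick $G\star F=(F^\star\star G^\star)^\star$ together with (i). You instead use the cruder majorization $|G(\rr{t}_{-y}(\omega),x-y)|\leqslant M\chi_K(x-y)$ with $M=\|G\|_\infty$, $K$ the spatial support of $G$, obtaining the weaker constant $M|K|/(2\pi\ell^2)\geqslant\nnnorm{G}_1$; and you handle $F\mapsto G\star F$ directly, which requires an extra Cauchy--Schwarz (or Minkowski) in $y$ followed by a change of variables $\omega\mapsto\rr{t}_y(\omega)$ to undo the shift. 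Both routes are valid; the paper's gives a bound that extends verbatim beyond $\s{K}_0$ to the whole $\s{L}^1_B$ (which is what one ultimately wants when passing to the left-regular representation), while yours is marginally more elementary but tied to the compact support. In the last step of your left-multiplication estimate, note that the bound you get after Cauchy--Schwarz and the $\omega$-shift involves $|K|^2$ rather than $|K|$; this is harmless for continuity but is worth making explicit, and the adjoint trick avoids the issue entirely.
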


\medskip

The proof of this result is technical and quite standard. 
For the benefit of reader a sketch of the proof is postponed to Appendix \ref{sec_Hilb_str}.

\medskip

With the help of Proposition \ref{prop:hilb_alg_ext} for every $F\in \s{K}_0$ we can define  two
bounded operators  on $L^2(\Omega\times\R^2)$ denoted with  $R_F$ and $L_F$. Initially these operators are defined for elements  $\Phi\in \s{K}_0$ as  $R_F\Phi:= \Phi\star F$ and $L_F\varphi:= F\star\Phi$. Then, in view of Proposition \ref{prop:hilb_alg_ext} (iv) they extend by continuity to unique bounded operators on $L^2(\Omega\times\R^2)$. The identification of 
$L^2(\Omega\times\R^2)$ with the direct integral $\s{H}_\Omega$ given by $\Phi(\omega,x)=\phi_\omega(x)$ allows us  to write
$$
\begin{aligned}
(R_F\hat{\phi})_\omega(x)\;&=\;\frac{1}{2\pi\ell^2}\int_{\R^2}\dd y\; \phi_\omega(y)\; F\left(\rr{t}_{-y}(\omega), x-y\right)\; \Theta_B(y,x)\;,\\
(L_F\hat{\phi})_\omega(x)\;&=\;\frac{1}{2\pi\ell^2}\int_{\R^2}\dd y\; F(\omega,y)\; \phi_{\rr{t}_{-y}(\omega)}\left(x-y\right)\;  \Theta_B(y,x)\;,
\end{aligned}
$$
for every $\hat{\phi}\in\s{H}_\Omega$. Moreover, an  adaptation of the usual Young's convolution inequality shows that they  are bounded by
$\nnnorm{F}_1$. The family of these operators 
generates two von Neumann algebras inside $\bb{B}(\s{H}_\Omega)$ defined by
$$
\begin{aligned}
\rr{R}\;:=\; \{R_F\ |\ F\in \s{K}_0\}'' \;,\qquad
\rr{L}\;:=\; \{L_F\ |\ F\in \s{K}_0\}'' \;.
\end{aligned}
$$
The commutation theorem states that $\rr{R}'=\rr{L}$ and consequently 
$\rr{L}'=\rr{R}$ \cite[Part I, Chapter 5, Theorem 1]{dixmier-81}. As the last ingredient let us consider the operator $J$ on $\s{H}_\Omega$ defined by
$(J\hat{\phi})_{\omega}(x)\;:=\;\phi_{\rr{t}_{x}(\omega)}(x)$.
A direct computation shows that $J$
is a unitary operator, \ie $J^*=J^{-1}$. Indeed,   from the invariance of the measure $\n{P}$ one obtains that $J^*$   acts pointwise as $(J^{\ast}\hat{\phi})_{\omega}(x):=\phi_{\rr{t}_{-x}(\omega)}(x)$, namely as the inverse of $J$. 

\medskip

The von Neumann algebras  $\rr{R}$, $\rr{L}$ and $\rr{M}_{B,\Omega}$ are related by the conditions
\[
J\rr{L}J^*\;=\;\rr{M}_{B,\Omega}\;,\qquad \bb{V}'_\Omega\cap\bb{D}(\s{H}_\Omega)\;\subseteq \;J\rr{R}'J^*\;,
\]
proved in Lemmas \ref{lemma_str1} and \ref{lemma_str2},
respectively. Using the fact that $\rr{R}'=\rr{L}$, one immediately gets
\begin{corollary}\label{corol:commut}
It holds true $\bb{V}'_\Omega \cap \bb{D}(\s{H}_\Omega)\subseteq \rr{M}_{B,\Omega}$.
\end{corollary}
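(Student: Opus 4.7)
The plan is essentially a one-line chain of the results collected immediately before the corollary. The proof of $\bb{V}'_\Omega \cap \bb{D}(\s{H}_\Omega) \subseteq \rr{M}_{B,\Omega}$ follows from combining three ingredients already in hand: the commutation theorem for the Hilbert algebra $\s{K}_0$ (Proposition \ref{prop:hilb_alg_ext}), which gives $\rr{R}' = \rr{L}$, together with the two inclusions recorded right before the statement, namely $J \rr{L} J^* = \rr{M}_{B,\Omega}$ (from Lemma \ref{lemma_str1}) and $\bb{V}'_\Omega \cap \bb{D}(\s{H}_\Omega) \subseteq J \rr{R}' J^*$ (from Lemma \ref{lemma_str2}).

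Concretely, I would start from the second inclusion, use $\rr{R}' = \rr{L}$ to rewrite $J \rr{R}' J^* = J \rr{L} J^*$, and then invoke the first identity to identify this last conjugate with $\rr{M}_{B,\Omega}$. The chain
\[
\bb{V}'_\Omega \cap \bb{D}(\s{H}_\Omega) \;\subseteq\; J\rr{R}'J^* \;=\; J\rr{L}J^* \;=\; \rr{M}_{B,\Omega}
\]
delivers the claim directly. All three links are either proven or assumed available, so no new calculation is required; the content of the corollary is exactly the extraction of this consequence.

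Since the converse inclusion $\rr{M}_{B,\Omega} \subseteq \bb{V}'_\Omega \cap \bb{D}(\s{H}_\Omega)$ has already been established in the preceding Proposition \ref{prop:caratVN1} (via the characterization of $\rr{M}_{B,\Omega}$ in terms of $\bb{P}_\Omega$ and the direct magnetic translations, together with the fact that the $\rr{V}(a)$'s commute with both), Corollary \ref{corol:commut} in fact upgrades to the announced equality $\rr{M}_{B,\Omega} = \bb{V}'_\Omega \cap \bb{D}(\s{H}_\Omega)$. The only conceivable obstacle in writing the corollary is purely bookkeeping: making sure that the conjugation by the unitary $J$ preserves commutants (which it does, since $(J A J^*)' = J A' J^*$ for any $*$-algebra $A \subseteq \bb{B}(\s{H}_\Omega)$) and that the commutation theorem of \cite[Part I, Ch.~5, Theorem 1]{dixmier-81} genuinely applies to $\s{K}_0$, which was the content of Proposition \ref{prop:hilb_alg_ext}. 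Neither step requires further computation.
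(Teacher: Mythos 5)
Your proof is correct and matches the paper's argument exactly: the paper obtains the inclusion by chaining $\bb{V}'_\Omega \cap \bb{D}(\s{H}_\Omega) \subseteq J\rr{R}'J^*$ (Lemma \ref{lemma_str2}), the commutation theorem $\rr{R}'=\rr{L}$, and $J\rr{L}J^* = \rr{M}_{B,\Omega}$ (Lemma \ref{lemma_str1}), just as you do. Your closing remark on the reverse inclusion via Proposition \ref{prop:caratVN1} is accurate and explains how the corollary closes the equality proposition that precedes it.
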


\medskip

We are now in position to use the Hilbert algebra structure in order to endowed
the von Neumann algebra $\rr{M}_{B,\Omega}$
 with  a
distinguished trace $\tau_{\n{P}}$ called \emph{trace per unit
  volume}. We will follow here the construction of $\tau_{\n{P}}$
proposed in~\cite{lenz-99} and \cite{kervin-69}. However, we need to generalize
this construction in order to include  the twist
induced by the 2-cocycle $\Theta_B(x,y)$.

\medskip

For the sake of notation clarity, let us introduce the space $\rr{K}_{B,\Omega}=\pi(\s{K}_0)$. Then every element $K_F\in \rr{K}_{B,\Omega}$ has a kernel $F\in \s{K}_0$ such that $K_F=\pi(F)$.
By means of the Hilbert algebra structure of $\s{K}_0$ one can associate to $K_F$ the number
\begin{equation}\label{eq:tra_un_vol_01}
\tau_{\n{P}}(K_F^*K_F)\;:=\;\langle\langle F,F\rangle\rangle_{0}
\end{equation}
where the  right-hand side  is defined by \eqref{eq:inner_prod}. We will show that the prescription \eqref{eq:tra_un_vol_01} uniquely
specifies a densely-defined trace on the von Neumann algebra $\rr{M}_{B,\Omega}$. For that, we need one more definition.
\begin{definition}[Operator with $L^2$-kernel]\label{def:oK}
Let $G\in L^2(\Omega\times\R^2)$ and consider the linear operator on $\s{H}_\Omega$ defined by
\begin{equation}\label{eq:L2-kern01}
 (K_G\hat{\phi})_{\omega}(x)\;:=\;\frac{1}{2\pi\ell^2}\int_{\R^2}\dd y\; G\left(\rr{t}_{x}(\omega),
    x-y\right)\; \Theta_B(x,y)\;\phi_\omega(y)\;
\end{equation}
for suitable $\hat{\phi}\in\s{H}_\Omega$. The function $G$ is called the \emph{kernel} of $K_G$ and the set of operators with kernel will be denoted by $\rr{Z}_{B,\Omega}$.
\end{definition}

\medskip

The subset of bounded operators with $L^2$-kernel will be denoted with
\begin{equation}\label{N-id}
\rr{N}_{B,\Omega}\;:=\;\rr{Z}_{B,\Omega}\;\cap\; \bb{B}(\s{H}_\Omega)\;.
\end{equation}
\begin{proposition}
It holds true that
\[
\rr{K}_{B,\Omega}\;\subset\; \rr{N}_{B,\Omega}\;\subset\; \rr{M}_{B,\Omega}\;.
\]
Moreover, $\rr{N}_{B,\Omega}$ is
a weakly dense two-sided self-adjoint  ideal in $\rr{M}_{B,\Omega}$.  
\end{proposition}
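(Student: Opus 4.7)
The plan is to verify the four assertions in the order: (i) $\rr{K}_{B,\Omega} \subset \rr{N}_{B,\Omega}$, (ii) $\rr{N}_{B,\Omega} \subset \rr{M}_{B,\Omega}$, (iii) $\rr{N}_{B,\Omega}$ is a self-adjoint two-sided ideal, and (iv) weak density. The first inclusion is immediate: any $F\in\s{K}_0 = C_c(\Omega\times\R^2)$ is continuous and compactly supported, hence $F\in L^2(\Omega\times\R^2)\cap\s{L}^1$, and the integrated representation $\pi$ of Section \ref{sec:cross_prod} guarantees that $K_F=\pi(F)$ is bounded with $\|\pi(F)\|\leqslant\nnnorm{F}_1$. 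Thus $K_F\in\rr{N}_{B,\Omega}$.

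For the second inclusion I would use the characterization $\rr{M}_{B,\Omega}=\bb{V}'_\Omega\cap\bb{D}(\s{H}_\Omega)$ established earlier in Section \ref{sec:tr_u_v}. Given $K_G$ bounded with $G\in L^2(\Omega\times\R^2)$, two things need to be checked: first, that $K_G\in\bb{D}(\s{H}_\Omega)$, which is transparent from \eqref{eq:L2-kern01} because the action of $K_G$ preserves $\omega$-fibers and the measurability of $\omega\mapsto\langle\psi_\omega, K_G\phi_\omega\rangle_{L^2}$ follows from Fubini applied to $G\in L^2$; second, that $K_G$ commutes with every $\rr{V}(a)$. The latter is a direct computation: expanding both $(\rr{V}(a)K_G\hat\phi)_\omega(x)$ and $(K_G\rr{V}(a)\hat\phi)_\omega(x)$ via \eqref{eq:magn_tral-LL} and \eqref{eq:L2-kern01}, changing variables $y\mapsto y+a$ in one of the two integrals, the cocycle identity for $\Theta_B$ together with the covariance $\rr{t}_{x-a}\circ\rr{t}_a=\rr{t}_x$ forces the phase factors to match, yielding $\rr{V}(a)K_G=K_G\rr{V}(a)$.

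The self-adjointness of $\rr{N}_{B,\Omega}$ follows from the identity $K_G^* = K_{G^\star}$, with $G^\star(\omega,x):=\overline{G(\rr{t}_{-x}(\omega),-x)}$; since the map $(\omega,x)\mapsto(\rr{t}_{-x}(\omega),-x)$ preserves the product measure $\dd\n{P}\otimes\dd x$, the involution $G\mapsto G^\star$ is isometric on $L^2(\Omega\times\R^2)$, so $K_G^*\in\rr{N}_{B,\Omega}$. For the two-sided ideal property the cleanest route is through the Hilbert algebra structure of Proposition \ref{prop:hilb_alg_ext}: the commutation theorem provides $\rr{R}'=\rr{L}$, the identification $J\rr{L}J^*=\rr{M}_{B,\Omega}$ (Lemma \ref{lemma_str1}) then identifies $\rr{N}_{B,\Omega}$ with the $J$-conjugate of the standard ideal of bounded left-multiplication operators associated to $L^2$-vectors, which is a two-sided $*$-ideal in any left Hilbert algebra von Neumann algebra. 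Concretely, for $A\in\rr{M}_{B,\Omega}$ and $K_G\in\rr{N}_{B,\Omega}$, the kernel of $AK_G$ is obtained by applying the right multiplier $J^*AJ\in\rr{R}$ to the $L^2$-vector $G$, producing another $L^2$-vector; the same works for $K_GA$. Finally, weak density is a consequence of $\rr{K}_{B,\Omega}=\pi(\s{K}_0)\subset\rr{N}_{B,\Omega}$, together with the facts that $\s{K}_0$ is $\nnnorm{\;}_1$-dense in $\s{L}^1_B$ (hence norm-dense in $\bb{A}_\Omega\rtimes_B\R^2$ by Lemma \ref{lemma:dens04}), so $\rr{K}_{B,\Omega}$ is norm-dense in $\rr{A}_{B,\Omega}$, and thus weakly dense in $\rr{M}_{B,\Omega}=\rr{A}_{B,\Omega}''$.

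The main obstacle is the ideal property, since for a generic $A\in\rr{M}_{B,\Omega}$ there is no direct kernel formula for the product $AK_G$. I would bypass this via the Hilbert algebra reformulation above, which packages the analytic content into the commutation theorem and the conjugation by $J$; the alternative, approximating $A$ strongly by operators $\pi(F_n)$ with $F_n\in\s{K}_0$ and using the explicit convolution identity $\pi(F_n)K_G=K_{F_n\star G}$, requires uniform $L^2$-control on the sequence of kernels $F_n\star G$, which is delicate without invoking the very Hilbert algebra machinery we already have at hand.
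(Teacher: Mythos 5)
Your proposal is correct in substance and lands on the same core idea as the paper, but it reaches the conclusion less economically and has one small slip in the ideal argument.

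The paper's proof is shorter because it establishes a single identification at the outset: using the calculation of Lemma \ref{lemma_str1}, one checks that for $G\in L^2(\Omega\times\R^2)$ with $K_G$ bounded, the operator $Y_G:=J^*K_GJ$ satisfies $Y_G\hat{\phi}=R_\Phi\hat{\psi}^G$, i.e.\ $G$ is a left-bounded element in the sense of Dixmier. Therefore $\rr{N}_{B,\Omega}=J\rr{L}_{\rm b}J^*$, and since $\rr{L}_{\rm b}\subset\rr{L}$ is a weakly dense two-sided self-adjoint ideal of $\rr{L}$ by \cite[Part I, Chapter 5]{dixmier-81}, the inclusion $\rr{N}_{B,\Omega}\subset J\rr{L}J^*=\rr{M}_{B,\Omega}$, the self-adjointness, and the two-sided ideal property all follow in one stroke. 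Your version proves the inclusion $\rr{N}_{B,\Omega}\subset\rr{M}_{B,\Omega}$ and the closure under adjoints by independent direct computations (commutation with $\rr{V}(a)$, the identity $K_G^*=K_{G^\star}$ and the $L^2$-isometry of $G\mapsto G^\star$), which are correct but become redundant the moment you invoke the Hilbert-algebra identification for the ideal property; that identification already delivers both.

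One correction to the ideal step: you write that ``the kernel of $AK_G$ is obtained by applying the right multiplier $J^*AJ\in\rr{R}$ to the $L^2$-vector $G$.'' Since $A\in\rr{M}_{B,\Omega}=J\rr{L}J^*$, the conjugate $T:=J^*AJ$ lies in $\rr{L}$, not $\rr{R}$. The relevant fact is precisely that $T\in\rr{L}=\rr{R}'$, so $T$ commutes with all $R_\Phi$; this yields $TY_G\Phi=T(R_\Phi\hat{\psi}^G)=R_\Phi(T\hat{\psi}^G)$, showing that $T\hat{\psi}^G$ is again left-bounded with $L^2$-kernel $TG$, and hence $AK_G=K_{TG}\in\rr{N}_{B,\Omega}$. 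The intended mechanism is the same, but the attribution to $\rr{R}$ rather than $\rr{L}$ would not let the computation go through.
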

\proof
The first inclusion is obvious. 
For the rest of the proof we will follow the arguments of \cite[Proposition 2.1.6 (a)]{lenz-99}.
Let  $G\in L^2(\Omega\times\R^2)$ 
such that $K_G\in \rr{N}_{B,\Omega}$. 
Observe that $\s{K}_0=C_c(\Omega\times\R^2)$ can be identified with a dense subspace of $\s{H}_\Omega\simeq L^2(\Omega\times\R^2)$ via the map $\s{K}_0\ni \Phi\mapsto \hat{\phi}\in \s{H}_\Omega$ given by ${\phi}_\omega(x):=\Phi(\omega,x)$. Similarly $\psi^G_\omega(x):=G(\omega,x)$
identifies $G$ with an element of $\s{H}_\Omega$.
Following \cite[Part I, Chapter 5, Definition 1]{dixmier-81}  let us recall that $G$ is \emph{left-bounded} if it exists a bounded operator ${Y}_G\in \bb{B}(\s{H}_\Omega)$ such that $Y_G \hat{\phi}=R_\Phi\hat{\psi}^G$ for every $\Phi\in\s{K}_0$. The set of left-bounded elements is denoted with $\rr{L}_{\rm b}$. In view of \cite[Part I, Chapter 5, Propositions 2 \& Theorem 3]{dixmier-81} one has that $\rr{L}_{\rm b}\subset \rr{L}$ is a two-sided self-adjoint  ideal. The same calculation in the proof of Lemma \ref{lemma_str1} shows that
\[
J^*K_GJ\hat{\phi}\;=\;G\star\Phi\;=\;R_\Phi\hat{\psi}^G\;,
\]
namely $Y_G:=J^*K_GJ$. Then one gets that $\rr{N}_{B,\Omega}=J\rr{L}_{\rm b}J^*\subset J\rr{L}J^*=\rr{M}_{B,\Omega}$ where the last equality is proved in Lemma \ref{lemma_str1}. Finally, $\rr{N}_{B,\Omega}$ is 
 weakly dense since it contains $\rr{K}_{B,\Omega}$ which is weakly dense.
\qed

\begin{remark}\label{rk:L2-ker}
It is worth 
focusing one the inclusion in definition \eqref{N-id}. Depending on the nature of the measure space $(\Omega,\n{P})$ there are operators with $L^2$-kernels that are unbounded. For instance, let us consider the function $G(\omega, x)=g(\omega)\psi_{j,k}(x)$ with $g\in L^2(\Omega)$ and $\psi_{n,m}$ the Laguerre function \eqref{eq:lag_pol}.
One has that $G\in L^2(\Omega\times\R^2)$.
With the argument in Proposition \ref{prop:com_teo1} and Remark \ref{rk:one-point} one infers that the associated operator $K_G$ acts fiberwise on $\s{H}_\Omega$ as the product $M_{g,\omega}\Upsilon_{j\mapsto k}$. More explicitly let us consider vectors  
$\hat{\phi}\in \s{H}_\Omega$  of the form $\phi_\omega(x)=\gamma(\omega)\psi(x)$ with $\gamma\in L^2(\Omega)$ and $\psi\in L^2(\R^2)$. Then a direct computation shows that
\[
(K_G\hat{\phi})_\omega(x)\;:=\;(M_{g,\omega}\Upsilon_{j\mapsto k}\phi_\omega)(x)\;=\;
g(\rr{t}_{x}(\omega))\gamma(\omega)(\Upsilon_{j\mapsto k}\psi)(x)\;.
\]
It follows that, if $g$ is unbounded in the sense that there are   $\gamma\in L^2(\Omega)$ 
such that $\gamma g \notin L^2(\Omega)$, then the operator $K_G$ is unbounded. On the ther hand, in the case $\Omega=\{\ast\}$ and $B\neq0$ corresponding to purely magnetic operators,   the \emph{magnetic Young’s inequalities} 
imply that $\rr{N}_{B}=\rr{Z}_{B}$ (we just omitted the redundancy $\{\ast\}$ from the symbols), namely all the magnetic operators with $L^2$-kernel are automatically bounded. This is discused in detail in \cite[Section 2.4]{denittis-sandoval-00}. 
\hfill $\blacktriangleleft$
\end{remark}

We have now all the ingredients to introduce the relevant trace 
on $\rr{M}_{B,\Omega}$.

\begin{theorem}\label{theo_trac}
There exists a unique normal trace $\tau_{\n{P}}$ on $\rr{M}_{B,\Omega}$ such that 
\begin{equation}\label{eq:def_trace}
\tau_{\n{P}}(A^*A)\;:=\;\langle\langle F_A,F_A\rangle\rangle_{0}\;,\qquad\quad A\in \rr{I}_{B,\Omega}
\end{equation}
where $F_A\in L^2(\Omega\times\R^2)$ denotes the $L^2$-kernel of $A$ and the domain of 
 $\tau_{\n{P}}$ is the dense two-sided self-adjoint  ideal
 $$
\rr{I}_{B,\Omega}\;:=\;\{C=A^*B\;|\; A,B\in \rr{N}_{B,\Omega}\}\;\subset\;\rr{N}_{B,\Omega}\;.
 $$
Moreover
 $$
\tau_{\n{P}}(C)\;=\;\langle\langle F_A,F_B\rangle\rangle_{0}\;<\;\infty \;,\qquad\quad C\;=\;A^*B\;\in\; \rr{I}_{B,\Omega}
 $$
 where $F_A$ and $F_B$ are the kernels of $A$ and $B$, respectively. The trace $\tau_{\n{P}}$ is faithful and semi-finite.
\end{theorem}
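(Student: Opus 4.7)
The plan is to reduce the theorem to the classical Hilbert-algebra theory of Dixmier and then transport the resulting canonical trace to $\rr{M}_{B,\Omega}$ via the unitary $J$ from Lemma \ref{lemma_str1}. The groundwork has already been laid: Proposition \ref{prop:hilb_alg_ext} establishes $(\s{K}_0, \star, {}^\star, \langle\langle\,,\,\rangle\rangle_0)$ as a Hilbert algebra, and the preceding proposition essentially identifies $\rr{N}_{B,\Omega}$ with the image under $J$ of the left-bounded elements of the associated left von Neumann algebra $\rr L$.

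First, I would invoke \cite[Part I, Chapter 6, Theorem 1]{dixmier-81} applied to the Hilbert algebra $\s{K}_0$. This yields a unique faithful normal semi-finite trace $\tau_{\rr L}$ on $\rr L$, whose ideal of definition is generated by the left-bounded elements $\rr L_{\rm b}$ and whose value on positive squares is $\tau_{\rr L}(Y_F^* Y_F) = \|F\|^2_{L^2(\Omega\times\R^2)} = \langle\langle F, F \rangle\rangle_0$, where $Y_F \in \rr L$ denotes the bounded operator associated to a left-bounded $F \in \s{H}_\Omega \simeq L^2(\Omega\times\R^2)$.

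Second, since Lemma \ref{lemma_str1} provides a $*$-isomorphism $\rr L \to \rr{M}_{B,\Omega}$ via conjugation by $J$, the formula
\begin{equation*}
\tau_{\n P}(A) \;:=\; \tau_{\rr L}(J^* A J)\;, \qquad A \in \rr{M}_{B,\Omega}^+\;,
\end{equation*}
extended by linearity, automatically produces a faithful normal semi-finite trace on $\rr{M}_{B,\Omega}$, with ideal of definition $J\, \rr{I}_{\rr L}\, J^*$. To match this with $\rr{I}_{B,\Omega}$ I would invoke the identification $J^* K_G J = Y_{\hat\psi^G}$, which was essentially proved in the argument preceding Definition \ref{def:oK} (via the equality $J^* K_G J \hat\phi = R_\Phi \hat\psi^G$). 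This gives $J^* \rr{N}_{B,\Omega} J = \rr L_{\rm b}$, hence $J^* \rr{I}_{B,\Omega} J$ coincides exactly with $\rr{I}_{\rr L}$. The kernel formula \eqref{eq:def_trace} then follows directly, and the polarization identity supplies the extension $\tau_{\n P}(A^* B) = \langle\langle F_A, F_B \rangle\rangle_0$.

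Uniqueness is obtained by noting that any normal trace on $\rr{M}_{B,\Omega}$ obeying \eqref{eq:def_trace} transports under $J$ to a normal trace on $\rr L$ agreeing with $\tau_{\rr L}$ on $Y_F^* Y_F$ for all $F \in \rr L_{\rm b}$, which forces equality by the uniqueness clause of the Hilbert-algebra theorem. The main technical obstacle I anticipate is bookkeeping around the $2$-cocycle twist $\Theta_B$: the construction in \cite{lenz-99} is untwisted, and one must verify that the Hilbert-algebra machinery — in particular the equality $\rr L' = \rr R$, the definition of left-bounded elements, and the identification $J^* K_G J = Y_{\hat\psi^G}$ — remains valid once $\Theta_B$ is present. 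Since the twist enters only through the structure constants of the multiplication $\star$ and adjoint ${}^\star$, and these are already accounted for in Proposition \ref{prop:hilb_alg_ext}, this adaptation is largely routine, but the explicit verification is where the bulk of the work lies.
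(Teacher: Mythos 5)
Your proposal is correct and follows essentially the same route as the paper: both invoke Dixmier's Hilbert-algebra theorem \cite[Part I, Chapter 6, Theorem 1]{dixmier-81} to obtain a faithful normal semi-finite trace on $\rr{L}$ and then transport it to $\rr{M}_{B,\Omega}$ via conjugation by the unitary $J$ from Lemma \ref{lemma_str1}, using the identification $\rr{N}_{B,\Omega}=J\rr{L}_{\rm b}J^*$ established in the preceding proposition. The only cosmetic difference is that the paper cites \cite[Lemma 2.2.1]{lenz-99} for uniqueness while you derive it directly from the uniqueness clause of the Hilbert-algebra theorem, but these amount to the same argument.
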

\proof
The result follows directly from \cite[Part I, Chapter 6, Theorem 1]{dixmier-81} which establishes the existence of a  
faithful, semi-finite, and normal trace $\theta$ on the von Neumann algebra $\rr{L}$. Then, by using the isomorphism $\rr{M}_{B,\Omega}=J\rr{L}J^*$
one can define the trace $\tau_{\n{P}}$ by $\tau_{\n{P}}(A):=\theta(J^*AJ)$
for $A$ in the corresponding domain. The uniqueness of $\tau_{\n{P}}$  is proved in \cite[Lemma 2.2.1]{lenz-99}. The inclusion $\rr{I}_{B,\Omega}\subset\rr{N}_{B,\Omega}$ follows since $\rr{N}_{B,\Omega}$ is an ideal.
\qed

\medskip

The trace $\tau_{\n{P}}$ is usually called \emph{trace per unit volume}. The reason of the name is justified in Appendix \ref{sec:tra_UV}.

\begin{remark}
It is worth observing that the trace $\tau_{\n{P}}$, as specified by the 
prescription
\eqref{eq:def_trace}, is completely determined by continuity by the  the prescription \eqref{eq:tra_un_vol_01} on $\rr{K}_{B,\Omega}$. 
By \cite[Part I, Chapter 5, Proposition 4]{dixmier-81} and in view of the isomorphism $\rr{N}_{B,\Omega}=J\rr{L}_{\rm b}J^*$  for any $A\in \rr{I}_{B,\Omega}$ there is a sequence $\{K_{F_n}\}_{n\in\N}\subseteq \rr{K}_{B,\Omega}$ such that $K_{F_n}\to A$ strongly and $\|K_{f_n}\|<M$
for some positive constant $M>0$. Moreover
$$
\lim_{n\to\infty}\tau_{\n{P}}\big((K_{F_n}-A)^*(K_{F_n}-A)\big)\;=\;\lim_{n\to\infty}\langle\langle F_n-F_A,F_n-F_A\rangle\rangle_{0}\;=\;0\;
$$
where $F_A$ is the $L^2$-kernel of $A$. From 
$$
\begin{aligned}
\tau_{\n{P}}(K_{F_n}^*K_{F_n}-A^*A)\;=\;&\tau_{\n{P}}\big((K_{F_n}-A)^*(K_{F_n}-A)\big)\\
&+\;\tau_{\n{P}}\big(A^*(K_{F_n}-A)\big)\;+\;
\tau_{\n{P}}\big((K_{F_n}^*-A^*)A\big)\;,
\end{aligned}
$$
and the application of  \cite[Lemma 3.2.14]{denittis-lein-book}, one obtains
$$
\tau_{\n{P}}(A^*A)\;=\;\lim_{n\to\infty}\tau_{\n{P}}(K_{F_n}^*K_{F_n})\;.
$$
Then, the prescription \eqref{eq:tra_un_vol_01}  fixes uniquely by continuity the prescription \eqref{eq:def_trace} which in turn fixes uniquely the normal trace $\tau_{\n{P}}$.
\hfill $\blacktriangleleft$
\end{remark}

The computation of the trace $\tau_{\n{P}}$ has a simple expression in terms of the integration of the integral kernel of regular element. Let $A\in \rr{K}_{B,\Omega}$ be a positive element with kernel $F_A\in\s{K}_0$. In view of the positivity, one has that $A=B^*B$ with $B\in\rr{K}_{B,\Omega}$ and kernel $F_B\in\s{K}_0$. From \eqref{eq:def_trace} one deduces that
\[
\tau_{\n{P}}(A)\;=\;\langle\langle F_B,F_B\rangle\rangle_{0}\;=\;\frac{1}{2\pi\ell^2}\int_{\R^2}\dd x\int_{\Omega}\dd\n{P}(\omega)\;
 |F_B(\omega,x)|^2\;.
\]
On the other hand the integral kernel of $A$ is given by
\[
F_A(\omega,x)\;:=\;\frac{1}{2\pi\ell^2}\int_{\R^2}\dd y\; \overline{F_B(\rr{t}_{-y}(\omega),-y)}\; F_B\left(\rr{t}_{-y}(\omega), x-y\right)\;\Theta_B(y,x)  \;.
\]
A comparison between the last two expressions shows that
\begin{equation}\label{eq:sim_form}
\tau_{\n{P}}(A)\;=\;\int_{\Omega}\dd\n{P}(\omega)\; F_A(\omega,0)\;.
\end{equation}
Equation \eqref{eq:sim_form} can be then extended by linearity to generic elements of $\rr{K}_{B,\Omega}$. The same argument can be generalized to all the elements $A\in \rr{I}_{B,\Omega}$ such that $F_A\in C_0(\R^2,L^1(\Omega))$, where $C_0$ denotes the continuous functions which vanish at infinity (see \cite[Remark 2.16]{denittis-sandoval-00}). In particular, let $g\in \bb{A}_\Omega$ and $f\in S(\R^2)$ a Schwarz function. As showed in Proposition \ref{prop:com_teo1}, the function $f\odot g \in \s{S}$ is the kernel of the operator   
$M_g K_f\in \rr{A}_{B,\Omega}$. 
\begin{proposition}
It holds true that $\pi(\s{S})\subset \rr{I}_{B,\Omega}
$ and $\pi(\s{S}^\star)\subset \rr{I}_{B,\Omega}
$. Morever
\[
\tau_{\n{P}}(M_g K_f)\;=\;\left(\int_{\Omega}\dd\n{P}(\omega)\;g(\omega)\right)f(0)\;=\;\tau_{\n{P}}(K_fM_g)
\]
for every $g\in \bb{A}_\Omega$ and $f\in S(\R^2)$.
\end{proposition}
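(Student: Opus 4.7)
My plan is to reduce to simple tensors $f\odot g\in\s{S}$ by linearity, establish the $L^2$-kernel property, upgrade to membership in $\rr{I}_{B,\Omega}$ via an operator-algebraic factorization, and then read off the trace from \eqref{eq:sim_form}. The easy part is $\pi(\s{S}),\pi(\s{S}^\star)\subset\rr{N}_{B,\Omega}$: by the computation in the proof of Proposition \ref{prop:com_teo1}, the operator $\pi(f\odot g)=M_g K_f$ has kernel $F(\omega,x)=f(x)g(\omega)$ in the sense of Definition \ref{def:oK}, and $F\in L^2(\Omega\times\R^2)$ because $f\in S(\R^2)\subset L^2(\R^2)$ and $g\in\bb{A}_\Omega\subset L^2(\Omega,\n{P})$ (using $\n{P}(\Omega)=1$). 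Linearity and self-adjointness of $\rr{N}_{B,\Omega}$ yield both inclusions.

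The substantial step is to factor $K_f=K_h K_\phi$ with $h,\phi\in L^2(\R^2)$. I plan to invoke Remark \ref{rk:one-point}: expanding $f=\sum_{j,k} c_{j,k}\psi_{j,k}$ in the Laguerre basis, the correspondence $\Upsilon_{j\mapsto k}\sim|k\rangle\langle j|\otimes\mathbf{1}$ identifies $K_f$ (up to a constant) with the matrix $C=(c_{j,k})$ acting on $\ell^2(\N_0)$. For $f\in S(\R^2)$ the coefficients decay faster than any polynomial in $(j,k)$, so $C$ is trace class; its polar decomposition $C=U|C|$ together with $|C|=|C|^{1/2}|C|^{1/2}$ exhibits $C=C_1 C_2$ with $C_1=U|C|^{1/2}$ and $C_2=|C|^{1/2}$ both Hilbert-Schmidt. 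By the Laguerre correspondence these come from $h,\phi\in L^2(\R^2)$ with $K_f=K_h K_\phi$, and therefore
\[
M_g K_f\;=\;(M_g K_h)\,K_\phi\;=\;\bigl((M_g K_h)^*\bigr)^* K_\phi\;=\;A^* B,
\]
with $A=(M_g K_h)^*\in\rr{N}_{B,\Omega}$ (adjoint of the $L^2$-kernel operator of kernel $h\odot g$) and $B=K_\phi\in\rr{N}_{B,\Omega}$ (kernel $\phi\odot\mathbf{1}\in L^2$). This places $M_g K_f$ in $\rr{I}_{B,\Omega}$, and linearity together with self-adjointness of the ideal propagates the conclusion to $\pi(\s{S})$ and $\pi(\s{S}^\star)$.

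The trace values now follow by applying the extension of \eqref{eq:sim_form} to elements of $\rr{I}_{B,\Omega}$ whose kernel lies in $C_0(\R^2,L^1(\Omega))$. The kernel $f\odot g$ of $M_g K_f$ belongs to this class (continuity in $x$ is immediate and $f\in S(\R^2)\subset C_0(\R^2)$ gives the decay), and evaluating at $x=0$ gives $f(0)g(\omega)$, so
\[
\tau_{\n{P}}(M_g K_f)\;=\;\int_\Omega f(0)g(\omega)\,\dd\n{P}(\omega)\;=\;f(0)\int_\Omega g\,\dd\n{P}.
\]
For $K_f M_g$, a direct rewriting of \eqref{eq:c01-ññ_MM} identifies its kernel as $F'(\omega,x)=f(x)g(\rr{t}_{-x}(\omega))$, which again lies in $C_0(\R^2,L^1(\Omega))$ and satisfies $F'(\omega,0)=f(0)g(\omega)$, producing the same value; alternatively, once $K_f M_g\in\rr{I}_{B,\Omega}$ is established (by the symmetric factorization $K_f M_g=K_h(K_\phi M_g)$, noting that $K_\phi M_g$ has $L^2$-kernel $\phi(x)g(\rr{t}_{-x}(\omega))$), the trace property of $\tau_{\n{P}}$ immediately gives $\tau_{\n{P}}(K_f M_g)=\tau_{\n{P}}(M_g K_f)$. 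The main obstacle I expect is making the factorization step precise: verifying that the Laguerre coefficient matrix of a Schwartz function is trace class on $\ell^2(\N_0)$ and that its Hilbert-Schmidt factors translate back into genuine $L^2(\R^2)$ functions satisfying $K_f=K_h K_\phi$, for which the essential input is the $C^*$-isomorphism coming from Remark \ref{rk:one-point}.
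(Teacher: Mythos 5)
Your argument is correct and reaches the stated conclusion, but it follows a genuinely different route from the paper's at both key steps. For the factorization $K_f=K_h K_\phi$, the paper invokes \cite[Proposition 2.14]{denittis-sandoval-00}, which states that every Schwartz function decomposes as a magnetic convolution $f=f_1\star f_2$ of Schwartz functions, and proceeds with $f_1, f_2$ directly; you instead expand $f$ in the Laguerre basis, note that the coefficient matrix $C$ is trace class (from the rapid decay of Laguerre coefficients of Schwartz functions), and split $C$ into a product of Hilbert--Schmidt factors via polar decomposition, translating back to $L^2(\R^2)$ functions through the $C^*$-isomorphism of Remark~\ref{rk:one-point}. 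Both are valid; the paper's is a one-line citation while yours is more self-contained and clarifies that the factors need only lie in $L^2(\R^2)$, not in $S(\R^2)$. For the trace value, the paper computes $\tau_{\n{P}}(M_g K_f)=\langle\langle F_{A_1},F_{A_2}\rangle\rangle_0$ directly from the definition \eqref{eq:def_trace}, carrying out the double integral and using translation-invariance of $\n{P}$ together with $f(0)=(f_1\star f_2)(0)$; you instead appeal to the extension of the evaluation formula \eqref{eq:sim_form} to kernels in $C_0(\R^2, L^1(\Omega))$, which the paper asserts (with a reference) just before the Proposition but does not re-derive, so this part of your argument inherits a dependency that the paper's own proof avoids. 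For the equality $\tau_{\n{P}}(K_f M_g)=\tau_{\n{P}}(M_g K_f)$, the paper uses $\tau_{\n{P}}(A^*)=\overline{\tau_{\n{P}}(A)}$ together with $K_f M_g=(M_{\overline{g}}K_{f^\star})^*$; your alternatives --- either computing the kernel $f(x)g(\rr{t}_{-x}(\omega))$ of $K_f M_g$ and evaluating at $x=0$, or invoking tracial cyclicity once membership in $\rr{I}_{B,\Omega}$ is known --- are equally sound and arguably more direct.
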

\proof
In view of \cite[Proposition 2.14]{denittis-sandoval-00}, there are $f_1,f_2\in S(\R^2)$ such that
$f=f_1\star f_2$, and in turn
$K_f=K_{f_1}K_{f_2}$. Therefore $M_gK_f=A_1^* A_2$ with $A_1^*,A_2\in \rr{A}_{B,\Omega}$ are the operators with kernels
$F_{A_1^*}(\omega,x):=g(\omega)f_1(x)$ and $F_{A_2}(\omega,x):=f_2(x)$, respectively. 
Looking at the integrability of the kernels one gets that  $A_1^*, A_2\in \rr{N}_{B,\Omega}$. Since 
$\rr{N}_{B,\Omega}$ is a self-adjoint ideal one has  also $A_1\in \rr{N}_{B,\Omega}$ and in turn $M_gK_f\in \rr{I}_{B,\Omega}$. This proves $\pi(\s{S})\subset \rr{I}_{B,\Omega}
$. Since $\pi(\s{S}^\star)=\pi(\s{S})^*$ and $\rr{I}_{B,\Omega}$ is a self-adjoint ideal one also gets the second inclusion. The definition of the trace implies
\[
\begin{aligned}
\tau_{\n{P}}(M_g K_f)\;&=\;\langle\langle
  F_{A_1},F_{A_2}\rangle\rangle_{0}\\&=\; \frac{1}{2\pi\ell^2}\int_{\R^2}\dd x\; f_1(-x)f_2(x)\int_{\Omega}\dd\n{P}(\omega)\;
  g(\rr{t}_{-x}(\omega))\\
  &=\;\left(\int_{\Omega}\dd\n{P}(\omega)\;g(\omega)\right)(f_1\star f_2)(0)
  \end{aligned}
\]
where the last equality follows from the invariance of the measure $\n{P}$
and the explicit expression of the magnetic convolution $f=f_1\star f_2$. From the definition of the trace it follows that $\tau_{\n{P}}(A^*)=\overline{\tau_{\n{P}}(A)}$, for every $A\in\rr{I}_{B,\Omega}$. Since $K_fM_g=(M_{\overline{g}}K_{f^\star})^*$ one gets that
\[
\tau_{\n{P}}(K_fM_g)\;=\;\overline{\left(\int_{\Omega}\dd\n{P}(\omega)\;\overline{g(\omega)}\right)\overline{f(0)}}
\]
and this completes the proof.
\qed

\medskip

From the las result one infers that
\begin{equation}
\tau_{\n{P}}(M_g \Upsilon_{n\mapsto m})\;=\;\tau_{\n{P}}(\Upsilon_{n\mapsto m} M_g )\;=\;\delta_{n,m}
\int_{\Omega}\dd\n{P}(\omega)\;g(\omega)
\end{equation}
for every $g\in\bb{A}_\Omega$, in view of the fact that the kernel of  $\Upsilon_{n\mapsto m}$ is proportional to the Laguerre function $\psi_{n,m}$ and $\sqrt{2\pi} \ell\psi_{n,m}(0)=\delta_{n,m}$
(\cf Remark \ref{rk:one-point}). Finally, by using the fact that $\rr{I}_{B,\Omega}$ is an ideal, one gets that $\Upsilon_{j\mapsto k} M_g \Upsilon_{n\mapsto m}\in \rr{I}_{B,\Omega}$. Then, the cyclicity of the trace and the equality  $\Upsilon_{n\mapsto m}\Upsilon_{j\mapsto k}=\delta_{n,k}\Upsilon_{j\mapsto m}$ imply
\begin{equation}
\tau_{\n{P}}(\Upsilon_{j\mapsto k}M_g \Upsilon_{n\mapsto m})\;=\;\delta_{n,k}\delta_{j,m}
\int_{\Omega}\dd\n{P}(\omega)\;g(\omega)\;.
\end{equation}
%

\section{Relevant subspaces}
\label{sect:rel_subspaces}
In this section we will introduce the relevant subspace of $\rr{M}_{B,\Omega}$ where the main result described in Theorem \ref{theo:main_dix_eq_pot} applies.
Let us consider the spaces $\rr{L}^q_{B,\Omega}$ described in \eqref{eq:exp_op}.
In a similar way one can also  define $\rr{S}_{B,\Omega}$ as the space of formal sums associated with rapidly decaying (Schwartz) sequence $\{g_{n,m}\}\in S(\N_0^2, \bb{A}_\Omega)$. We will focus our attention on the spaces $\rr{S}_{B,\Omega}$, $\rr{L}^1_{B,\Omega}$ and $\rr{L}^2_{B,\Omega}$.

\medskip

We will start with a simple result.
\begin{proposition}
One has that 
\[
\rr{S}_{B,\Omega}\;\subset\;\rr{L}^1_{B,\Omega}\;\subset\;\rr{A}_{B,\Omega}\;\subset\; \rr{M}_{B,\Omega}\;.
\]
Moreover, $\rr{S}_{B,\Omega}$ and $\rr{L}^1_{B,\Omega}$ are norm-dense in $\rr{A}_{B,\Omega}$ and weakly dense in $\rr{M}_{B,\Omega}$.
\end{proposition}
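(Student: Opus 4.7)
The plan is to verify the chain of inclusions in sequence and then deduce density from the tools already developed in Sections \ref{sec:cross_prod} and \ref{sec:tr_u_v}.

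First I would observe that $\rr{S}_{B,\Omega}\subset\rr{L}^1_{B,\Omega}$ is immediate because any Schwartz sequence $\{g_{n,m}\}\in S(\N_0^2,\bb{A}_\Omega)$ is automatically $\ell^1$-summable. For the inclusion $\rr{L}^1_{B,\Omega}\subset\rr{A}_{B,\Omega}$, the key step is to show that each atomic block $\Upsilon_{n\mapsto m}M_{g_{n,m}}$ belongs to $\rr{A}_{B,\Omega}$. One notes that $M_{g_{n,m}}\Upsilon_{m\mapsto n}\in \rr{A}_{B,\Omega}$ fiberwise by Theorem \ref{teo:ident1} (combined with the direct-integral definition \eqref{eq:int_rep_1}), and since $\rr{A}_{B,\Omega}$ is closed under the $*$-operation the adjoint $\Upsilon_{n\mapsto m}M_{\overline{g}_{n,m}}$ lies there as well. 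The uniform estimate $\lVert\Upsilon_{n\mapsto m}M_{g_{n,m}}\rVert\leqslant \lVert g_{n,m}\rVert_\infty$ (valid because each $\Upsilon_{n\mapsto m}$ is a partial isometry) together with $\ell^1$-summability ensures that the defining series converges absolutely in operator norm to an element of the norm-closed algebra $\rr{A}_{B,\Omega}$. The inclusion $\rr{A}_{B,\Omega}\subset\rr{M}_{B,\Omega}$ is then trivial by the definition $\rr{M}_{B,\Omega}=\rr{A}_{B,\Omega}''$.

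For the density assertions, since $\rr{S}_{B,\Omega}\subset\rr{L}^1_{B,\Omega}$ it suffices to prove norm-density of $\rr{S}_{B,\Omega}$ in $\rr{A}_{B,\Omega}$. I would combine three ingredients already at our disposal: Lemma \ref{lemma:F_dens} gives $\nnnorm{\ }_1$-density of $\s{F}=F(\R^2)\odot\bb{A}_\Omega$ in $\s{L}^1_B$; Lemma \ref{lemma:dens04} promotes this to $\|\cdot\|_{\rm u}$-density in $\bb{A}_\Omega\rtimes_B\R^2$; and Lemma \ref{lemma:dens04-2} transfers density to the adjoint space $\s{F}^\star$. Applying the continuous integrated representation $\pi$ one obtains that $\pi(\s{F}^\star)$ is norm-dense in $\pi(\bb{A}_\Omega\rtimes_B\R^2)=\rr{A}_{B,\Omega}$. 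The explicit computation in the proof of Proposition \ref{prop:com_teo1}, together with the identification $K_{\psi_{j,k}}=(\sqrt{2\pi}\ell)^{-1}\Upsilon_{j\mapsto k}$ from Remark \ref{rk:one-point}, shows that a generic element of $\pi(\s{F}^\star)$ is a \emph{finite} linear combination of operators of the form $\Upsilon_{n\mapsto m}M_{h_{n,m}}$ with $h_{n,m}\in\bb{A}_\Omega$; such a finitely-supported element clearly lies in $\rr{S}_{B,\Omega}$, establishing the norm-density.

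Finally, weak-density in $\rr{M}_{B,\Omega}$ follows at once: by the von Neumann bicommutant theorem $\rr{A}_{B,\Omega}$ is weakly dense in $\rr{A}_{B,\Omega}''=\rr{M}_{B,\Omega}$, and therefore any subset that is norm-dense (hence \emph{a fortiori} weakly dense) in $\rr{A}_{B,\Omega}$ inherits weak density in $\rr{M}_{B,\Omega}$. I do not anticipate any genuinely hard step here; the only point requiring care is the accurate tracking of adjoints through the twisted convolution $\star$ in order to confirm that $\pi(\s{F}^\star)$ consists precisely of finite sums with $\Upsilon$'s on the left and $M$'s on the right, which is the shape demanded by the definition of $\rr{L}^1_{B,\Omega}$.
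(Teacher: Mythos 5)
Your proof is correct and follows essentially the same route as the paper's: the chain of inclusions is handled by the same $S(\N_0^2)\subset\ell^1(\N_0^2)$ observation and the norm estimate $\lVert\Upsilon_{n\mapsto m}M_{g_{n,m}}\rVert\leqslant\lVert g_{n,m}\rVert_\infty$, and density is deduced from $\pi(\s{F}^\star)\subset\rr{S}_{B,\Omega}$ via Proposition \ref{prop:com_teo1} together with Lemmas \ref{lemma:dens04} and \ref{lemma:dens04-2}. You merely spell out details (absolute convergence in the norm-closed algebra, the bicommutant step, and the adjoint bookkeeping) that the paper's very terse proof leaves implicit.
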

\proof
The first inclusion follows from $S(\N_0^2)\subset \ell^1(\N_0^2)$.
The density follows from the inclusion $\pi(\s{F}^\star)\subset \rr{S}_{B,\Omega}$ and the proof of Proposition \ref{prop:com_teo1}. The boundedness of elements $A\in\rr{L}^1_{B,\Omega}$ follows from the inequality
\[
\|A\|\;\leqslant\;\sum_{(n,m)\in\N_0^2}\|\Upsilon_{n\mapsto m}M_{g_{n,m}}\|
\;\leqslant\;\sum_{(n,m)\in\N_0^2}\|g_{n,m}\|_\infty\;<\;+\infty
\]
obtained by using $\|\Upsilon_{n\mapsto m}\|=1$ and  $\|M_{g_{n,m}}\|=\|g_{n,m}\|_\infty$.
\qed

\medskip

Let us focus now on the space $\rr{L}^2_{B,\Omega}$. From 
$\ell^1(\N_0^2)\subset \ell^2(\N_0^2)$ it follows that 
$\rr{L}^1_{B,\Omega}\subset \rr{L}^2_{B,\Omega}$. However, the relevant point here is to prove that  elements of $\rr{L}^2_{B,\Omega}$ are bounded.  Let $A\in \rr{L}^2_{B,\Omega}$ and consider its kernel given by
\begin{equation}\label{eq:Ker_AA}
F_A(\omega,x)\;:=\;\sqrt{2\pi} \ell\sum_{(n,m)\in\N_0^2}g_{n,m}(\omega)\psi_{n,m}(x)\;.
\end{equation}
By using  the fact that the Laguerre functions are an orthonormal basis of $L^2(\R^2)$, one obtains
\[
\|F_A\|^2_{L^2(\Omega\times \R^2)}\;=\;\sum_{(n,m)\in\N_0^2}\|g_{n,m}\|^2_{L^2(\Omega)}\;\leqslant\;\sum_{(n,m)\in\N_0^2}\|g_{n,m}\|^2_\infty\;<\;+\infty,
\]
where  the inequality $\|g_{n,m}\|_{L^2(\Omega)}\leqslant \|g_{n,m}\|_\infty$
follows from the fact that $(\Omega,\n{P})$ is a probability space. The last computation shows that $F_{A}\in L^2(\Omega\times \R^2)$, and in turn  
one infers that $\rr{L}^2_{B,\Omega}\subset \rr{Z}_{B,\Omega}$, \ie
the elemnts of $\rr{L}^2_{B,\Omega}$ have  $L^2$-kernel. However, this fact by itself is not yet sufficient to prove the boundedness, and a finer analysis is needed.
\begin{lemma}\label{lem_cruc_trac}
It holds true that $\rr{L}^2_{B,\Omega}\subset\rr{N}_{B,\Omega}$.
\end{lemma}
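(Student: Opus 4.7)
The preceding discussion already places $F_A\in L^2(\Omega\times\R^2)$, hence $A\in\rr{Z}_{B,\Omega}$, so what remains is boundedness, after which $A\in\rr{Z}_{B,\Omega}\cap\bb{B}(\s{H}_\Omega)=\rr{N}_{B,\Omega}$. My plan is to approximate $A$ by the truncations $A_N:=\sum_{n,m\leq N}\Upsilon_{n\mapsto m}M_{g_{n,m}}$, which lie in $\rr{L}^1_{B,\Omega}\subset\rr{A}_{B,\Omega}$ and are thus bounded. If I can establish a uniform operator bound $\sup_N\|A_N\|<\infty$, the same bound applied to the tails $A_N-A_{N'}$ (which are themselves of the same form with $\ell^2$-small coefficients) yields strong convergence $A_N\hat u\to A\hat u$, and the limit is a bounded operator whose $L^2$-kernel coincides with $F_A$.

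\textbf{Landau-level orthogonality.} Using the factorization $\Upsilon_{n\mapsto m}=\Upsilon_{0\mapsto m}\Upsilon_{n\mapsto 0}$ together with the algebra \eqref{eq:rel_alg}, I would rewrite $A_N=\sum_{m\leq N}\Upsilon_{0\mapsto m}Y_m^{(N)}$ with $Y_m^{(N)}:=\sum_{n\leq N}\Upsilon_{n\mapsto 0}M_{g_{n,m}}$ having range in $\Pi_0\s{H}_\Omega$. Since the $\Upsilon_{0\mapsto m}$ are partial isometries from $\Pi_0$ onto the mutually orthogonal Landau levels $\Pi_m$, this gives
$$
\|A_N\hat u\|^2\;=\;\sum_{m\leq N}\|Y_m^{(N)}\hat u\|^2.
$$

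\textbf{Diagonal Laguerre identity.} For each fixed $\omega$ and $m$, I would expand $(Y_m^{(N)}\hat u)_\omega=\sum_k\alpha_k^{(m,N,\omega)}\psi_{0,k}$ in the canonical basis of $\Pi_0 L^2(\R^2)$, obtaining $\alpha_k=\sum_{n\leq N}\langle\psi_{n,k},M_{g_{n,m},\omega}\hat u_\omega\rangle$. Computing $\sum_k|\alpha_k|^2$ and opening up the square brings in sums $\sum_k\overline{\psi_{n,k}(y)}\psi_{n',k}(z)$, which I recognize as the integral kernel of $\Upsilon_{n'\mapsto n}$ at $(y,z)$, namely $(\sqrt{2\pi}\ell)^{-1}\psi_{n',n}(y-z)\Theta_B(y,z)$. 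On the diagonal $y=z$ this reduces (via $\psi_{n,n'}(0)=\delta_{n,n'}/\sqrt{2\pi}\ell$) to the crucial identity
$$
\sum_k\overline{\psi_{n,k}(y)}\psi_{n',k}(y)\;=\;\frac{\delta_{n,n'}}{2\pi\ell^2},
$$
which decouples the $n$-$n'$ cross-terms. Together with a weighted Cauchy--Schwarz estimate on the residual $n$-sum, this should produce a uniform bound of the form $\sum_m\|Y_m^{(N)}\hat u\|^2\leq C\sum_{n,m}\|g_{n,m}\|_\infty^2\|\hat u\|^2$ and complete the proof.

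\textbf{Main obstacle.} The delicate step is extracting the full operator bound from the diagonal identity. The pointwise majorant $\sum_k|\psi_{n,k}(y)|^2=(2\pi\ell^2)^{-1}$ is uniform in $y$ and $n$, so splitting $|\alpha_k|^2$ termwise by Cauchy--Schwarz in $n$ and summing over $k$ loses control (the $n$-sum of the constant $(2\pi\ell^2)^{-1}$ diverges). The cancellation must come from pairing the $k$-summation directly with the off-diagonal kernel of $\Upsilon_{n'\mapsto n}$, so that the off-diagonal $n\ne n'$ terms are controlled by the smoothing effect of the Laguerre kernel rather than by termwise triangle bounds; organising this correctly---presumably via the frame-operator point of view, where the Schur--test estimate of $\|\sum_{k,m}|\bar H_{k,m,\omega}\rangle\langle\bar H_{k,m,\omega}|\|$ uses both the diagonal identity and the smoothness of $\psi_{n,n'}$---is where the real work of the proof lies.
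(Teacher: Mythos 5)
Your proposal stops short of a proof, and you say so yourself. The decomposition $\|A_N\hat u\|^2=\sum_m\|Y_m^{(N)}\hat u\|^2$ via the Landau-level orthogonality is correct, but the estimate $\|Y_m^{(N)}\|^2\leqslant C\sum_n\|g_{n,m}\|_\infty^2$ that one then needs is exactly the step you defer to ``the real work of the proof'', and the obstruction you identify (Cauchy--Schwarz in $n$ against the constant diagonal majorant $(2\pi\ell^2)^{-1}$ diverges) is genuine: for any choice of weights, the termwise Cauchy--Schwarz/Schur route produces an estimate that requires $\ell^1$ rather than $\ell^2$ summability of $\{\|g_{n,m}\|_\infty\}$ in the index $n$. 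The diagonal Laguerre identity $\sum_k\overline{\psi_{n,k}(y)}\psi_{n',k}(y)=\delta_{n,n'}/(2\pi\ell^2)$ is by itself too weak, and the frame/Schur refinement you gesture at is never made precise, so the lemma is not actually established.

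The ingredient the paper uses and you do not is the magnetic Young inequality (\cite[Section 2.4]{denittis-sandoval-00}): for $B\neq 0$ the magnetic operator with $L^2(\R^2)$-kernel $f$ is bounded, with $\|K_f\|\leqslant(\sqrt{2\pi}\ell)^{-1}\|f\|_{L^2(\R^2)}$. This is a genuinely analytic input about the twisted convolution and is not recoverable from the Laguerre combinatorics; it is precisely what collapses the problematic $n$-sum. The paper applies it fibrewise: $L_{F_A}=J^*AJ$ acts over each $\omega\in\Omega$ by magnetic convolution with $F_A(\omega,\cdot)=\sqrt{2\pi}\ell\sum_{(n,m)}g_{n,m}(\omega)\psi_{n,m}(\cdot)\in L^2(\R^2)$, whose norm squared is $G_A(\omega)=2\pi\ell^2\sum_{(n,m)}|g_{n,m}(\omega)|^2$ by Parseval, so $\|A\|=\|JL_{F_A}\|\leqslant(2\pi\ell^2)^{-1/2}\|G_A\|_\infty^{1/2}<\infty$ under the $\ell^2$ hypothesis. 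If you wish to keep your Landau-level decomposition, the same magnetic Young bound applied to the kernel $\sqrt{2\pi}\ell\sum_n g_{n,m}(\omega)\psi_{n,0}$ of $(Y_m)_\omega$ gives $\|(Y_m)_\omega\|\leqslant\big(\sum_n|g_{n,m}(\omega)|^2\big)^{1/2}$ and summing the squares over $m$ closes the argument; but some analytic input of this kind is indispensable at exactly the point you flag.
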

\proof
Let $A=\sum_{(n,m)\in\N_0^2}\Upsilon_{n\mapsto m}M_{g_{n,m}}$
an element in $\rr{L}^2_{B,\Omega}$ associated with the $\ell^2$-sequence $\{g_{n,m}\}$. 
The kernel of $A$ is given by \eqref{eq:Ker_AA} and we already know that $A\in \rr{Z}_{B,\Omega}$. Therefore, we only need to show that $A\in\bb{B}(\s{H}_\Omega)$.
Let us consider the function $G_A:\Omega\to[0,\infty)$ defined by
$G_A(\omega):=\|F_{A}(\omega,\cdot)\|^2_{L^2(\R^2)}$.
Using the fact that the Laguerre functions are an orthonormal basis of $L^2(\R^2)$ one immediately gets
\[
G_A(\omega)\;=\;2\pi \ell^2\sum_{(n,m)\in\N_0^2}|g_{n,m}(\omega)|^2\;.
\]
It follows form the  $\ell^2$-assumption for the  $g_{n,m}$ that $\|G_A\|_\infty<\infty$
and in turn one has that $G_A\in C(\Omega)$. This shows that $F_{A}$ can be interpreted as a continuous function on $\Omega$ with values of  $L^2(\R^2)$, \ie
\[
F_{A}\;\in\; C(\Omega,L^2(\R^2))\;\simeq\;C(\Omega)\otimes_\varepsilon L^2(\R^2)\;\subset\;L^2(\Omega\times \R^2)\;.
\]
where the  isomorphism is is in the sense of \cite[Theorem 44.1]{treves-67}. Let us consider the operator $L_{F_A}=J^*\pi(F_A)J=J^*AJ$, where we are abusing of the notations introduced in  
Section \ref{sec:tr_u_v} and using the computation provided in  Lemma \ref{lemma_str1}. Since $J$ is a unitary operator on $\s{H}_\Omega\simeq L^2(\Omega\times\R^2)$ it follows that 
$\|A\|=\|AJ\|=\|JL_{F_A}\|$. 
Looking at the definition of $L_{F_A}$ one realizes that, for each fixed $\omega\in\Omega$, the function $(JL_{F_A}\hat{\phi})_\omega:\R^2\to\C$ is the magnetic convolution between
the two functions $f_\omega,h_\omega\in L^2(\R)$ defined by $f_\omega(x):=F_A(\omega,x)$ 
and $h_\omega(x):=\phi_{\rr{t}_x(\omega)}(x)$, respectively.
In view of the \emph{magnetic Young’s inequalities} (see \cite[Section 2.4]{denittis-sandoval-00}), one obtains 
\[
\begin{aligned}
\|(JL_{F_A}\hat{\phi})_\omega\|_{L^2(\R^2)}\;&\leqslant\;\frac{1}{\sqrt{2\pi}\ell}\|f_{\omega}\|_{L^2(\R^2)}\;\|h_\omega\|_{L^2(\R^2)}\;\\
&=\;\frac{\sqrt{G_A(\omega)}}{\sqrt{2\pi}\ell}\;\|h_\omega\|_{L^2(\R^2)}\;.
\end{aligned}
\]
After integrating over $\Omega$ one obtains the following bound for the norm in $\s{H}_\Omega$. 
\[
\|JL_{F_A}\hat{\phi}\|\;\leqslant \sqrt{\frac{{\|G_A\|_\infty}}{{2\pi}\ell^2}}\;\|J\hat{\phi}\|\;=\;\sqrt{\frac{{\|G_A\|_\infty}}{{2\pi}\ell^2}}\;\|\hat{\phi}\|.
\]
This shows that $JL_{F_A}$, and in turn $A$ are bounded operators.
 \qed

\medskip

Lemma \ref{lem_cruc_trac} provides the crucial ingredient for studying the good properties of elements in   $\rr{L}^1_{B,\Omega}$ with respect the trace $\tau_{\n{P}}$.

\begin{proposition}\label{prop:_pre_trac_dix}
It holds true that   $\rr{L}^1_{B,\Omega}\subset \rr{I}_{B,\Omega}$. Moreover, if $A\in \rr{L}^1_{B,\Omega}$ is the operator associated with the sequence $\{g_{n,m}\}\in \ell^1(\N_0^2, \bb{A}_\Omega)$ then
\begin{equation}\label{eq_tra_SS}
\tau_{\n{P}}(A)\;=\;\sum_{n\in\N_0}\int_{\Omega}\dd\n{P}(\omega)\;g_{n,n}(\omega)\;.
\end{equation}
\end{proposition}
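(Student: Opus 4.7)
The plan is to split the argument into two tasks: first establish the inclusion $\rr{L}^1_{B,\Omega}\subset\rr{I}_{B,\Omega}$ through an explicit factorization $A=BC$ with $B,C\in\rr{N}_{B,\Omega}$; then compute $\tau_{\n{P}}(A)$ via the defining formula of the trace from Theorem \ref{theo_trac} and the $L^2$-orthonormality of the Laguerre basis.

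For the factorization, set $\sigma_k:=\sum_{n}\|g_{n,k}\|_\infty$, so the $\ell^1$-assumption yields $\sum_k\sigma_k<\infty$ and the trivial estimate $\|g_{n,k}\|_\infty\le\sigma_k$ gives $\sum_{n}\|g_{n,k}\|_\infty^{2}\le\sigma_k^{2}$. Choose real scalar weights $a_k:=\sigma_k^{1/2}$ on the indices with $\sigma_k>0$ (trivial columns contribute nothing to $A$). Then
\[
\sum_k a_k^2 \;=\;\sum_k\sigma_k\;<\;\infty, \qquad \sum_{n,k}\bigl\|g_{n,k}/a_k\bigr\|_\infty^{2}\;\le\;\sum_k\sigma_k\;<\;\infty,
\]
so that
\[
B\;:=\;\sum_{k}a_k\,\Upsilon_{k\mapsto k}\,,\qquad C\;:=\;\sum_{n,k}\Upsilon_{n\mapsto k}\,M_{g_{n,k}/a_k}
\]
are well-defined elements of $\rr{L}^2_{B,\Omega}$, hence of $\rr{N}_{B,\Omega}$ by Lemma \ref{lem_cruc_trac}. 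The composition rule \eqref{eq:rel_alg} in the form $\Upsilon_{k\mapsto k}\Upsilon_{n\mapsto k'}=\delta_{k,k'}\Upsilon_{n\mapsto k}$, together with the fact that each $a_k$ is scalar, yields $BC=\sum_{n,k}\Upsilon_{n\mapsto k}M_{g_{n,k}}=A$. Self-adjointness of $\rr{N}_{B,\Omega}$ combined with $B=B^*$ gives $A=B^*C\in\rr{I}_{B,\Omega}$.

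For the trace computation, Theorem \ref{theo_trac} provides $\tau_{\n{P}}(A)=\tau_{\n{P}}(B^*C)=\langle\langle F_B,F_C\rangle\rangle_0$, where by \eqref{eq:Ker_AA}
\[
F_B(\omega,x)=\sqrt{2\pi}\,\ell\sum_k a_k\,\psi_{k,k}(x),\qquad F_C(\omega,x)=\sqrt{2\pi}\,\ell\sum_{n,k}\frac{g_{n,k}(\omega)}{a_k}\,\psi_{n,k}(x).
\]
Substituting into the Hilbert algebra inner product \eqref{eq:inner_prod} and using the $L^2$-orthonormality relation $\int_{\R^2}\overline{\psi_{k,k}(x)}\psi_{n,k'}(x)\,dx=\delta_{k,n}\delta_{k,k'}$ collapses the quadruple sum to
\[
\langle\langle F_B,F_C\rangle\rangle_0\;=\;\int_\Omega\sum_k a_k\cdot\frac{g_{k,k}(\omega)}{a_k}\,d\n{P}(\omega)\;=\;\sum_{n\in\N_0}\int_\Omega g_{n,n}\,d\n{P},
\]
the exchange of sum and integral being justified by $\sum_n\|g_{n,n}\|_\infty<\infty$ and dominated convergence.

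The main technical subtlety is the rigorous justification of the identity $BC=A$ at the operator level: the series defining $B$ and $C$ converge only in the $L^2$-kernel sense coming from Lemma \ref{lem_cruc_trac}, so one must either check the formal index manipulation fibrewise on Laguerre test vectors in $\s{H}_\Omega$, or equivalently verify the Hilbert algebra identity $F_B\star F_C=F_A$ inside $L^2(\Omega\times\R^2)$ along the lines of Section \ref{sec:tr_u_v}.
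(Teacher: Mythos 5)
Your proof is correct and follows essentially the same strategy as the paper: factor $A=B^*C$ with $B,C\in\rr{L}^2_{B,\Omega}$, invoke Lemma \ref{lem_cruc_trac} to land in $\rr{N}_{B,\Omega}$, and compute $\tau_{\n{P}}(A)=\langle\langle F_B,F_C\rangle\rangle_0$ via orthonormality of the Laguerre basis. The only difference is the choice of weights: you take $a_k=\bigl(\sum_n\|g_{n,k}\|_\infty\bigr)^{1/2}$ whereas the paper uses $d_r=\bigl(\sup_n\|g_{n,r}\|_\infty\bigr)^{1/2}$; both choices yield the required $\ell^2$ estimates and the argument is otherwise identical, so this is a cosmetic variant rather than a genuinely different route. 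Your closing remark about justifying $BC=A$ at the operator level is a reasonable caution, but the paper performs the same formal index manipulation without further comment, relying implicitly on the norm estimates furnished by Lemma \ref{lem_cruc_trac} for convergence.
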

\proof
Let $A=\sum_{(n,m)\in\N_0^2}\Upsilon_{n\mapsto m}M_{g_{n,m}}$
an element in $\rr{L}^1_{B,\Omega}$. 
To prove that $A\in \rr{I}_{B,\Omega}$ one can use Lemma \ref{lem_cruc_trac} and the fact that 
 $A=S_1S_2$ for a pair of operators 
$S_1,S_2\in \rr{L}^2_{B,\Omega}$.
For that, let us set
\[
d_r\;:=\;\sup_{n\in\N_0}\left\{\sqrt{\|g_{n,r}\|_\infty}\right\}\;,\qquad h_r\;:=\;\left\{
\begin{aligned}
&d_r^{-1}&&\text{if}\;\; d_r>0\\
&0&&\text{if}\;\; d_r=0\;.
\end{aligned}
\right.
\]
 Consider the two expressions
\[
S_1\;:=\;\sum_{(r,m)\in\N_0^2}\delta_{r,m} d_r\Upsilon_{r\mapsto m}\;,\qquad S_2\;:=\;\sum_{(n,s)\in\N_0^2}\Upsilon_{n\mapsto s} h_sM_{g_{n,s}}\;.
\]
The operator $S_1$ is defined by the  sequence $\{\delta_{r,m} d_r\}\in \ell^2(\N_0^2)\hookrightarrow \ell^2(\N_0^2,\bb{A}_\Omega)$, where the last inclusion is justified by the fact that $\bb{A}_\Omega$ is unital. Therefore $S_1\in \rr{L}^2_{B,\Omega}$.
By definition on has that $\|h_rM_{g_{n,r}}\|^2\leqslant \|M_{g_{n,r}}\|=\|g_{n,r}\|_\infty$.
Therefore also $S_2\in \rr{L}^2_{B,\Omega}$. A direct computation shows that 
\[
\begin{aligned}
S_1S_2\;&=\;\left(\sum_{(r,m)\in\N_0^2}\delta_{r,m} d_r\Upsilon_{r\mapsto m}\right)\left(\sum_{(n,s)\in\N_0^2}\Upsilon_{n\mapsto s} h_s M_{g_{n,s}}\right)\\
&=\;\sum_{(n,m,s)\in\N_0^3} d_m\Upsilon_{m\mapsto m}\Upsilon_{n\mapsto s} h_s M_{g_{n,s}}\\
&=\;\sum_{(n,m,s)\in\N_0^3} d_m\delta_{m,s}\Upsilon_{n\mapsto m} h_s M_{g_{n,s}}\\
&=\;\sum_{(n,m)\in\N_0^2} d_m \Upsilon_{n\mapsto m} h_m M_{g_{n,m}}\;=\;A\;,\\
\end{aligned}
\]
proving the fist part of the claim. Formula \eqref{eq_tra_SS} follows from the definition
$\tau_{\n{P}}(A)=\langle\langle F_{S_1}^*,F_{S_2}\rangle\rangle_{0}$ where $F_{S_1}$ and $F_{S_2}$
are the kernels of $S_1$ and $S_2$, respectively. By observing that
\[
F_{S_1}(\omega,x)\;:=\;\sqrt{2\pi} \ell \sum_{(r,m)\in\N_0^2}\delta_{r,m} d_r \psi_{r,m}(x)
\]
and
\[
F_{S_2}(\omega,x)\;:=\;\sqrt{2\pi} \ell\sum_{(n,s)\in\N_0^2}\frac{g_{n,s}(\omega)}{d_s}\psi_{n,s}(x)\;,
\]
and making use of the fact that the Laguerre functions $\psi_{n,m}$ are an orthonormal basis of $L^2(\R^2)$, one directly obtains \eqref{eq_tra_SS}.
It is worth to observe that 
\[
\left|\sum_{n\in\N_0}\int_{\Omega}\dd\n{P}(\omega)\;g_{n,n}(\omega)\right|\;\leqslant\;
\sum_{n\in\N_0}\|g_{n,n}\|_\infty\;<\;+\infty
\]
which shows that the right-hand side of \eqref{eq_tra_SS} is always well defined for an element in $A\in \rr{L}^1_{B,\Omega}$.
\qed

\section{Relation with the Dixmier trace}
\label{sect:dixmier_gen_elem}
In this section, we will provide the proof Theorem \ref{theo:main_dix_eq_pot}
which relates the trace $\tau_{\n{P}}$ with  the {Dixmier trace} 
${\Tr}_{{\rm Dix}}$ defined by \eqref{eq:recip_Dix_Tr}
 weighted by the operator $Q^{-1}_\lambda$ defined by \eqref{eq:op-Q-2}.
 The relation between the trace $\tau_{\n{P}}$ and the trace per unit volume 
 $\s{T}_{\rm u.v.}$ as defined in \eqref{eq:TUV} is discussed in Appendix \ref{sec:tra_UV}. We assume here some familiarity of the reader with the theory of the Dixmier trace, and we refer for more details to the specific literature mentioned in Section \ref{sec:Intr0}.

\medskip

Let us start by mentioning that  Theorem \ref{theo:main_dix_eq_pot} for the special case $\Omega=\{\ast\}$
has been proved in \cite[Proposition 2.27]{denittis-sandoval-00} and the proof is based on the preliminary result described in the following. 
Let $\rr{S}^{1^+}_{\rm m}\subset \rr{S}^{1^+}$ denotes the space of \emph{measurable} operators (\ie the space of operators for which the values of the Dixmier trace is independent of the choice of the generalized limit). Let
$$
T_{j\mapsto k}\;\in\;\left\{Q_{\lambda}^{-1}\Upsilon_{j\mapsto k},\Upsilon_{j\mapsto k}Q_{\lambda}^{-1},Q_{\lambda}^{-\frac{1}{2}}\Upsilon_{j\mapsto k}Q_{\lambda'}^{-\frac{1}{2}}\right\}\;.
$$
Then $T_{j\mapsto k}\in \rr{S}^{1^+}_{\rm m}$
and
\begin{equation}\label{eq:inic_eq}
{\Tr}_{\rm Dix}\big(T_{j\mapsto k}\big)\;=\;\delta_{j,k}\;=\;\tau_{\n{P}}(\Upsilon_{j\mapsto k})\;.
\end{equation}
independently of $\lambda,\lambda'>-1$. The proof of this result is contained in \cite[Corollary 2.26 \& Lemma 3.10]{denittis-sandoval-00}.
The next step is to generalize the equality \eqref{eq:inic_eq} replacing 
$\Upsilon_{j\mapsto k}$ with $\Upsilon_{j\mapsto k}M_g$ for some $g\in \bb{A}_\Omega$. For that, we will need a series of preliminary results.

\medskip

We say that a measurable subset $\Sigma\subset\R^2$ has \emph{finite density} if
\begin{equation}\label{eq:def_dens}
 \text{\upshape dens}[\Sigma] \;:=\; \lim_{\rho\to+\infty}\frac{1}{|B_0(\rho)|}
\int_{B_0(\rho)}\dd x\; \chi_\Sigma(x)\;=\;C\;<\;+\infty\;,
\end{equation}
where $B_0(\rho)\subset\R^2$ is the  ball of radius $\rho > 0$ centered at zero,
$|B_0(\rho)|$ is its volume,   and $\chi_\Sigma$ is the characteristic function of $\Sigma$. 
We will use the symbol $P_\Sigma$ for the projection associated to $\Sigma$, \ie
$(P_\Sigma\varphi)(x):=\chi_\Sigma(x)\varphi(x)$ for every $\varphi\in L^2(\R^2)$.  
\begin{lemma}\label{lem:densiti_set_trace_0}
Let $\Sigma\subset\R^2$ be a measurable subset with {finite density} and 
\begin{equation}\label{eq:D-seq}
\s{D}_N^{(i,j)}[\Sigma]\;:=\;\frac{1}{N}\sum_{m=0}^{N-1}\langle\psi_{i,m},P_\Sigma\psi_{j,m}\rangle_{L^2},
\end{equation}
where 
$\psi_{j,m}$ are the Laguerre functions defined by \eqref{eq:lag_pol}. Then
$$
\begin{aligned}
&\lim_{N\to+\infty}\s{D}_N^{(i,j)}[\Sigma]\;=\;\delta_{i,j}\;  \text{\upshape dens}[\Sigma],
\end{aligned}
$$
for every $i,j\in\N_0$.
\end{lemma}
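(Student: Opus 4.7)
The plan is to represent
$$\s{D}_N^{(i,j)}[\Sigma] \;=\; \frac{1}{N}\int_\Sigma K_N^{(i,j)}(x)\,\dd x,\qquad K_N^{(i,j)}(x) \;:=\; \sum_{m=0}^{N-1}\overline{\psi_{i,m}(x)}\,\psi_{j,m}(x),$$
and to analyse the truncated kernel $K_N^{(i,j)}$ by means of the explicit formula~\eqref{eq:lag_pol}. Passing to polar coordinates $x = (r\cos\theta,r\sin\theta)$ and setting $u := r^2/(2\ell^2)$, the angular phase factorizes out:
$$K_N^{(i,j)}(x) \;=\; \frac{\expo{\ii(j-i)\theta}}{2\pi\ell^2}\,e^{-u}\rho_N^{(i,j)}(u),$$
where $\rho_N^{(i,j)}(u)$ is a radial sum of products of two generalized Laguerre polynomials. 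The question is thereby reduced to the asymptotic analysis of $\rho_N^{(i,j)}$, for which the scaling-limit formula (Lemma~\ref{lemm_scal_lim} in Appendix~\ref{app_scal_lim}, extending \cite[Lemma~3]{hupfer-leschke-warzel-01}) is tailor-made.

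For the diagonal case $i=j$ the angular factor is absent, and the identity $\sum_{m\geq 0}|\psi_{i,m}(x)|^2 = (2\pi\ell^2)^{-1}$ together with the scaling limit imply that $e^{-u}\rho_N^{(i,i)}(u)$ is close to the indicator $\chi_{[0,N)}(u)$, with uniform control and a boundary layer around $u=N$. Setting $R_N := \ell\sqrt{2N}$ so that $|B_0(R_N)| = 2\pi N\ell^2$, one obtains
$$K_N^{(i,i)}(x) \;=\; \frac{1}{2\pi\ell^2}\chi_{B_0(R_N)}(x) + \s{E}_N(x),$$
where the error $\s{E}_N$ is supported in an annulus around $r = R_N$ of width $O(\ell)$ with $\|\s{E}_N\|_\infty = O(\ell^{-2})$. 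Since the relative area of this annulus is $O(N^{-1/2})$, dividing by $N$ yields
$$\s{D}_N^{(i,i)}[\Sigma] \;=\; \frac{|\Sigma\cap B_0(R_N)|}{|B_0(R_N)|} + O(N^{-1/2}),$$
and the right-hand side converges to $\text{\upshape dens}[\Sigma]$ by hypothesis~\eqref{eq:def_dens}.

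For the off-diagonal case $i\neq j$ it suffices to prove $\|K_N^{(i,j)}\|_{L^1(\R^2)} = o(N)$, since then
$$|\s{D}_N^{(i,j)}[\Sigma]|\;\leq\;\frac{1}{N}\|K_N^{(i,j)}\|_{L^1(\R^2)}\;\longrightarrow\;0.$$
The pointwise limit $K_\infty^{(i,j)}(x) := \lim_{N\to\infty}K_N^{(i,j)}(x)$ is the diagonal of the integral kernel of the transition operator $\Upsilon_{i\to j}$, which can be computed explicitly via the Poisson generating function for $\expo{-u}u^m/m!$ and turns out to be a Schwartz function; in particular $\|K_\infty^{(i,j)}\|_{L^1}<\infty$. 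The tail $K_\infty^{(i,j)}-K_N^{(i,j)}$ is, by the scaling limit, supported (up to exponentially small remainders) in a ring of radius $R_N$ and width $O(\ell)$ with bounded $L^\infty$-norm, whence its $L^1$-norm is $O(R_N) = O(\sqrt{N})$. Altogether $\|K_N^{(i,j)}\|_{L^1} = O(\sqrt{N})$, which gives the claim.

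The main obstacle is Lemma~\ref{lemm_scal_lim} itself: one must extract from it sufficiently sharp edge-layer control---both a uniform $L^\infty$-bound on $K_N^{(i,j)}$ in the annulus at $r\sim R_N$ and rapid decay of $\expo{-u}\rho_N^{(i,j)}(u)$ for $u > N(1+\varepsilon)$---to make the two comparisons above quantitative and the subleading contributions genuinely negligible. Once that input is granted, both cases collapse to elementary density accounting using the finite-density hypothesis on $\Sigma$.
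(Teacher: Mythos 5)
Your setup---writing $\s{D}_N^{(i,j)}[\Sigma]$ as an integral of a truncated kernel against $\chi_\Sigma$, passing to polar coordinates, and recognising that the radial part is precisely the scaled sum $\s{G}_N^{(i,j)}$ of Lemma~\ref{lemm_scal_lim}---matches the paper's strategy and is correct. The gap is the one you yourself flag at the end: your argument requires a \emph{quantitative} edge-layer estimate (error supported in an annulus of width $O(\ell)$ with $\|\s{E}_N\|_\infty = O(\ell^{-2})$ for $i=j$, and $\|K_N^{(i,j)}\|_{L^1}=O(\sqrt N)$ for $i\neq j$) that Lemma~\ref{lemm_scal_lim} does not supply. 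That lemma only gives \emph{pointwise} convergence of $\s{G}_N^{(i,j)}(\xi)$ for $\xi\neq 1$; it says nothing about the width of the transition region, about uniform bounds there, or about summability of the tail, and without those inputs your replacement of $K_N^{(i,i)}$ by $(2\pi\ell^2)^{-1}\chi_{B_0(R_N)}$ and your $O(\sqrt N)$ claim for the off-diagonal $L^1$-norm remain unproved.

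The paper sidesteps the need for any such quantitative edge control. After the same change of coordinates $r = |x|^2/(2\ell^2)$ and rescaling $r = N\xi$, it writes
\[
\s{D}_N^{(i,j)}[\Sigma]\;=\;\frac{1}{2\pi}\int_0^{2\pi}\dd\theta\;\expo{\ii(i-j)\theta}\int_0^\infty\dd\xi\;\tilde\chi_\Sigma(N\xi,\theta)\,\s{G}_N^{(i,j)}(\xi),
\]
and then invokes the Generalized Dominated Convergence Theorem in the form of Corollary~\ref{rk:GDCT}. That corollary rests on two inputs: the pointwise convergence of Lemma~\ref{lemm_scal_lim} \emph{and} the additional Lemma~\ref{lemm_scal_lim2}, which establishes that $\s{G}_N^{(i,j)}\in L^1(\R_+)$ with $\int_0^\infty|\s{G}_N^{(i,j)}|\dd\xi\to\delta_{i,j}$. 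Because $\tilde\chi_\Sigma(N\xi,\theta)$ is uniformly bounded by $1$, GDCT then gives directly that $\int_0^\infty\tilde\chi_\Sigma(N\xi,\theta)\bigl(\s{G}_N^{(i,j)}(\xi)-\delta_{i,j}\chi_{[0,1]}(\xi)\bigr)\dd\xi\to 0$, while for $i=j$ the replacement of $\s{G}_N^{(j,j)}$ by $\chi_{[0,1]}$ produces exactly the finite-density expression~\eqref{eq:def_dens}. In short: what you are missing is not a sharper form of Lemma~\ref{lemm_scal_lim} but the companion $L^1$-convergence statement Lemma~\ref{lemm_scal_lim2}, which, via GDCT, turns the soft (pointwise) limit into the desired statement without any annulus bookkeeping. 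If you want to keep your decomposition approach, you would effectively have to re-prove Lemma~\ref{lemm_scal_lim2} in disguise.
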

\proof
After introducing the change of coordinates
$$
x_1\;:=\;\ell\sqrt{2 r}\; \cos\theta\;,\qquad x_2\;:=\;\ell\sqrt{2 r}\; \sin\theta\;,
$$
which implies $\dd x_1\dd x_2=\ell^2\dd r\dd\theta$,
a direct computation provides
$$
\langle\psi_{i,m},P_\Sigma\psi_{j,m}\rangle_{L^2}\;=\;\frac{1}{2\pi}\int_0^{2\pi}\dd\theta\; \expo{\ii (i-j)\theta}\int_0^{+\infty}\dd r\;\tilde{\chi}_{\Sigma}(r,\theta)\; {R}_m^{(i,j)}(r),
$$
where
$$
{R}_m^{(i,j)}(r)\;:=\:\frac{\sqrt{i!j!}}{m!}\expo{-r}r^{m-\frac{i+j}{2}}L^{m-i}_i(r)L^{m-j}_j(r)
$$
and
$$
\tilde{\chi}_{\Sigma}(r,\theta)\;:=\;{\chi}_{\Sigma}\big(x_1(r,\theta),x_2(r,\theta)\big)\;.
$$
Therefore
$$
\s{D}_N^{(i,j)}[\Sigma]\;=\;\frac{1}{2\pi}\int_0^{2\pi}\dd\theta\; \expo{\ii (i-j)\theta}\int_0^{+\infty}\dd \xi\;\tilde{\chi}_{\Sigma}(N\xi,\theta)\; \s{G}_N^{(i,j)}(\xi),
$$
where $r=N\xi$ and
$$
\s{G}_N^{(i,j)}(\xi)\;:=\;\sum_{m=0}^{N-1}{R}_m^{(i,j)}(N\xi)\;.
$$
Thus, as a consequence of Corollary \ref{rk:GDCT}, one obtains that
$$
\lim_{N\to+\infty}\s{D}_N^{(i,j)}[\Sigma]\;=\;0\;,\qquad \text{if}\quad i\neq j\;.
$$

To deal with the case  $i=j$ let us start by observing that
$$
\begin{aligned}
\frac{1}{2\pi}\int_0^{2\pi}\dd\theta\
\int_0^{1}\dd \xi\;\tilde{\chi}_{\Sigma}(N\xi,\theta)\;&=\;\frac{1}{2\pi N}\int_0^{2\pi}\dd\theta\
\int_0^{N}\dd r\;\tilde{\chi}_{\Sigma}(r,\theta)\\
&=\;\frac{1}{2\pi N\ell^2}\int_{B_0(\rho_N)}\dd x\;{\chi}_{\Sigma}(x)
\end{aligned}
$$
with $\rho_N=\ell\sqrt{2N}$. Since $|B_0(\rho_N)|=2\pi N\ell^2$, one gets
$$
\lim_{N\to+\infty}\frac{1}{2\pi}\int_0^{2\pi}\dd\theta\
\int_0^{1}\dd \xi\;\tilde{\chi}_{\Sigma}(N\xi,\theta)
\;=\;{\rm dens}[\Sigma]\;.
$$
Therefore one has that
$$
\begin{aligned}
\s{D}_N^{(j,j)}[\Sigma]- {\rm dens}[\Sigma]
\;=\;\frac{1}{2\pi}\int_0^{2\pi}\dd\theta\
\int_{0}^{\infty}\dd \xi\; \tilde{\chi}_{\Sigma}(N\xi,\theta)\big[ 
\s{G}_N^{(j,j)}(\xi)-\chi_{[0,1]}(\xi)\big],\;
\end{aligned}
$$
where $\chi_{[0,1]}$ is the characteristic function of the interval $[0,1]$.
By using again Corollary \ref{rk:GDCT},
one obtains that
$$
\lim_{N\to+\infty}\left(\s{D}_N^{(j,j)}[\Sigma]- {\rm dens}[\Sigma]\right)\;=\;0
$$
for every $j\in\N_0$, and this concludes the proof.
\qed

\medskip

For the next result let us introduce the sequence
\begin{equation}\label{eq:W-seq}
\s{W}^{(j,k)}_N[\Sigma]\;:=\;\frac{1}{\log(N)}\sum_{m=0}^{N-1}\frac{\s{D}_{m+1}^{(j,k)}[\Sigma]}{m+\zeta}
\end{equation}
where $\s{D}_N^{(j,k)}[\Sigma]$ is defined by \eqref{eq:D-seq}.

\begin{lemma}\label{lem:densiti_set_trace_1}
Let $\Sigma\subset\R^2$ be a measurable subset with {finite density}. Then 
$$
\begin{aligned}
&\lim_{N\to+\infty}\s{W}_N^{(i,j)}[\Sigma]\;=\;\delta_{i,j}\;  \text{\upshape dens}[\Sigma],
\end{aligned}
$$
for every $i,j\in\N_0$.
\end{lemma}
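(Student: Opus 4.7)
The plan is to deduce Lemma~\ref{lem:densiti_set_trace_1} directly from Lemma~\ref{lem:densiti_set_trace_0} by recognizing $\s{W}^{(i,j)}_N[\Sigma]$ as a logarithmic Ces\`aro average of the already-convergent sequence $\s{D}_{m+1}^{(i,j)}[\Sigma]$. The underlying analytic fact is that a weighted average of a convergent bounded sequence, with weights whose partial sums are asymptotic to $\log N$, converges to the same limit.

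First, set $a_m := \s{D}_{m+1}^{(i,j)}[\Sigma]$ and $L := \delta_{i,j}\,\text{\upshape dens}[\Sigma]$. By Lemma~\ref{lem:densiti_set_trace_0}, $a_m \to L$ as $m\to\infty$. Cauchy--Schwarz gives $|\langle\psi_{i,m},P_\Sigma\psi_{j,m}\rangle_{L^2}|\leq 1$ because $P_\Sigma$ is an orthogonal projection and the Laguerre functions are normalized; consequently the Ces\`aro mean satisfies the uniform bound $|a_m|\leq 1$. Second, one uses the standard asymptotic $\sum_{m=0}^{N-1}\frac{1}{m+\zeta} = \log N + O(1)$, obtained by comparison with $\int_{0}^{N-1}\frac{dt}{t+\zeta}$, which in particular yields $\frac{1}{\log N}\sum_{m=0}^{N-1}\frac{1}{m+\zeta} \to 1$ as $N\to\infty$.

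Third, decompose
$$\s{W}^{(i,j)}_N[\Sigma] - L \;=\; L\left(\frac{1}{\log N}\sum_{m=0}^{N-1}\frac{1}{m+\zeta} - 1\right) \;+\; \frac{1}{\log N}\sum_{m=0}^{N-1}\frac{a_m - L}{m+\zeta}.$$
The first summand tends to $0$ by the asymptotic above. For the second, given $\epsilon>0$ pick $M$ with $|a_m - L| < \epsilon$ for $m\geq M$: the contribution from $0\leq m < M$ is $O(1/\log N)$ since it is a fixed finite sum, while the contribution from $m\geq M$ is bounded by $\epsilon \cdot \frac{1}{\log N}\sum_{m=M}^{N-1}\frac{1}{m+\zeta} \leq \epsilon(1+o(1))$. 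Letting first $N\to\infty$ and then $\epsilon\to 0$ gives the claim.

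There is no real obstacle here: the statement is essentially a Tauberian-style averaging lemma once Lemma~\ref{lem:densiti_set_trace_0} is available. The only implicit hypothesis is that $\zeta$ is chosen so that $m+\zeta>0$ for all $m\in\N_0$, which is understood from the context so that the weights $1/(m+\zeta)$ are well defined and positive.
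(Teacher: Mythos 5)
Your proof is correct. The paper reaches the same conclusion via the Stolz--Ces\`aro theorem: setting $a_N := \log(N)\,\s{W}^{(i,j)}_N[\Sigma]$ and $b_N := \sum_{n=0}^{N-1}(n+\zeta)^{-1}$, one checks that $\s{D}_N^{(i,j)}[\Sigma] = (a_N - a_{N-1})/(b_N - b_{N-1})$, whence $\lim_N a_N/b_N = \lim_N \s{D}_N^{(i,j)}[\Sigma]$ by Stolz--Ces\`aro; since $\log(N)/b_N \to 1$, Lemma~\ref{lem:densiti_set_trace_0} then yields the claim. Your direct $\epsilon$-splitting argument is mathematically equivalent---it is essentially an inline proof of the relevant monotone case of Stolz--Ces\`aro---so the difference is one of packaging rather than substance. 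What your version buys is self-containedness: it makes explicit the two facts driving the estimate, namely the uniform bound $|\s{D}_{m+1}^{(i,j)}[\Sigma]|\leqslant 1$ (so the head of the sum contributes $O(1/\log N)$) and the asymptotic $\sum_{m=0}^{N-1}(m+\zeta)^{-1}=\log N+O(1)$ (so the tail is bounded by $\epsilon(1+o(1))$). You are also right that the positivity of $\zeta$ is left implicit in the definition~\eqref{eq:W-seq}; the paper only fixes $\zeta = k+1+\lambda > 0$ afterwards, in the proof of Proposition~\ref{lem:densiti_set_trace}.
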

\proof
The proof is a consequence of the {Stolz-Ces\'{a}ro Theorem} 
\cite[Theorem 1.22]{muresan-09}. Indeed, after rewriting 
$$
\s{D}_N^{(i,j)}[\Sigma]\;=\;\frac{a_N-a_{N-1}}{b_N-b_{N-1}}
$$
with
$$
a_N\;:=\;\log(N)\s{W}^{(i,j)}_N[\Sigma]\;, \qquad b_N\;:=\;\sum_{n=0}^{N-1}\frac{1}{n+\zeta}\;,
$$
one gets that 
$$
\lim_{N\to+\infty}\s{D}_N^{(i,j)}[\Sigma]\;=\;\lim_{N\to+\infty}\frac{a_N}{b_N}\;=\;\lim_{N\to+\infty}\frac{\log(N)}{b_N}\s{W}^{(i,j)}_N[\Sigma]\;=\;\lim_{N\to+\infty}\s{W}^{(i,j)}_N[\Sigma]\;.
$$
The result of Lemma \ref{lem:densiti_set_trace_0} concludes the proof.
\qed

\begin{proposition}\label{lem:densiti_set_trace}
Let $\Sigma\subset\R^2$ be a measurable subset with {finite density}. Then $Q_\lambda^{-1}\Upsilon_{j\mapsto k} P_\Sigma$ is an element of $\rr{S}^{1^+}_{\rm m}$
and
\begin{equation}\label{eq:dix_tr_fog}
\begin{aligned}
&\text{\upshape Tr}_{\text{\upshape Dix}}\left(Q_\lambda^{-1}\Upsilon_{j\mapsto k} P_\Sigma \right)
\;=\;\delta_{j,k}\;  \text{\upshape dens}\big[\Sigma\big]
\end{aligned}
 \end{equation}
for every $j,k\in\N$ and independently of $\lambda>-1$.
\end{proposition}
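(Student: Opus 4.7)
The plan is to combine a singular-value comparison with a residue/Tauberian identification of the Dixmier trace, using the Ces\`aro-type machinery of Lemmas \ref{lem:densiti_set_trace_0} and \ref{lem:densiti_set_trace_1}.

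Membership $T:=Q_\lambda^{-1}\Upsilon_{j\mapsto k}P_\Sigma\in\rr{S}^{1^+}$ follows at once from $\mu_n(T)\leqslant\mu_n(Q_\lambda^{-1}\Upsilon_{j\mapsto k})\,\|P_\Sigma\|\leqslant (k+n+1+\lambda)^{-1}$; the singular values of $Q_\lambda^{-1}\Upsilon_{j\mapsto k}$ are computed by diagonalizing $(Q_\lambda^{-1}\Upsilon_{j\mapsto k})^*(Q_\lambda^{-1}\Upsilon_{j\mapsto k})=\Upsilon_{k\mapsto j}Q_\lambda^{-2}\Upsilon_{j\mapsto k}$ on the subspace spanned by $\{\psi_{j,m}\}_{m\in\N_0}$. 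Hence $\sum_{n<N}\mu_n(T)=O(\log N)$.

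For the value, note that $\Upsilon_{j\mapsto k}$ has range in $\Ran(\Pi_k)$ and $Q_\lambda^{-1}$ preserves $\Ran(\Pi_k)$, so $\Pi_k T=T$; cyclicity of the Dixmier trace then gives $\text{\upshape Tr}_{\text{\upshape Dix}}(T)=\text{\upshape Tr}_{\text{\upshape Dix}}(\Pi_k T\Pi_k)$. To handle this, observe that for every $s>1$ the regularized operator $Q_\lambda^{-s}\Upsilon_{j\mapsto k}P_\Sigma$ is trace-class, and expanding the trace in the Laguerre basis yields
\[
\Tr\bigl(Q_\lambda^{-s}\Upsilon_{j\mapsto k}P_\Sigma\bigr)\;=\;\sum_{m\geqslant 0}(k+m+1+\lambda)^{-s}\,b_m,\qquad b_m:=\langle\psi_{j,m},P_\Sigma\psi_{k,m}\rangle.
\]
Lemma \ref{lem:densiti_set_trace_0} supplies $N^{-1}\sum_{m<N}b_m\to\delta_{j,k}\,\text{\upshape dens}[\Sigma]$, and a Karamata-Tauberian argument (precisely the Abel-summation manipulation underlying the definition of $\s{W}_N^{(j,k)}[\Sigma]$) yields $\lim_{s\to 1^+}(s-1)\Tr(Q_\lambda^{-s}\Upsilon_{j\mapsto k}P_\Sigma)=\delta_{j,k}\,\text{\upshape dens}[\Sigma]$, independently of $\lambda>-1$ since a shift of the index does not alter the residue at $s=1$.

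The main obstacle is the bridge between this residue and the Dixmier trace itself, since $T$ is generally not self-adjoint. I would address it by analyzing the compression $\Pi_k T\Pi_k$ in the basis $\{\psi_{k,m}\}$, whose matrix entries are $\alpha_m\langle\psi_{j,m},P_\Sigma\psi_{k,m'}\rangle$ with $\alpha_m=(k+m+1+\lambda)^{-1}$: the diagonal part gives, after Abel summation, exactly the expression $\s{W}_N^{(j,k)}[\Sigma]$ treated in Lemma \ref{lem:densiti_set_trace_1} (with $\zeta=k+1+\lambda$), and the off-diagonal remainder must be shown to contribute zero to $\text{\upshape Tr}_{\text{\upshape Dix}}$. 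This last step should follow from the asymptotic orthogonality of the vectors $\{P_\Sigma\psi_{j,m}\}_m$, i.e.\ the decay of $\langle\psi_{j,m},P_\Sigma\psi_{k,m'}\rangle$ as $|m-m'|\to\infty$, which ultimately stems from the scaling-limit estimates for the Laguerre functions proved in the appendix. Once off-diagonal cancellation is secured, $\gamma_N(T)$ converges to $\delta_{j,k}\,\text{\upshape dens}[\Sigma]$ and the measurability of $T$ (independence of the scale-invariant limit) follows simultaneously with the claimed value.
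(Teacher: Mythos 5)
Your computation of the diagonal sum, the Abel-summation reduction to $\s{W}^{(j,k)}_N[\Sigma]$, and the identification of Lemmas \ref{lem:densiti_set_trace_0} and \ref{lem:densiti_set_trace_1} as the engine are all on target, and they are exactly the analytic content of the paper's proof. The membership claim $T\in\rr{S}^{1^+}$ is also fine, by a slightly different but valid singular-value estimate (the paper instead factors $T=(Q_\lambda^{-1}\Pi_k)(\Upsilon_{j\mapsto k}P_\Sigma)$ and invokes measurability of $Q_\lambda^{-1}\Pi_k$).

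However, there is a genuine gap at the bridge step, and you flag it yourself without closing it. The operator $T=Q_\lambda^{-1}\Upsilon_{j\mapsto k}P_\Sigma$ is neither self-adjoint nor positive, so $\gamma_N(T)$ (built from singular values of $T$) is \emph{not} what the Dixmier trace of $T$ evaluates; for non-positive elements, $\Tr_{\rm Dix}$ is defined by linear extension from the positive cone, and convergence of $\gamma_N(T)$ neither computes $\Tr_{\rm Dix}(T)$ nor by itself certifies measurability. Consequently, your final paragraph — showing off-diagonal cancellation in the Laguerre basis so that ``$\gamma_N(T)$ converges'' — targets the wrong quantity: the diagonal sums you would obtain are not partial sums of singular values, and asymptotic orthogonality of $\{P_\Sigma\psi_{j,m}\}_m$ is not established in the paper and would be a nontrivial extra claim. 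The residue/Karamata route suffers the same disease: the identity $\Tr_{\rm Dix}(T)=\lim_{s\to1^+}(s-1)\Tr(|T|^{s-1}T)$ or similar Tauberian formulas are only available for positive (or at least normal) operators without further work.

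What the paper uses instead, and what your argument is missing, is the identity from \cite[Lemma 7.17]{gracia-varilly-figueroa-01}: for a \emph{positive} compact operator $H$ in the Dixmier ideal with spectral truncations $E_N$, and $S$ bounded,
\[
\Tr_{\rm Dix,\,Lim}(HS)\;=\;{\rm Lim}\Bigl[\tfrac{1}{\log N}\,\Tr\bigl(E_N\,HS\bigr)\Bigr]\;.
\]
Applied with $H=Q_\lambda^{-1}\Pi_k$ (whose eigenbasis is precisely $\{\psi_{k,m}\}_m$ with eigenvalues $(k+m+1+\lambda)^{-1}$) and $S=\Upsilon_{j\mapsto k}P_\Sigma$, this converts $\Tr_{\rm Dix}(T)$ directly into the diagonal sequence $\s{A}^{(j,k)}_N[\Sigma]$, with no off-diagonal terms ever appearing. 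Your Abel summation and the convergence of $\s{W}^{(j,k)}_N[\Sigma]$ then apply verbatim; and since the limit does not depend on ${\rm Lim}$, measurability of $T$ follows. So the fix is not to prove off-diagonal cancellation, but to replace that step with the Gracia-Bondía--Várilly--Figueroa formula (or an equivalent ``positive weight times bounded operator'' reduction for the Dixmier trace).
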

\proof
The relation $\Pi_k\Upsilon_{j\mapsto k}=\Upsilon_{j\mapsto k}$ allows to write
$$
Q_\lambda^{-1}\Upsilon_{j\mapsto k} P_\Sigma \;=\;(Q_\lambda^{-1}\Pi_k)(\Upsilon_{j\mapsto k} P_\Sigma)\;.
$$
Since $Q_\lambda^{-1}\Pi_k$ is a measurable operator \cite[Corollary 2.26]{denittis-sandoval-00} and $\Upsilon_{j\mapsto k} P_\Sigma$ is bounded, it follows that $Q_\lambda^{-1}\Upsilon_{j\mapsto k} P_\Sigma\in \rr{S}^{1^+}$. For the computation of the Dixmier trace we can use the formula in \cite[Lemma 7.17]{gracia-varilly-figueroa-01} which provides 
\[
\text{\upshape Tr}_{\text{\upshape Dix},{\rm Lim}}\left(Q_\lambda^{-1}\Upsilon_{j\mapsto k} P_\Sigma \right)\; =\;{\rm Lim}
\left[
\frac{1}{\log(N)}{\rm Tr}(E_N Q_\lambda^{-1}\Upsilon_{j\mapsto k} P_\Sigma)
\right],
\]
where 
\[
E_N:=\left(\sum_{m=0}^{N-1}\sum_{r\in\N_0}\ketbra{\psi_{r,m}}{\psi_{r,m}}\right)\Pi_k\;=\;\sum_{m=0}^{N-1}\ketbra{\psi_{k,m}}{\psi_{k,m}}\;.
\]
Then, a direct computation provides
\begin{equation}\label{eq:trac_omegg_01}
\text{\upshape Tr}_{\text{\upshape Dix},{\rm Lim}}\left(Q_\lambda^{-1}\Upsilon_{j\mapsto k} P_\Sigma \right)\; =\;{\rm Lim}\left[\s{A}^{(j,k)}_N[\Sigma]\right]
\end{equation}
where
$$
\s{A}^{(j,k)}_N[\Sigma]\;:=\; \frac{1}{\log(N)}\sum_{m=0}^{N-1}\frac{\langle\psi_{j,m},P_\Sigma\psi_{k,m}\rangle_{L^2}}{m+\zeta}
$$
and $\zeta\equiv\zeta(k,\lambda):=k+1+\lambda>0$. Let $\s{D}_N^{(j,k)}[\Sigma]$ be the sequence defined by \eqref{eq:D-seq}. 
The recurrence relation 
$$
\frac{\langle\psi_{j,m},P_\Sigma\psi_{k,m}\rangle_{L^2}}{m+\zeta}\;=\;\frac{m}{m+\zeta}\left(\s{D}_{m+1}^{(j,k)}[\Sigma]- \s{D}_m^{(j,k)}[\Sigma]\right)\;+\;\frac{\s{D}_{m+1}^{(j,k)}[\Sigma]}{m+\zeta}\;, 
$$
with the convention $\s{D}_0^{(j,k)}[\Sigma]=0$,
leads to
$$
\begin{aligned}
\s{A}^{(j,k)}_N[\Sigma]\;:=\;\frac{1}{\log(N)}\frac{N-1}{N-1+\zeta}\s{D}_N^{(j,k)}[\Sigma]\;&-\;\frac{\zeta}{\log(N)}\sum_{m=1}^{N-1}\frac{\s{D}_m^{(j,k)}[\Sigma]}{(m+\zeta)(m-1+\zeta)}\\
&+\;\s{W}^{(j,k)}_N[\Sigma]\;,
\end{aligned}
$$
where $\s{W}^{(j,k)}_N[\Sigma]$ is defined by \eqref{eq:W-seq}.
Since $\s{D}_N^{(j,k)}[\Sigma]$ is a convergent sequence in view of Lemma \ref{lem:densiti_set_trace_0}, one infers that the 
the first two terms of the right-hand side of the above equation go to zero in the limit $N\to+\infty$, and in turn Lemma \ref{lem:densiti_set_trace_1} implies that
$$
\lim_{N\to+\infty}\s{A}^{(j,k)}_N[\Sigma]\;=\;\lim_{N\to+\infty}\s{W}_N^{(j,k)}[\Sigma]\;=\;\delta_{j,k}\;  \text{\upshape dens}[\Sigma].
$$
As a result, and in view of \eqref{eq:trac_omegg_01}, one obtains that 
$$
\text{\upshape Tr}_{\text{\upshape Dix},{\rm Lim}}\left(Q_\lambda^{-1}\Upsilon_{j\mapsto k} P_\Sigma \right)\; =\;\delta_{j,k}\;  \text{\upshape dens}[\Sigma]
$$
independently of the generalized limit ${\rm Lim}$. The latter fact  ensures that $Q_\lambda^{-1}\Upsilon_{j\mapsto k} P_\Sigma\in\rr{S}^{1^+}_{\rm m}$ and  this concludes the proof.
\qed

\begin{remark}
In Lemma \cite[Lemma A.2]{denittis-sandoval-21} it is proved that
$$
[Q_\lambda^{-1},\Upsilon_{j\mapsto k}]\;\in\;\rr{S}^{1}\;\subset\; \rr{S}^{1^+}_{0},
$$
where $\rr{S}^{1}$ is the ideal of trace class operators.
As a consequence, one has that 
$$
\Upsilon_{j\mapsto k}Q_\lambda^{-1} P_\Sigma\;=\;Q_\lambda^{-1}\Upsilon_{j\mapsto k} P_\Sigma\;-\;[Q_\lambda^{-1},\Upsilon_{j\mapsto k}]P_\Sigma\;\in\; \rr{S}^{1^+}_{\rm m}
$$
and 
$$
\text{\upshape Tr}_{\text{\upshape Dix}}\left(Q_\lambda^{-1}\Upsilon_{j\mapsto k} P_\Sigma \right)\; =\;\text{\upshape Tr}_{\text{\upshape Dix}}\left(\Upsilon_{j\mapsto k}Q_\lambda^{-1} P_\Sigma \right)\;
$$
in view of the vanishing under the  trace of the term containing the commutator.
 \hfill $\blacktriangleleft$
\end{remark}

\medskip

Finally, the next result provides the 
announced generalization of \eqref{eq:inic_eq}.
This result is inspired by \cite[Lemma 4]{bellissard-elst-schulz-baldes-94}.
\begin{proposition}\label{prop:densiti_set_trace_02}
Let $g\in \bb{A}_\Omega$. Then $Q_\lambda^{-1}\Upsilon_{j\mapsto k} M_{g,\omega} $ is an element of $\rr{S}^{1^+}_{\rm m}$ and
\begin{equation}\label{eq:dix_tr_fog_m7}
\begin{aligned}
&\text{\upshape Tr}_{\text{\upshape Dix}}\left(Q_\lambda^{-1}\Upsilon_{j\mapsto k} M_{g,\omega} \right)\; =\;
\delta_{j,k}\;  \int_\Omega\dd\n{P}(\nu)\;g(\nu),
\end{aligned}
 \end{equation}
for almost every $\omega\in\Omega$, for every $j,k\in\N$ and independently of $\lambda>-1$. In particular, it follows that 
\begin{equation}\label{eq:dix_tr_fog_m99}
 \n{E}\left[\text{\upshape Tr}_{\text{\upshape Dix}}\left(Q_\lambda^{-1}\Upsilon_{j\mapsto k} M_{g,\omega} \right)\right]\;=\; \tau_{\n{P}}(\Upsilon_{j\mapsto k} M_{g})
 \end{equation}
independently of $\lambda>-1$.
\end{proposition}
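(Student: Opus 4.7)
The strategy is to lift the characteristic-function case proved in Proposition \ref{lem:densiti_set_trace} to continuous $g\in\bb{A}_\Omega$ by an approximation argument in the Calder\'on norm $\|\cdot\|_{1^+}$. The bridge is Wiener's pointwise ergodic theorem for the ergodic $\R^2$-action $(\Omega,\n{P},\rr{t})$ along the F{\o}lner sequence of balls $\{B_0(\rho)\}_{\rho>0}$: for any Borel $S\subseteq\Omega$ and $\n{P}$-a.e.\ $\omega$,
\[
\lim_{\rho\to+\infty}\frac{1}{|B_0(\rho)|}\int_{B_0(\rho)}\dd x\;\chi_S(\rr{t}_x(\omega))\;=\;\n{P}(S).
\]
In other words, the Borel set $\Sigma_S^\omega:=\{x\in\R^2\mid\rr{t}_x(\omega)\in S\}$ has finite density equal to $\n{P}(S)$, so that Proposition \ref{lem:densiti_set_trace} directly yields $Q_\lambda^{-1}\Upsilon_{j\mapsto k}P_{\Sigma_S^\omega}\in\rr{S}^{1^+}_{\rm m}$ with Dixmier trace $\delta_{j,k}\n{P}(S)$, independently of $\lambda>-1$.

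For a Borel simple function $h=\sum_{i=1}^Nc_i\chi_{S_i}$, multiplication by $h(\rr{t}_\cdot(\omega))$ on $L^2(\R^2)$ equals $\sum_ic_iP_{\Sigma_{S_i}^\omega}$; intersecting the finitely many full-measure sets from the previous step and using linearity yields
\[
{\Tr}_{\rm Dix}\bigl(Q_\lambda^{-1}\Upsilon_{j\mapsto k}M_{h,\omega}\bigr)\;=\;\delta_{j,k}\int_\Omega h\,\dd\n{P},
\]
where $M_{h,\omega}$ is understood as the multiplication operator on $L^2(\R^2)$ by $h(\rr{t}_\cdot(\omega))$. To reach $g\in\bb{A}_\Omega$ I would pick Borel simple functions $h_n$ with $\|h_n-g\|_\infty\to 0$, possible since $g$ is bounded. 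On the countable intersection of the full-measure sets associated with the $h_n$, the bound
\[
\big\|Q_\lambda^{-1}\Upsilon_{j\mapsto k}(M_{g,\omega}-M_{h_n,\omega})\big\|_{1^+}\;\leqslant\;\|Q_\lambda^{-1}\Upsilon_{j\mapsto k}\|_{1^+}\,\|g-h_n\|_\infty\;\longrightarrow\;0
\]
holds thanks to the ideal property of $\rr{S}^{1^+}$ together with $Q_\lambda^{-1}\Upsilon_{j\mapsto k}\in\rr{S}^{1^+}$ (\cite[Corollary 2.26]{denittis-sandoval-00}). Since $\rr{S}^{1^+}_{\rm m}$ is norm-closed in $\|\cdot\|_{1^+}$ and the Dixmier trace satisfies $|{\Tr}_{\rm Dix}(T)|\leqslant\|T\|_{1^+}$, passing to the limit yields both the measurability $Q_\lambda^{-1}\Upsilon_{j\mapsto k}M_{g,\omega}\in\rr{S}^{1^+}_{\rm m}$ and the identity \eqref{eq:dix_tr_fog_m7}. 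Formula \eqref{eq:dix_tr_fog_m99} is then immediate by taking $\n{E}[\cdot]$ of an $\omega$-independent expression and comparing with the closed formula $\tau_{\n{P}}(\Upsilon_{j\mapsto k}M_g)=\delta_{j,k}\int_\Omega g\,\dd\n{P}$ derived at the end of Section \ref{sec:tr_u_v}.

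The main obstacle I expect to encounter is the careful bookkeeping of the $\n{P}$-null exceptional sets: each invocation of Birkhoff's theorem produces its own null set, depending on the Borel set $S$ at hand, and the limiting step requires all intermediate identities to hold on a common set of full measure. The saving point is that only countably many exceptional sets are ever involved (one for each simple approximant $h_n$, each of which in turn involves only finitely many $S_i$), so a single full-measure set $\Omega_\star\subseteq\Omega$ supports the whole argument. A minor technical point is the validity of the pointwise ergodic theorem along balls rather than along arbitrary F{\o}lner sets, which is classical for the jointly-continuous $\R^d$-actions considered here.
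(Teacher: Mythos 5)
Your proposal is correct and uses essentially the same strategy as the paper. The only difference is organizational: the paper reduces to $g\geqslant 0$ and builds, for each $\delta>0$, a \emph{specific} finite family of level sets $\Omega^\delta_r=\{\omega:r\delta\leqslant g(\omega)\leqslant(r+1)\delta\}$ to approximate $M_{g,\omega}$ by $\sum_r \delta r\, P_{\Sigma^\delta_r(\omega)}$ within $\delta$, whereas you approximate by an abstract sequence of Borel simple functions $h_n$ with $\|h_n-g\|_\infty\to0$. Both approaches rest on the same three pillars: the pointwise ergodic theorem to identify $\text{dens}[\Sigma_S^\omega]$ with $\n{P}(S)$ for a.e.\ $\omega$; Proposition \ref{lem:densiti_set_trace} to compute the Dixmier trace of the building blocks $Q_\lambda^{-1}\Upsilon_{j\mapsto k}P_\Sigma$; and the bound $|{\Tr}_{\rm Dix}(T)|\leqslant\|T\|_{1^+}$ combined with the ideal estimate $\|Q_\lambda^{-1}\Upsilon_{j\mapsto k}M\|_{1^+}\leqslant\|Q_\lambda^{-1}\Upsilon_{j\mapsto k}\|_{1^+}\|M\|$ to pass to the limit. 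Your observation about the bookkeeping of the countably many null exceptional sets is well taken and handled correctly by intersecting; the paper glosses over the same point (it has one null set per $\delta$ and implicitly takes $\delta$ along a countable sequence). Your route is marginally more streamlined in that the positivity reduction $g\geqslant 0$ is avoided, since a simple function of arbitrary sign works directly by linearity of the Dixmier trace.
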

\proof
It is straightforward to check that \eqref{eq:dix_tr_fog_m99} follows from 
\eqref{eq:dix_tr_fog_m7} and Proposition \ref{prop:_pre_trac_dix}. For this reason, we will focus on the proof of equality \eqref{eq:dix_tr_fog_m7}. First of all, since $Q_\lambda^{-1}\Upsilon_{j\mapsto k}$ is a measurable operator, it follows that $Q_\lambda^{-1}\Upsilon_{j\mapsto k} M_{g,\omega}\in \rr{S}^{1^+}$ is an element of the Dixmier ideal.  
There is no loss of generality in assuming that $g\geqslant 0$ is a non negative function. Let $\Omega^{\delta}_{r}\subset\Omega$ be the set of $\omega$ for which 
$r\delta\leqslant g(\omega)\leqslant  (r+1 )\delta$ where 
$\delta>0$ is a suitable (small enough) constant and 
$r\in\N_0$. Since $g$ is continuous and 
$\Omega$ is compact, $\Omega^{\delta}_{r}\neq\emptyset$ only for a finite number of indices 
$0\leqslant r\leqslant r_\ast$ (with $r_\ast$ depending on $\delta$). Moreover, $\{\Omega^{\delta}_{r}\}$ provides a (finite) partition of $\Omega$. For a fixed $\omega\in\Omega$, let $\Sigma^\delta_r(\omega)\subset\R^2$ be the set of $x\in\R^2$ such that $\rr{t}_{-x}(\omega)\in\Omega^{\delta}_{r}$.  Using the Birkhoff's ergodic theorem one has that
\begin{equation}\label{eq:equi_dens_vol}
\begin{aligned}
\text{dens}\big[\Sigma^\delta_r(\omega)\big]\; :&=\;\lim_{\Lambda_n\nearrow\R^2}\frac{1}{|\Lambda_n|}\int_{\Lambda_n}\ \dd x\;\chi_{\Omega^{\delta}_{r}}\big(\rr{t}_{-x}(\omega)\big)\\
&=\;\int_\Omega \ \dd\n{P}(\nu)\;\chi_{\Omega^{\delta}_{r}}(\nu)\;=:\;\n{P}\big[\Omega^{\delta}_{r}\big],
\end{aligned}
\end{equation}
where $\{\Lambda_n\}$ is some exhaustive F{\o}lner sequence (for instance the balls used in definition \eqref{eq:def_dens}) and the equality holds for $\n{P}$-almost every $\omega\in\Omega$. 
Let $P_{\Sigma^\delta_r(\omega)}$ be the projection on the subset
$\Sigma^\delta_r(\omega)\subset\R^2$, then
 $$
 \left\|M_{g,\omega}-\sum_{r=0}^{r_\ast}\delta r P_{\Sigma^\delta_r(\omega)}\right\|\;\leqslant\; \delta\;.
 $$
The continuity and the linearity of the Dixmier trace imply that
 $$
 \left|
  \text{\upshape Tr}_{\text{\upshape Dix}}\left(Q_\lambda^{-1}
   \Upsilon_{j\mapsto k} M_{g,\omega}\right)
- \text{\upshape Tr}_{\text{\upshape Dix}}\left(Q_\lambda^{-1}
    \Upsilon_{j\mapsto k} \sum_{r=0}^{r_\ast}\delta rP_{\Sigma^\delta_r(\omega)} \right)
 \right|\;\leqslant\;\delta\left\|Q_\lambda^{-1}
  \  \Upsilon_{j\mapsto k}\right\|_{1+}\;,
 $$
 where for simplicity we are omitting the dependence of the Dixmier trace on the choice of a generalized limit. It follows that 
 $$
  \text{\upshape Tr}_{\text{\upshape Dix}}\left(Q_\lambda^{-1}
   \Upsilon_{j\mapsto k} M_{g,\omega}\right)\;=\; 
 \delta_{j,k}  \sum_{r=0}^{r_\ast}\delta r \n{P}\big[\Omega^{\delta}_{r}\big]\;+\;O(\delta)\;.
 $$
 where we used Proposition \ref{lem:densiti_set_trace} and equality \eqref{eq:equi_dens_vol}.
 Taking the limit $\delta\to0$ (and consequently $r_\ast\to\infty$) one obtains
$$
\lim_{\delta\to0}   \sum_{r=0}^{r_\ast(\delta)}\delta r \n{P}\big[\Omega^{\delta}_{r}\big]\;=\;\int_\Omega \dd\n{P}(\nu)\;g(\nu)\; 
$$   
 and in turn
 $$
 \text{\upshape Tr}_{\text{\upshape Dix}}\left(Q_\lambda^{-1}
  \Upsilon_{j\mapsto k}M_{g,\omega}\right)\;=\;\delta_{j,k}\: \int_\Omega \dd\n{P}(\nu)\;g(\nu) \;.
$$
Since the result doesn't depend on the particular definition of the Dixmier trace, it follows that 
$Q_\lambda^{-1}
  \Upsilon_{j\mapsto k}M_{g,\omega}\in \rr{S}^{1^+}_{\rm m}$ is a measurable operator.
\qed

\begin{remark}
The reader can check that the result of Proposition \ref{prop:densiti_set_trace_02} can be extended to functions $g$ in the von Neumann algebra $L^\infty(\Omega)$ by repeating verbatim the same proof. Indeed, the essential boundedness guarantees that one can still construct the 
finite family
$\{\Omega^{\delta}_{r}\}$ which covers $\Omega$ up to a set of zero measure.
 \hfill $\blacktriangleleft$
\end{remark}

\medskip

We are now in position to provide the proof of the main result enounced in the introduction.

\proof[{\it Proof of Theorem \ref{theo:main_dix_eq_pot}}]
The result follows from a generalization of the argument used in the proof of 
\cite[Proposition 2.27]{denittis-sandoval-00}. The crucial  ingredient of the proof is the estimate of the Dixmier norm of  $\|Q_\lambda^{-1}
  \Upsilon_{j\mapsto k}\|_{1^+}\leqslant 1$. This implies that
  \[
  \|Q_\lambda^{-1}
  \Upsilon_{j\mapsto k}M_{g,\omega}\|_{1^+}\;\leqslant\;\|M_{g,\omega}\|
  \;\leqslant\;\|g\|_{\infty},
  \]
for every $g\in \bb{A}_\Omega$. Now, let  $S=\sum_{(n,m)\in\N_0^2}\Upsilon_{n\mapsto m}M_{g_{n,m}}\in\rr{L}^1_{B,\Omega}$ and denote with $S_\omega$ its restriction on the fiber over $\omega\in\Omega$ of the direct integral.
One infers that
\[
\|Q_\lambda^{-1}
 S_\omega\|_{1^+}\;\leqslant\;\sum_{(n,m)\in\N_0^2}\|M_{g_{n,m},\omega}\|
  \;\leqslant\;\sum_{(n,m)\in\N_0^2}\|{g_{n,m}}\|_\infty\;<\;\infty
\]
as a consequence of the fact that $S\in\rr{L}^1_{B,\Omega}$.
This implies that $Q_\lambda^{-1}
 S_\omega\in  \rr{S}^{1^+}$ and the continuity of the Dixmier trace with respect to the Dixmier norm provides
\[
\begin{aligned}
 \text{\upshape Tr}_{\text{\upshape Dix}}\left(Q_\lambda^{-1}
 S_\omega\right)\;&=\;\sum_{(n,m)\in\N_0^2} \text{\upshape Tr}_{\text{\upshape Dix}}\left(Q_\lambda^{-1}\Upsilon_{n\mapsto m}M_{g_{n,m},\omega}\right)\\
&=\; \sum_{(n,m)\in\N_0^2} \delta_{n,m}\: \int_\Omega \dd\n{P}(\nu)\;g_{n,m}(\nu)\\
&=\; \tau_{\n{P}}(S),
\end{aligned}
\]
where the second equality follows from Proposition \ref{prop:densiti_set_trace_02} and the last equality follows from
Proposition \ref{prop:_pre_trac_dix}.
Since the result holds for almost every $\omega\in\Omega$ (and independently of $\lambda>-1$), one immediately concludes that
\[
\n{E}\left[ \text{\upshape Tr}_{\text{\upshape Dix}}\left(Q_\lambda^{-1}
 S_\omega\right)\right]\;=\;\tau_{\n{P}}(S)\;.
\]
Finally, since the evaluation of the Dixmier trace of $Q_\lambda^{-1}
 S_\omega$ doesn't depend on the choice of a particular realization of the Dixmier trace, one deduces that $Q_\lambda^{-1}
 S_\omega\in  \rr{S}^{1^+}_{\rm m}$, for almost every $\omega\in\Omega$.
\qed

\appendix

\section{Technical results}
In this appendix, we collect certain technical results that complement the material presented in the main body of the text. Certain results like Lemma
\ref{lemm_scal_lim} and the construction in Appendix \ref{sec_dif_pot} can have a future independent interest.

\subsection{The Hilbert algebra structure}\label{sec_Hilb_str}
This section contains certain technical results necessary to complement the material provided in Section \ref{sec:tr_u_v}.

\medskip

Let us start by providing a \emph{sketch of the proof of Proposition \ref{prop:hilb_alg_ext}}.
 The  proof of (i) and (ii)
amounts to a simple direct calculation based on the invariance of the measures $\n{P}$ and $\dd x$. We leave the details to the reader. Property (iii) follows from the density of $\s{K}_0$ in the Hilbert space $L^2(\Omega\times\R^2)$ and the density of $\{F\star G\ |\ F,G\in \s{K}_0\}$ in $\s{K}_0$. The latter can be proved by observing that $\s{K}_0$ contains an approximated identity (Remark \ref{rk:approx_ident}). Property (iv) follows from
\begin{equation}\label{eq:Hib_Al_01}
 \langle\langle F\star G,F\star G\rangle\rangle_{0}\;\leqslant\;\nnnorm{G}_1^2\;   \langle\langle F,F\rangle\rangle_{0}\end{equation}
which proves the continuity of the map $F\mapsto F\star G$. The continuity of the map $F\mapsto G\star F$ follows in turn by observing that  $G\star F=(F^\star\star G^\star)^\star$ and using the property (i) which ensures the continuity of the adjoint map $F\mapsto F^\star$.
Let us prove \eqref{eq:Hib_Al_01} by starting from
$$
 \langle\langle
  F\star G,F\star G\rangle\rangle_{0}\;:=\;\frac{1}{(2\pi\ell^2)^3}\int_{\R^2}\dd x\int_{\Omega}\dd\n{P}(\omega)\; |(F\star G)(\omega,x)|^2
  $$
where
\[
|(F\star G)(\omega,x)|^2\;=\;\int_{\R^2\times \R^2}\dd y\dd y'\; \Theta_B(y'-y,x)\;H_{\omega,x}(y,y')
\]
with
$$
H_{\omega,x}(y,y')\;:=\;\overline{G\left(\rr{t}_{-y}(\omega), x-y\right)F(\omega,y)}F(\omega,y')G\left(\rr{t}_{-y'}(\omega), x-y'\right)\;.
$$
Therefore one gets
 $$
  |H_{\omega,x}(y,y')|\;\leqslant\;C_{\omega,x}(y,y')\;|F(\omega,y)F(\omega,y')|
 $$ 
where 
$$
\begin{aligned}
C_{\omega,x}(y,y')\;:&=\; |
G\left(\rr{t}_{-y'}(\omega), x-y'\right)G\left(\rr{t}_{-y}(\omega), x-y\right)^*|\;.
\end{aligned}
$$  
Since 
$$
C_{\omega,x}(y,y')\;\leqslant\;\sup_{\omega\in\Omega}C_{\omega,x}(y,y')\;\leqslant\;
\lVert
G(x-y)\rVert_{\infty}\; \lVert
G\left(
x-y'\right)\rVert_{\infty}
$$ 
one gets
$$
\begin{aligned}
|H_{\omega,x}(y,y')|\;\leqslant\;&\big(\lVert
G(x-y)\rVert_{\infty}|F\left(\omega,y\right)|\big)\big(\lVert
G(x-y')\rVert_{\infty}|f\left(\omega,y'\right)|\big)\;.
\end{aligned}
$$
and in turn one obtains
  $$
  \langle\langle
  F\star G,F\star G\rangle\rangle_{0}\;\leqslant\;\frac{1}{(2\pi\ell^2)^3}\int_{\R^2}\dd x\int_{\Omega}\dd\n{P}(\omega)\; \left(\int_{\R^2}\dd y\; \lVert
G(x-y)\rVert_{\infty}|F\left(\omega,y\right)|\right)^2\;.
$$ 
Let us observe that  
  $$\lVert
G(x-y)\rVert_{\infty}|F\left(\omega,y\right)|\;=\;
\big(\lVert
    G(\cdot)\rVert_\infty\ast | F(\omega,\cdot)|\big)(x)
  $$
  is the convolution between the functions $x\mapsto\lVert G(x)\rVert_\infty$
  and $x\mapsto | F(\omega,x) |$. 
   Thus
   $$
   \begin{aligned}
   \int_{\R^2}\dd x\; \left(\int_{\R^2}\dd y\; \lVert
G(x-y)\rVert_{\infty}|F\left(\omega, y\right)|\right)^2\;&=\;\big\|\lVert
    G(\cdot)\rVert_\infty\ast | F(\omega,\cdot)|\big\|_{L^2(\R^2)}^2\\
    &\leqslant\;\big\lVert\rVert G(\cdot)\lVert_\infty\big\rVert_{L^1(\R^2)}^2\;\big\lVert
                                                                              |F(\omega,\cdot)|\big\rVert_{L^2(\R^2)}^2\\
               &=\;(2\pi\ell^2)^2 \nnnorm{G}_1^2\;\big\lVert     |F(\omega,\cdot)|\big\rVert_{L^2(\R^2)}^2\;.
\end{aligned}
   $$
where the second line follows from the Young's convolution inequality and in the last line we used \eqref{eq_L1-norm}.   
Consequently   
   $$
   \begin{aligned}
  \langle\langle
  F\star G,F\star G\rangle\rangle_{0}\;&\leqslant\;\nnnorm{G}_1^2\;\left(\frac{1}{2\pi\ell^2}\int_{\Omega}\dd\n{P}(\omega)\;\big\lVert
                                                                         |F(\omega,\cdot)|\big\rVert_{L^2(\R^2)}^2
                       \right)\\
                       &=\;\nnnorm{G}_1^2\; \langle\langle
  F,F\rangle\rangle_{0}\;
                       \end{aligned}
$$ 
  which is exactly inequality \eqref{eq:Hib_Al_01}.

\medskip

We complete this section with two structural results repeatedly used in Sections \ref{sec:tr_u_v} and \ref{sect:rel_subspaces}.

\begin{lemma}\label{lemma_str1}
It holds true that $J\rr{L}J^*=\rr{M}_{B,\Omega}$.
\end{lemma}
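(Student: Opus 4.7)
\proof[Proof plan for Lemma \ref{lemma_str1}]
The plan is to reduce the claim to a single computational identity on the dense subspace $\s{K}_0$, namely
\begin{equation}\label{eq:plan_key}
JL_FJ^*\;=\;\pi(F)\;,\qquad F\in\s{K}_0\;,
\end{equation}
and then to promote it to the von Neumann level by invoking two soft facts: (i) conjugation by the unitary $J$ commutes with taking commutants, so that $(JAJ^*)''=JA''J^*$ for any self-adjoint subset $A\subset\bb{B}(\s{H}_\Omega)$; and (ii) $\pi(\s{K}_0)$ is norm-dense in $\rr{A}_{B,\Omega}=\pi(\bb{A}_\Omega\rtimes_B\R^2)$.

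First I would verify \eqref{eq:plan_key} by direct calculation. Given $\hat{\phi}\in\s{H}_\Omega$, set $\hat{\psi}:=J^*\hat{\phi}$, so that $\psi_\omega(x)=\phi_{\rr{t}_{-x}(\omega)}(x)$. Using the explicit formula for $L_F$ and the identity $\rr{t}_{-(x-y)}(\rr{t}_{-y}(\omega))=\rr{t}_{-x}(\omega)$, one computes
\[
(L_F\hat{\psi})_\omega(x)\;=\;\frac{1}{2\pi\ell^2}\int_{\R^2}\dd y\; F(\omega,y)\;\phi_{\rr{t}_{-x}(\omega)}(x-y)\;\Theta_B(y,x)\;.
\]
Applying $J$ replaces $\omega$ with $\rr{t}_x(\omega)$ in the two places where it occurs, and in particular $\rr{t}_{-x}(\rr{t}_x(\omega))=\omega$. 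This yields
\[
(JL_FJ^*\hat{\phi})_\omega(x)\;=\;\frac{1}{2\pi\ell^2}\int_{\R^2}\dd y\; F(\rr{t}_x(\omega),y)\;\phi_\omega(x-y)\;\Theta_B(y,x)\;.
\]
On the other hand the change of variable $y\mapsto x-y$ in \eqref{eq:c01-ññ_MM}, together with the elementary identity $\Theta_B(x,x-y)=\Theta_B(y,x)$ coming from $x\wedge(x-y)=y\wedge x$, transforms $(\pi(F)\hat{\phi})_\omega(x)$ into precisely the same expression. Hence \eqref{eq:plan_key} holds on the dense set $\s{K}_0\subset\s{H}_\Omega$, and by the boundedness of $L_F$ and $\pi(F)$ it extends to the whole of $\s{H}_\Omega$.

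With \eqref{eq:plan_key} in hand, the rest is structural. Since $J$ is unitary,
\[
J\rr{L}J^*\;=\;J\{L_F\mid F\in\s{K}_0\}''J^*\;=\;\{JL_FJ^*\mid F\in\s{K}_0\}''\;=\;\pi(\s{K}_0)''\;.
\]
It remains to identify $\pi(\s{K}_0)''$ with $\rr{M}_{B,\Omega}=\rr{A}_{B,\Omega}''$. By Lemma \ref{lemma:dens04} the subspace $\s{K}_0$ is dense in $\bb{A}_\Omega\rtimes_B\R^2$ for the universal norm, and $\pi$ is a continuous $\ast$-representation; therefore $\pi(\s{K}_0)$ is norm-dense in $\pi(\bb{A}_\Omega\rtimes_B\R^2)=\rr{A}_{B,\Omega}$. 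Two subsets of $\bb{B}(\s{H}_\Omega)$ whose norm closures coincide generate the same von Neumann algebra, so $\pi(\s{K}_0)''=\rr{A}_{B,\Omega}''=\rr{M}_{B,\Omega}$, concluding the argument.

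The only step that requires care is \eqref{eq:plan_key}: one has to track the interplay between the shift $\omega\mapsto\rr{t}_x(\omega)$ implemented by $J$, the internal translation $\omega\mapsto\rr{t}_{-y}(\omega)$ appearing in the convolution $\star$, and the antisymmetric $2$-cocycle $\Theta_B$. Everything else is a direct application of the commutant calculus for unitary conjugation together with the density result of Lemma \ref{lemma:dens04}.
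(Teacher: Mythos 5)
Your proposal is correct and follows essentially the same route as the paper: a change-of-variables computation verifying $JL_FJ^*=\pi(F)$ for $F\in\s{K}_0$, followed by a density argument showing $\pi(\s{K}_0)''=\rr{A}_{B,\Omega}''=\rr{M}_{B,\Omega}$. The only cosmetic difference is that you substitute $\hat{\psi}=J^*\hat{\phi}$ first and then apply $J$, whereas the paper performs the change of variable directly on $L_F\hat{\phi}$ and then identifies $J\hat{\phi}$ inside the integrand; the calculations are identical in substance.
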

\proof
After the  change of variables $y\mapsto x-y$ one gets that
\[
\begin{aligned}
(L_{F}\hat{\phi})_\omega(x)\;&=\;\frac{1}{2\pi\ell^2}\int_{\R^2}\dd y\; F(\omega,x-y)\; \phi_{\rr{t}_{y-x}(\omega)}\left(y\right)\;  \Theta_B(x,y)\\
&=\;\frac{1}{2\pi\ell^2}\int_{\R^2}\dd y\; F(\omega,x-y)\; \phi_{\rr{t}_{y}(\rr{t}_{-x}(\omega))}\left(y\right)\;  \Theta_B(x,y)\\
&=\;\frac{1}{2\pi\ell^2}\int_{\R^2}\dd y\; F(\omega,x-y)\; 
(J\hat{\phi})_{\rr{t}_{-x}(\omega)}(y)
\;  \Theta_B(x,y)\;.
\end{aligned}
\]
Therefore
\[
(L_{F}\hat{\phi})_{\rr{t}_{x}(\omega)}(x)\;=\;\frac{1}{2\pi\ell^2}\int_{\R^2}\dd y\; F(\rr{t}_{x}(\omega),x-y)\; 
(J\hat{\phi})_{ \omega}(y)
\;  \Theta_B(x,y)\;,
\]
and a comparison with \eqref{eq:c01-ññ_MM}  shows that
\[
(L_{F}\hat{\phi})_{\rr{t}_{x}(\omega)}(x)\;=\;(\pi(F)J\hat{\phi})_\omega(x)\;,
\] 
\ie $JL_FJ^*=\pi(F)$. By density one concludes that $J\rr{L}J^*=\pi(\s{K}_0)''$. Since 
$\pi(\s{K}_0)''$ is closed with  respect to  topologies weaker than the operator topology,
one has the inclusions $\pi(\s{K}_0)\subset \rr{A}_{B,\Omega}\subset \pi(\s{K}_0)''$ which implies that $\rr{A}_{B,\Omega}'= \pi(\s{K}_0)'$
and in turn $\pi(\s{K}_0)''=\rr{M}_{B,\Omega}$. This concludes the proof. \qed
\medskip

\begin{lemma}\label{lemma_str2}
It holds true that $\bb{V}'_\Omega\cap\bb{D}(\s{H}_\Omega)\subseteq J\rr{R}'J^*$.
\end{lemma}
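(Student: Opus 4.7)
The plan is to show that every $A\in\bb{D}(\s{H}_\Omega)\cap\bb{V}'_\Omega$ satisfies $[J^*AJ,R_F]=0$ for all $F\in\s{K}_0$; since $\rr{R}$ is generated by the operators $R_F$, this gives $J^*AJ\in\rr{R}'$ and therefore $A\in J\rr{R}'J^*$, as desired.

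The technical heart of the argument is an explicit formula for $JR_FJ^*$, in the same spirit as the one derived in Lemma \ref{lemma_str1} for $JL_FJ^*$. Starting from the definition of $R_F$ and using $(J\hat{\phi})_\omega(x)=\phi_{\rr{t}_x(\omega)}(x)$ together with $(J^*\hat{\phi})_\omega(x)=\phi_{\rr{t}_{-x}(\omega)}(x)$, one finds, after the change of variable $a=x-y$, that
\[
(JR_FJ^*\hat{\phi})_\omega(x)\;=\;\frac{1}{2\pi\ell^2}\int_{\R^2}\dd a\; F\big(\rr{t}_a(\omega),a\big)\,\Theta_B(x-a,x)\,\phi_{\rr{t}_a(\omega)}(x-a).
\]
Using $\Theta_B(x-a,x)=\expo{\ii x\wedge a/(2\ell^2)}$ and comparing with \eqref{eq:magn_tral-LL}, the product $\Theta_B(x-a,x)\,\phi_{\rr{t}_a(\omega)}(x-a)$ is recognized as $(\rr{V}(a)\hat{\phi})_\omega(x)$. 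Moreover, for each fixed $a\in\R^2$ the assignment $g_a(\omega):=F(\rr{t}_a(\omega),a)$ defines a continuous function $g_a\in\bb{A}_\Omega$, and the associated fiberwise multiplication $N_a$, acting by $(N_a\hat{\phi})_\omega(x):=g_a(\omega)\phi_\omega(x)$, is a decomposable operator whose fiber $N_{a,\omega}=g_a(\omega){\bf 1}_{L^2(\R^2)}$ is a scalar multiple of the identity. In the weak sense we therefore obtain
\[
JR_FJ^*\;=\;\frac{1}{2\pi\ell^2}\int_{\R^2}\dd a\; N_a\,\rr{V}(a),
\]
and the integral is unproblematic because $F\in\s{K}_0=C_c(\Omega\times\R^2)$ is compactly supported, so the operator-valued integrand $a\mapsto N_a\rr{V}(a)$ is uniformly bounded and compactly supported in $a$.

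The conclusion is now immediate. On the one hand, each $N_a$ is fiberwise a scalar and hence lies in the centre of the von Neumann algebra $\bb{D}(\s{H}_\Omega)$; in particular $[A,N_a]=0$ for every $a\in\R^2$. On the other hand, the assumption $A\in\bb{V}'_\Omega$ means exactly that $[A,\rr{V}(a)]=0$ for every $a\in\R^2$. Combining these two facts, $[A,N_a\rr{V}(a)]=0$ for all $a$, and this commutation passes through the weak integral (the scalar function $a\mapsto\langle\hat{\psi},[A,N_a\rr{V}(a)]\hat{\phi}\rangle$ vanishes identically for all $\hat{\phi},\hat{\psi}\in\s{H}_\Omega$). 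Therefore $[A,JR_FJ^*]=0$ for every $F\in\s{K}_0$, \ie $[J^*AJ,R_F]=0$, and this yields $J^*AJ\in\rr{R}'$ and $A\in J\rr{R}'J^*$.

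No step constitutes a real obstacle: the proof is a direct computation followed by a simple commutation argument. The only item requiring some attention is the careful derivation of the formula for $JR_FJ^*$ as a weighted superposition of the products $N_a\rr{V}(a)$; once this is in hand, the centrality of $N_a$ inside $\bb{D}(\s{H}_\Omega)$ and the standing hypothesis on the commutation of $A$ with $\rr{V}(a)$ make the conclusion automatic.
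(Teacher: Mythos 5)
Your proof is correct and follows essentially the same route as the paper's: you derive the identity $JR_FJ^*=\frac{1}{2\pi\ell^2}\int\dd a\,N_a\,\rr{V}(a)$ (the paper's $N_F(y)$), observe that the $N_a$ commute with $\bb{D}(\s{H}_\Omega)$ while the $\rr{V}(a)$ commute with $\bb{V}'_\Omega$, and conclude that any $A\in\bb{V}'_\Omega\cap\bb{D}(\s{H}_\Omega)$ commutes with $JR_FJ^*$ for all $F\in\s{K}_0$, hence lies in $(J\rr{R}J^*)'=J\rr{R}'J^*$. The only stylistic difference is that you phrase the key fact about $N_a$ via the centre of $\bb{D}(\s{H}_\Omega)$ while the paper uses the commutant $\bb{D}(\s{H}_\Omega)'$, which coincides with it.
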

\proof
Again the   change of variables $y\mapsto x-y$ provides after some computations
\[
\begin{aligned}
(R_{F}\hat{\phi})_\omega(x)\;&=\;\frac{1}{2\pi\ell^2}\int_{\R^2}\dd y\; 
F(\rr{t}_{y}(\rr{t}_{-x}(\omega)),y)\;  \Theta_B(x,y)\phi_\omega(x-y)\;.
\end{aligned}
\]
By inserting the action of the unitary $\rr{V}(y)$ defined by \eqref{eq:magn_tral-LL} one can rewrite the above expression as
\[
\begin{aligned}
(R_{F}\hat{\phi})_{\rr{t}_x(\omega)}(x)\;&=\;\frac{1}{2\pi\ell^2}\int_{\R^2}\dd y\; 
F(\rr{t}_{y}(\omega),y)\;  (\rr{V}(y)\hat{\phi})_{\rr{t}_x(\omega)}(x)\;.
\end{aligned}
\]
which, in view of the arbitrarily of $\hat{\phi}$, implies
\[
JR_{F}J^*\;=\;\int_{\R^2}\dd y\; 
N_F(y)\; \rr{V}(y)
\]
where, for every $y$, the operator $N_F(y)$ acts on the elements of the direct integral $\hat{\phi}\in\s{H}_\Omega$  as follows:
\[
(N_F(y)\hat{\phi})_\omega(x)\;=\;F(\rr{t}_{y}(\omega),y)\;\phi_\omega(x)\;.
\]
Since, for every fixed $y\in\R^2$ the operator $N_F(y)$ acts on the fiber over $\omega$ as the multiplication by the constant $F(\rr{t}_{y}(\omega),y)$, it follows that $N_F(y)\in \bb{D}(\s{H}_\Omega)'$. Then, from the integral representation above, one infers that  $JR_{F}J^*$ is an element of the algebra generated by $\bb{D}(\s{H}_\Omega)'$ and $\bb{V}_\Omega$.
As a consequence, if $T\in \bb{V}'_\Omega\cap\bb{D}(\s{H}_\Omega)$, then $T$ commute with $JR_{F}J^*$. This shows that $\bb{V}'_\Omega\cap\bb{D}(\s{H}_\Omega)\subseteq (J\rr{R}J^*)'$.
The proof is completed by the equality $(J\rr{R}J^*)'=J\rr{R}'J^*$.
\qed

\subsection{A scaling-limit formula for Laguerre polynomials}\label{app_scal_lim}
We will provide here a limit formula for a special scaling of certain combinations of the Laguerre polynomials. This formula extends a result initially proved in  \cite[Lemma 3]{hupfer-leschke-warzel-01}.

\medskip

Let  $L_n^{(\alpha)}$ be the   generalized Laguerre polynomial of degree $n\in\N_0$ and parameter $\alpha\in\R$ defined in \eqref{eq:lag_pol-00}. 

\begin{lemma}\label{mIN}
For each $\alpha,\beta\in\R$ and $n,m\in\mathbb{N}_0$ we have that
$$
\int_0^\infty \dd\xi\; \expo{-\xi}\xi^{\frac{\alpha+\beta}{2}}\left|L_n^{(\alpha)}(\xi)L_m^{(\beta)}(\xi)\right|\;\leqslant\; \sqrt{\frac{\Gamma(\alpha+n+1)}{n!}}\sqrt{\frac{\Gamma(\beta+m+1)}{m!}}\;,
$$
where $\Gamma$ denotes the Gamma function.
\end{lemma}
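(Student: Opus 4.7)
The proof I would give is a clean application of the Cauchy--Schwarz inequality together with the classical orthogonality of the generalized Laguerre polynomials. First, I would split the weight symmetrically as
\[
\expo{-\xi}\xi^{\frac{\alpha+\beta}{2}}\;=\;\Bigl(\expo{-\xi/2}\xi^{\alpha/2}\Bigr)\Bigl(\expo{-\xi/2}\xi^{\beta/2}\Bigr),
\]
so that the integrand factorizes into the two functions $\expo{-\xi/2}\xi^{\alpha/2}|L_n^{(\alpha)}(\xi)|$ and $\expo{-\xi/2}\xi^{\beta/2}|L_m^{(\beta)}(\xi)|$, each of which lies in $L^2\bigl([0,\infty),\dd\xi\bigr)$ under the mild positivity assumption on $\alpha+n+1$ and $\beta+m+1$ guaranteed by the finiteness of the right-hand side.

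Then I would apply the standard Cauchy--Schwarz inequality on $L^2\bigl([0,\infty),\dd\xi\bigr)$:
\begin{equation*}
\int_0^\infty \dd\xi\; \expo{-\xi}\xi^{\frac{\alpha+\beta}{2}}\left|L_n^{(\alpha)}(\xi)L_m^{(\beta)}(\xi)\right| \;\leqslant\; \sqrt{I_n^{(\alpha)}}\;\sqrt{I_m^{(\beta)}},
\end{equation*}
where
\[
I_n^{(\alpha)}\;:=\;\int_0^\infty \dd\xi\;\expo{-\xi}\xi^{\alpha}\,|L_n^{(\alpha)}(\xi)|^2, \qquad I_m^{(\beta)}\;:=\;\int_0^\infty \dd\xi\;\expo{-\xi}\xi^{\beta}\,|L_m^{(\beta)}(\xi)|^2.
\]
The proof is then completed by invoking the classical orthogonality relation for the generalized Laguerre polynomials
\[
\int_0^\infty \dd\xi\;\expo{-\xi}\xi^{\alpha}\, L_n^{(\alpha)}(\xi)L_{n'}^{(\alpha)}(\xi)\;=\;\frac{\Gamma(\alpha+n+1)}{n!}\,\delta_{n,n'},
\]
specialized to $n'=n$, which gives $I_n^{(\alpha)}=\Gamma(\alpha+n+1)/n!$ and analogously for $I_m^{(\beta)}$.

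No real obstacle is expected here: the whole argument is a one-line Cauchy--Schwarz estimate, and the only subtle point is to make the standing assumptions (implicit in the statement) compatible with the convergence of the reference integrals $I_n^{(\alpha)}$ and $I_m^{(\beta)}$. For $\alpha,\beta>-1$ this is immediate; for more general real $\alpha,\beta$ one only needs $\alpha+n+1>0$ and $\beta+m+1>0$, which is precisely the condition that makes the right-hand side of the inequality a finite real number in the first place. Thus the claim is essentially tight, saturated (up to the modulus) by the diagonal case $\alpha=\beta$, $n=m$.
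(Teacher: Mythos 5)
Your proposal is correct and is essentially the same argument as in the paper: both split the weight $\expo{-\xi}\xi^{(\alpha+\beta)/2}$ symmetrically, apply Cauchy--Schwarz in $L^2([0,\infty),\dd\xi)$, and evaluate the resulting diagonal integrals via the Laguerre normalization $\int_0^\infty \expo{-\xi}\xi^\alpha L_n^{(\alpha)}(\xi)^2\dd\xi = \Gamma(\alpha+n+1)/n!$. The paper packages the split slightly differently (introducing $F_n^{(\alpha)}(\xi):=\xi^{\alpha/2}|L_n^{(\alpha)}(\xi)|$ and working in the weighted space $L^2(\R_+,\expo{-\xi}\dd\xi)$), but this is a cosmetic difference from your factorization into $\expo{-\xi/2}\xi^{\alpha/2}|L_n^{(\alpha)}|$ in unweighted $L^2$. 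Your additional remark on the implicit positivity conditions needed for convergence (and for the right-hand side to make sense) is a useful clarification that the paper leaves tacit.
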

\begin{proof}
Let $\omega_\alpha$ the weight given by $\omega_\alpha(\xi)=\expo{-\xi}\xi^\alpha$. It is well known that $L^{(\alpha)}_n$ belongs to $L^2(\R_+,\dd\omega_\alpha)$ and 
\begin{equation}\label{eq:norm_L}
\left\|L^{(\alpha)}_n\right\|_\alpha\;=\;\sqrt{\frac{\Gamma(\alpha+n+1)}{n!}}\;,
\end{equation}
 where $\|\cdot\|_\alpha$ is the canonical norm on 
the  corresponding weighted $L^2$ space \cite[eq. 8.980]{gradshteyn-ryzhik-07}. Let 
$$
F_n^{(\alpha)}(\xi)\;:=\;\xi^{\frac{\alpha}{2}}\left|L^{(\alpha)}_n(\xi)\right|\;.
$$
 Clearly, $F^{(\alpha)}_n$ belongs to $L^2(\R_+,\dd\omega_0)$ and $\|F^{(\alpha)}_n\|_0=\|L^{(\alpha)}_n\|_\alpha$. Since 
 $$
 \xi^{\frac{\alpha+\beta}{2}}\left|L_n^{(\alpha)}(\xi)L_m^{(\beta)}(\xi)\right|\;=\;F^{(\alpha)}_n(\xi)F^{(\beta)}_m(\xi)\;$$ the result follows from the Cauchy-Schwarz inequality.
\end{proof}

 Consider the family of functions
\begin{equation}
{R}_m^{(i,j)}(\xi)\;:=\:\frac{\sqrt{i!j!}}{m!}\expo{-\xi}\xi^{m-\frac{i+j}{2}}L^{(m-i)}_i(\xi)L^{(m-j)}_j(\xi)\;,\qquad \xi\geqslant 0
\end{equation}
with $i,j,m\in\N_0$, and the \emph{scaled} partial sums
\begin{equation}
\s{G}_N^{(i,j)}(\xi)\;:=\;\sum_{m=0}^{N-1}{R}_m^{(i,j)}(N\xi)\;
\end{equation}
indexed by $N\in\N$.
\begin{lemma}[Scaling-limit formula: pointwise convergence]\label{lemm_scal_lim}
The following scaling-limit relation holds pointwise
$$
\lim_{N\to+\infty}\s{G}_N^{(i,j)}(\xi)\;=\;\delta_{i,j}\;\left\{
\begin{aligned}
&1&\quad&\text{if}\;\;0\leqslant\xi<1\\
&0&\quad&\text{if}\;\;1<\xi<+\infty\;.
\end{aligned}
\right.
$$
\end{lemma}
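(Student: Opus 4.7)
The plan is to combine an exact total-mass identity $\sum_{m\geq 0} R_{m}^{(i,j)}(r) = \delta_{i,j}$, valid for every $r>0$ and every $i,j\in\N_{0}$, with a concentration-of-mass estimate in the diagonal case $i=j$, and then to dispatch the off-diagonal case by Cauchy--Schwarz. The starting observation is that, in polar coordinates $(|x|,\theta)$ with $r=|x|^{2}/(2\ell^{2})$, a direct calculation from \eqref{eq:lag_pol} gives
\[
\psi_{i,m}(x)\,\overline{\psi_{j,m}(x)}\;=\;\frac{e^{i(i-j)\theta}}{2\pi\ell^{2}}\,R_{m}^{(i,j)}(r),
\]
so that the (formal) diagonal value at $x$ of the integral kernel of the transition operator $\Upsilon_{j\mapsto i}=\sum_{m}|\psi_{i,m}\rangle\langle\psi_{j,m}|$ encodes $\sum_{m} R_{m}^{(i,j)}(r)$, while the partial sums $\s{G}_{N}^{(i,j)}(\xi)$ at $r=N\xi$ are truncations of it.

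\medskip

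The first step is the total-mass identity. For $i=j$, $\Upsilon_{i\mapsto i}=\Pi_{i}$ has the well-known constant diagonal kernel $\Pi_{i}(x,x)=1/(2\pi\ell^{2})$ (density of the $i$-th Landau level), so $\sum_{m} R_{m}^{(i,i)}(r)\equiv 1$. For $i>j$, I would use the ladder-operator factorization $\Upsilon_{j\mapsto i}=\sqrt{j!/i!}\,(a^{\dagger})^{i-j}\Pi_{j}$, with $a^{\dagger}$ the Landau-level raising operator (a first-order differential operator in the complex variable $z_{x}=(x_{1}-ix_{2})/(\ell\sqrt{2})$), together with the explicit Landau kernel
\[
\Pi_{j}(x,y)\;=\;\frac{1}{2\pi\ell^{2}}\,e^{\,z_{x}\bar z_{y}-\frac{1}{2}(|z_{x}|^{2}+|z_{y}|^{2})}\,L_{j}\!\left(|z_{x}-z_{y}|^{2}\right),
\]
to verify by induction that $(a^{\dagger})_{x}^{k}\Pi_{j}(x,y)$ carries an explicit prefactor $(\bar z_{x}-\bar z_{y})^{k}$. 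Consequently the expression vanishes on the diagonal $y=x$ for every $k\geq 1$, giving $\sum_{m} R_{m}^{(i,j)}(r)=0$; the case $i<j$ then follows by taking the adjoint.

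\medskip

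The second step is a concentration estimate for the positive probability measure $\{R_{m}^{(i,i)}(r)\}_{m\geq 0}$ on $\N_{0}$. Using the same operator identification $\sum_{m} m^{k} R_{m}^{(i,i)}(r)=2\pi\ell^{2}\,\Pi_{i}(b^{\dagger}b)^{k}\Pi_{i}(x,x)$, together with the guiding-center relation $b^{\dagger}b\,\Pi_{i}=(|R_{\rm gc}|^{2}/(2\ell^{2})-1/2)\Pi_{i}$ and the explicit integration of $|\Pi_{i}(x,x')|^{2}$ against $|x'|^{2}$ (which reduces to a standard Laguerre moment by rotational symmetry around $x$), one obtains mean $r+i$ and variance of order $r$. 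Chebyshev's inequality at $r=N\xi$ then gives $\sum_{m\geq N} R_{m}^{(i,i)}(N\xi)=O(1/N)\to 0$ for $\xi<1$, and $\sum_{m< N} R_{m}^{(i,i)}(N\xi)=O(1/N)\to 0$ for $\xi>1$.

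\medskip

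To finish, I combine the two steps. The diagonal case $i=j$ follows immediately from the total-mass identity and the Chebyshev bound. For $i\neq j$ I use the factorization $R_{m}^{(i,j)}(r)=\phi_{i,m}(r)\phi_{j,m}(r)$, where $\phi_{i,m}(r):=\sqrt{i!/m!}\,e^{-r/2}r^{(m-i)/2}L_{i}^{(m-i)}(r)$ satisfies $\phi_{i,m}^{2}=R_{m}^{(i,i)}$, to obtain the Cauchy--Schwarz bound
\[
\Bigl|\sum_{m\in I} R_{m}^{(i,j)}(r)\Bigr|^{2}\;\leq\;\Bigl(\sum_{m\in I} R_{m}^{(i,i)}(r)\Bigr)\Bigl(\sum_{m\in I} R_{m}^{(j,j)}(r)\Bigr)
\]
for every $I\subseteq\N_{0}$. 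Taking $I=\{0,\dots,N-1\}$ dispatches $\xi>1$ directly, while taking $I=\{N,N+1,\dots\}$ and combining with $\sum_{m} R_{m}^{(i,j)}=0$ dispatches $\xi<1$. I expect the main technical obstacle to be a clean, notationally compact execution of the inductive identity $(a^{\dagger})_{x}^{k}\Pi_{j}(x,x)=0$ and the second-moment bound; if the latter proves cumbersome it can be replaced by a Stolz--Ces\`aro style argument in the spirit of the proof of Lemma \ref{lem:densiti_set_trace_1}.
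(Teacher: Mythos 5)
Your approach is genuinely different from the paper's and, in structure, it is sound, but two of its pillars are asserted rather than proved. The paper's proof delegates the diagonal case $i=j$ to \cite[Lemma 3]{hupfer-leschke-warzel-01} and then works algebraically: the Laguerre recurrences 8.971(4)--(5) of \cite{gradshteyn-ryzhik-07} produce a telescoping identity for $g_N^{(i,0)}$ and an inductive reduction $i\,\s{G}_N^{(i,j)}(\xi)-\sqrt{ij}\,\s{G}_N^{(i-1,j-1)}(\xi)=D_N^{(i,j)}(N\xi)$, and the boundary terms are killed by Stirling's estimate together with the pointwise Laguerre bound $|L_l^{(k)}(\xi)|\leqslant(l+k)^l e^{\xi/(l+k)}$. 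You instead interpret $\{R_m^{(i,i)}(r)\}_{m\geqslant 0}$ as a probability distribution on $\N_0$ (it equals $2\pi\ell^2|\psi_{i,m}(x)|^2$, with total mass $1$ from the constant Landau-level density), prove the diagonal case by a Chebyshev concentration estimate, and dispatch the off-diagonal case by Cauchy--Schwarz combined with the total-mass identity $\sum_m R_m^{(i,j)}(r)=\delta_{i,j}$. If completed, your route would be more conceptual and would also reprove the $i=j$ case, making the lemma self-contained rather than dependent on \cite{hupfer-leschke-warzel-01}.

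The two gaps are substantive. First, the off-diagonal total-mass identity needs an actual proof: your ladder-operator sketch, that the kernel of $\Upsilon_{j\mapsto i}$ carries a factor $(\bar z_x-\bar z_y)^{i-j}$ and so vanishes on the diagonal, is a reasonable line, but you must also argue that the pointwise limit of the (absolutely convergent, by your own Cauchy--Schwarz bound) series $\sum_m\psi_{i,m}(x)\overline{\psi_{j,m}(x)}$ coincides with the diagonal value of that explicit kernel. Second, the mean $r+i$ and variance $O(r)$ are stated but not derived; a spot-check at $i=0$ (Poisson, variance $r$) and $i=1$ (where $R_m^{(1,1)}(r)=(m-r)^2 p_m/r$ with $p_m$ Poisson, giving variance $3r$) is consistent, but the general-$i$ computation via the guiding-center number operator requires Laguerre moment integrals you do not carry out. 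Finally, the proposed fallback of replacing the second-moment bound by a Stolz--Ces\`aro argument ``in the spirit of Lemma~\ref{lem:densiti_set_trace_1}'' does not work: that lemma consumes the pointwise convergence of the present lemma through Corollary~\ref{rk:GDCT} and Lemma~\ref{lem:densiti_set_trace_0}, so invoking it would be circular, and in any case Ces\`aro-type convergence of the averages $\s{D}_N^{(j,k)}$ does not imply the pointwise convergence of $\s{G}_N^{(i,j)}$ asserted here.
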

\proof
The proof of the cases $i=j$  is provided in \cite[Lemma 3]{hupfer-leschke-warzel-01}. Moreover, the special case $\xi=0$ is discussed in Remark \ref{rk:xi=0}. For the other cases, in view of the symmetry  $\s{G}_N^{(i,j)}=\s{G}_N^{(j,i)}$, we can restrict to  the situation $i>j$. Let
$$
g^{(i,j)}_N(\xi)\;:=\;\sum_{m=0}^{N-1}{R}_m^{(i,j)}(\xi)\;
$$
such that $\s{G}_N^{(i,j)}(\xi)=g_N^{(i,j)}(N\xi)$.
Since $iL_{i}^{(m-i)}(\xi)=mL_{i-1}^{(m-i)}(\xi)-\xi L_{i-1}^{(m-i+1)}(\xi)$ \cite[eq. 8.971(4)]{gradshteyn-ryzhik-07}, one obtains the formula
\begin{equation}\label{recur}
ig^{(i,j)}_N(\xi)\;=\;\frac{\sqrt{i!j!}}{\expo{\xi}\xi^{\frac{i+j}{2}}}\;\sum_{m=0}^{N-1}\frac{\xi^m}{m!}\big(mL_{i-1}^{(m-i)}(\xi)-\xi L_{i-1}^{(m-i+1)}(\xi)\big)L^{(m-j)}_j(\xi)\;.
\end{equation}
 Let us consider the case $i>j=0$. Since $L_0^{(k)}(\xi)=1$ \cite[eq. 8.973(1)]{gradshteyn-ryzhik-07}, one gets
%
\begin{equation}\label{eq:g-form}
\begin{aligned}
ig^{(i,0)}_N(\xi)\;&=\;\frac{\sqrt{i!}}{\expo{\xi}\xi^{\frac{i}{2}}}\sum_{m=0}^{N-1}\left(\frac{m}{m!}\xi^mL_{i-1}^{(m-i)}(\xi)-\frac{m+1}{(m+1)!}\xi^{m+1}L_{i-1}^{(m+1-i)}(\xi)\right)\\ 
&=\;-\sqrt{i!}\expo{-\xi}\frac{\xi^{N-\frac{i}{2}}}{(N-1)!}L_{i-1}^{(N-i)}(\xi).
\end{aligned}
\end{equation}
Since $|L_l^{(k)}(\xi)|\leq (l+k)^l\expo{\frac{\xi}{l+k}}$ for $k\geqslant 1-l$ \cite[eq. (42)]{hupfer-leschke-warzel-01}, one obtains that 
\begin{equation}
\begin{aligned}
i\left|\s{G}_N^{(i,0)}(\xi)\right| \;&=\;i\left|g^{(i,0)}_N(N\xi)\right|\;\leqslant\; \sqrt{i!}\expo{-N\xi}\frac{(N\xi)^{N-\frac{i}{2}}}{(N-1)!}(N-1)^{i-1}\expo{\frac{N\xi}{N-1}}\\
&=\;\sqrt{i!}\xi^{-\frac{i}{2}}\expo{-N(\xi-1-\log\xi)}\frac{N^{N-\frac{1}{2}}\expo{-N}}{(N-1)!}\frac{(N-1)^{i-1}}{N^{\frac{i-1}{2}}}
\expo{\frac{N\xi}{N-1}}\;.
\end{aligned}
\end{equation}
Using the Stirling's estimate 
$$
\frac{N^{N+\frac{1}{2}}\expo{-N}}{N!}\;=\;\frac{N^{N-\frac{1}{2}}\expo{-N}}{(N-1)!}\;\leqslant\;\frac{1}{\sqrt{2\pi}}\;,
$$
along with other obvious estimates,
one gets
\begin{equation}\label{eq_ineqGN}
i\left|\s{G}_N^{(i,0)}(\xi)\right|\leq  \sqrt{\frac{i!}{2\pi}}\xi^{-\frac{i}{2}}\expo{2\xi}N^{\frac{i-1}{2}}\expo{-N(\xi-1-\log\xi)}\;.
\end{equation}
Therefore, using the elementary inequality $\xi-1\geqslant\log\xi$, one  infers
$$
\lim_{N\to+\infty}\s{G}_N^{(i,0)}(\xi)\;=\;0\;, \qquad \forall \xi\geqslant0\;,\;\; \xi\neq1\;.
$$
For $j>0$, using the identity $L_j^{(m-j)}=L_j^{(m-j+1)}-L_{j-1}^{(m-j+1)}$ \cite[eq. 8.971(5)]{gradshteyn-ryzhik-07} and equation \eqref{recur}, we have that
$$
\begin{aligned}
 ig^{(i,j)}_N(\xi)\;&=\;\frac{\sqrt{i!j!}}{\expo{\xi}\xi^{\frac{i+j}{2}}}\sum_{m=0}^{N-1}\frac{\xi^{m+1}}{m!}L_{i-1}^{(m-i+1)}(\xi)L^{(m-j+1)}_{j-1}(\xi)\\
+\frac{\sqrt{i!j!}}{\expo{\xi}\xi^{\frac{i+j}{2}}}&\sum_{m=0}^{N-1}\left(\frac{m\xi^m}{m!}L_{i-1}^{(m-i)}(\xi)L^{(m-j)}_j(\xi)-\frac{(m+1)\xi^{m+1}}{(m+1)!}L_{i-1}^{(m-i+1)}(\xi)L^{(m-j+1)}_j(\xi)\right)\\
&\;=\;\sqrt{ij}g_N^{(i-1,j-1)}(\xi)\;-\;\sqrt{i!j!}\expo{-\xi}\frac{\xi^{N-\frac{i+j}{2}}}{(N-1)!}L_{i-1}^{(N-i)}(\xi)L_j^{(N-j)}(\xi).
\end{aligned}
$$
Let $D^{(i,j)}_N(\xi)=-\sqrt{i!j!}\expo{-\xi}\frac{\xi^{N-\frac{i+j}{2}}}{(N-1)!}L_{i-1}^{(N-i)}(\xi)L_j^{(N-j)}(\xi)$. Applying the same estimates used for the case $j=0$, we get 
\begin{equation}\label{eqD-ineq}
\left|D^{(i,j)}_N(N\xi)\right|\;\leqslant\;  \sqrt{\frac{i!j!}{2\pi}}\xi^{-\frac{i+j}{2}}\expo{3\xi}N^{\frac{i+j-1}{2}}\expo{-N(\xi-1-\log\xi)}\;.
\end{equation}
Thus, $\lim_{N\to\infty} D^{(i,j)}_N(N\xi)=0$ whenever $\xi\geqslant 0$, $\xi\neq 1$. The case $j=0$ and induction
given by
\begin{equation}\label{eqG-D}
i\;\s{G}_N^{(i,j)}(\xi)\;-\;\sqrt{ij}\;\s{G}_N^{(i-1,j-1)}(\xi)\;=\;D^{(i,j)}_N(N\xi)
\end{equation}
 imply that
$$
\lim_{N\to+\infty}\s{G}_N^{(i,j)}(\xi)\;=\;0\;, \qquad \forall \xi\geqslant0\;,\;\;\xi\neq1\;
$$
as claimed.
\qed

\begin{remark}[The case $\xi=0$]\label{rk:xi=0}
By using the identity
$$
\frac{(-\xi)^m}{m!}L^{(m-i)}_i(\xi)\;=\;\frac{(-\xi)^i}{i!}L^{(i-m)}_m(\xi)
$$
and the fact that $L^{(m-i)}_i(0)=\frac{m!}{i!(m-i)!}$ if $m\geqslant i$ and $L^{(0)}_i(0)=1$ for all $i$ it follows that
$$
{R}_m^{(i,j)}(0)\;:=\:\delta_{i,j}\; \delta_{i,m}\;, \qquad \forall\; i,j,m\in\N_0\;.
$$
From that the relation
$$
\lim_{N\to+\infty}\s{G}_N^{(i,j)}(0)\;=\;\delta_{i,j}\;
$$
follows immediately.
 \hfill $\blacktriangleleft$
\end{remark}
\begin{lemma}[Scaling-limit formula: integrability]\label{lemm_scal_lim2}
It holds true that  $\s{G}_N^{(i,j)}\in L^1(\R_+)$ for every $i,j\in\N_0$ and $N\in\N$. Moreover
$$
\lim_{N\to+\infty}\int_0^\infty \dd\xi\; \left|\s{G}^{(i,j)}_N(\xi)\right|\;=\;\delta_{i,j}\;.
$$
\end{lemma}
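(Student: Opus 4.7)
The claim splits naturally into three pieces: (i) integrability of $\s{G}_N^{(i,j)}$ for each fixed $N$; (ii) the diagonal limit $i=j$; and (iii) the off-diagonal limit $i\neq j$. Part (i) is an immediate consequence of Lemma \ref{mIN} applied termwise to each $R_r^{(i,j)}$ with $n=i$, $m=j$, $\alpha=r-i$, $\beta=r-j$: the bound yields $\|R_r^{(i,j)}\|_{L^1(\R_+)}\leqslant 1$ uniformly in $r\in\N_0$. Since $\s{G}_N^{(i,j)}$ is a finite linear combination of rescaled copies of these, it lies in $L^1(\R_+)$, with the uniform bound $\|\s{G}_N^{(i,j)}\|_{L^1(\R_+)}\leqslant 1$.

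For the diagonal case the plan is to establish the sharper pointwise-in-$r$ facts that $R_r^{(i,i)}\geqslant 0$ and $\int_0^\infty R_r^{(i,i)}(\eta)\,d\eta=1$ for every $r\in\N_0$. Positivity is obvious when $r\geqslant i$; when $r<i$ one uses the elementary identity $L_i^{(r-i)}(\eta)=(-\eta)^{i-r}\tfrac{r!}{i!}L_r^{(i-r)}(\eta)$ to rewrite $R_r^{(i,i)}$ as $\tfrac{r!}{i!}e^{-\eta}\eta^{i-r}L_r^{(i-r)}(\eta)^2$, making positivity manifest and reducing the normalization to the Laguerre $L^2$-identity \eqref{eq:norm_L} (with parameter $r-i$ or $i-r$, respectively). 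Since $\s{G}_N^{(i,i)}\geqslant 0$, a change of variables $\eta=N\xi$ then gives $\int_0^\infty|\s{G}_N^{(i,i)}(\xi)|\,d\xi=1$ exactly, for every $N$.

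For the off-diagonal case, assume without loss of generality $i>j$. The strategy is to iterate the recursion \eqref{eqG-D} exactly $j$ times so as to reduce the indices from $(i,j)$ to the boundary case $(i-j,0)$, producing an expansion
\[
\s{G}_N^{(i,j)}(\xi)\;=\;\sqrt{\tfrac{j!(i-j)!}{i!}}\;\s{G}_N^{(i-j,0)}(\xi)\;+\;\sum_{k=0}^{j-1}c_k\,D_N^{(i-k,j-k)}(N\xi),
\]
with explicit coefficients $c_k$ bounded uniformly in $N$. The leading term admits the closed form \eqref{eq:g-form}, and a Cauchy--Schwarz inequality against the weight $e^{-\eta}\eta^{N}$, combined with \eqref{eq:norm_L}, gives $\|\s{G}_N^{(i-j,0)}\|_{L^1}\leqslant 1/\sqrt{N(i-j)}$. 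The same Cauchy--Schwarz argument applied to each remainder yields $\tfrac{1}{N}\|D_N^{(i-k,j-k)}(N\cdot)\|_{L^1}=O(N^{-1/2})$, and summing the $j+1$ contributions produces $\|\s{G}_N^{(i,j)}\|_{L^1}=O(N^{-1/2})\to 0$.

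The main technical obstacle is the bookkeeping in the off-diagonal step: the iteration of \eqref{eqG-D} has to be organised carefully so that the resulting coefficients $c_k$ remain bounded and the Cauchy--Schwarz estimates for each $D_N^{(i-k,j-k)}$ are uniform in the iteration depth; one also has to check that the reduction is consistent with the convention $j=0$, in which case the iteration is empty and only the leading term survives. Beyond that, everything reduces to the Laguerre orthogonality \eqref{eq:norm_L} and to Lemma \ref{mIN}.
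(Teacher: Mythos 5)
Your proof is correct and follows essentially the same route as the paper's: the diagonal case via the Laguerre orthogonality relation \eqref{eq:norm_L} (you break $\int g_N^{(i,i)}=N$ into $N$ unit pieces, the paper keeps it as one sum), and the off-diagonal case via the closed form \eqref{eq:g-form}, the recursion \eqref{eqG-D}, and the $L^1$-bound from Lemma \ref{mIN}. Unrolling the recursion explicitly into a finite sum with $N$-independent coefficients $c_k$ versus applying induction directly, as the paper does, is a cosmetic difference only; the estimates underneath are identical.
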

\proof
After the change of variables $x:=N\xi$ one has that 
$$
\int_0^\infty \dd\xi\; \left|\s{G}^{(i,j)}_N(\xi)\right|\;=\;\frac{1}{N}\int_0^\infty \dd x\; \left|g^{(i,j)}_N(x)\right|
$$
Let us start from the case $i=j$. From its very definition it follows that  $g^{(i,i)}_N$ is a positive functions and 
$$
\begin{aligned}
\int_0^\infty \dd x\, g^{(i,i)}_N(x)\;&= \;i!\sum_{m=0}^{N-1}\frac{1}{m!}\int_0^\infty \dd x\;\expo{-x} x^{m-i}\left(L^{(m-i)}_i(x)\right)^2\\
&\;=\;i!\sum_{m=0}^{N-1}\frac{\|L_i^{(m-i)}\|^2_{m-i}}{m!}\;=\;N
\end{aligned}
$$
in view of \eqref{eq:norm_L}. This proves that 
$$
\int_0^\infty \dd\xi\; \left|\s{G}^{(i,i)}_N(\xi)\right|\;=\;1
$$
constantly in $N$. For the remaining cases let us consider first  $i>j=0$. According to equation \eqref{eq:g-form} one has that 
$$
\begin{aligned}
\int_0^\infty \dd x\; \left|g^{(i,0)}_N(x)\right|\;= &\;\frac{\sqrt{i!}}{(N-1)!}\int_0^\infty \dd x\;\expo{-x} x^{N-\frac{i}{2}}\left|L_{i-1}^{(N-i)}( x)\right|\\
&\leq \frac{\sqrt{i!}}{(N-1)!}\sqrt{\frac{(N-1)!}{(i-1)!}}\sqrt{N!}\;=\;\sqrt{iN},
\end{aligned}
$$
where the inequality follows from  Lemma \ref{mIN} along with the identity $1=L_{0}^{(N)}( x)$. This implies that
$$
\lim_{N\to+\infty}\int_0^\infty \dd\xi\; \left|\s{G}^{(i,0)}_N(\xi)\right|\;\leqslant\;\lim_{N\to+\infty}\sqrt{\frac{i}{N}}\;=\;0\;.
$$
With a similar argument one can shows that
$$
\begin{aligned}
\int_0^\infty\dd\xi\; \left|D^{(i,j)}_N(N\xi)\right|\;&=\;
\frac{\sqrt{i!j!}}{N(N-1)!}\int_0^\infty\dd x\;\expo{-x}x^{N-\frac{i+j}{2}}\left|L^{(N-i)}_{i-1}(x)L^{(N-j)}_{j}(x)\right|\\
&\leqslant\;\frac{\sqrt{i!j!}}{N(N-1)!}\sqrt{\frac{(N-1)!}{(i-1)!}}\sqrt{\frac{N!}{j!}}\;=\;\sqrt{\frac{i}{N}}
\end{aligned}
$$
which implies that 
$$
\lim_{N\to+\infty}\int_0^\infty\dd\xi\; \left|D^{(i,j)}_N(N\xi)\right|\;=\;0\;.
$$
In view of \eqref{eqG-D} and using an inductive argument
one concludes the proof for the general case.
\qed

\begin{corollary}[Application of the Generalized Dominated Convergence Theorem]\label{rk:GDCT}
 Let $\chi_{[0,1]}$ be the characteristic function of the interval $[0,1]$.
If  $(f_N)$ is sequence in $L^\infty(\R_+)$ such that
$\|f_N\|_\infty\leqslant C$, with $C\leqslant 0$ a positive constant, then

$$
\lim_{N\to+\infty}\int_0^{+\infty}\dd \xi\;f_N(\xi)\left(\s{G}^{(i,j)}_N(\xi)-\delta_{i,j}\chi_{[0,1]}(\xi)\right)\;=\;0.
$$
\end{corollary}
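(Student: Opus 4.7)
The plan is to first upgrade the pointwise convergence of Lemma \ref{lemm_scal_lim} and the convergence of $L^1$-norms from Lemma \ref{lemm_scal_lim2} to $L^1$-convergence of $\s{G}^{(i,j)}_N$ to its pointwise limit, and then deduce the corollary by a trivial Hölder bound using the uniform $L^\infty$-bound on $f_N$.

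More precisely, set $h_N(\xi):=\s{G}^{(i,j)}_N(\xi)-\delta_{i,j}\chi_{[0,1]}(\xi)$. First I would verify that $h_N\to 0$ pointwise for $\xi\in \R_+\setminus\{1\}$, which is exactly Lemma \ref{lemm_scal_lim} (the point $\xi=1$ has Lebesgue measure zero). Next, I would combine this with Lemma \ref{lemm_scal_lim2}, which gives
$\|\s{G}^{(i,j)}_N\|_{L^1(\R_+)}\to\delta_{i,j}=\|\delta_{i,j}\chi_{[0,1]}\|_{L^1(\R_+)}$,
and apply the generalized dominated convergence theorem to the sequence $\{h_N\}$ with dominating sequence $\psi_N:=|\s{G}^{(i,j)}_N|+\delta_{i,j}\chi_{[0,1]}$. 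The sequence $\psi_N$ converges a.e.\ to $\psi:=2\delta_{i,j}\chi_{[0,1]}$, satisfies $|h_N|\leqslant \psi_N$, and by Lemma \ref{lemm_scal_lim2} one has $\int \psi_N\to 2\delta_{i,j}=\int\psi$. Therefore the GDCT applied to $h_N\to 0$ dominated by $\psi_N$ yields $\|h_N\|_{L^1(\R_+)}\to 0$.

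The corollary is then immediate: since $\|f_N\|_\infty\leqslant C$,
$$\left|\int_0^{+\infty}\dd\xi\; f_N(\xi)\,h_N(\xi)\right|\;\leqslant\; C\,\|h_N\|_{L^1(\R_+)}\;\longrightarrow\;0.$$
There is no real obstacle here; the only mildly delicate step is verifying the hypotheses of the GDCT, which is just a matter of combining the two lemmas. (Alternatively, in the special case $i\neq j$ one may skip the GDCT entirely, since then $\delta_{i,j}\chi_{[0,1]}=0$ and $\|h_N\|_{L^1}=\|\s{G}^{(i,j)}_N\|_{L^1}\to 0$ directly from Lemma \ref{lemm_scal_lim2}; the only case requiring the dominated-convergence argument is $i=j$, where one uses that the non-negative sequence $\s{G}^{(i,i)}_N$ converges a.e.\ to $\chi_{[0,1]}$ while preserving the $L^1$-mass, which is essentially Scheffé's lemma.)
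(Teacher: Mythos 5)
Your proof is correct and uses the same essential ingredients as the paper: the pointwise limit from Lemma~\ref{lemm_scal_lim}, the $L^1$ bounds from Lemma~\ref{lemm_scal_lim2}, and the generalized dominated convergence theorem with (up to the constant $C$) the same dominating sequence $|\s{G}^{(i,j)}_N|+\delta_{i,j}\chi_{[0,1]}$. The paper applies GDCT directly to the product $f_N\bigl(\s{G}^{(i,j)}_N-\delta_{i,j}\chi_{[0,1]}\bigr)$, whereas you factor the argument through the (slightly stronger and cleaner) intermediate statement $\|\s{G}^{(i,j)}_N-\delta_{i,j}\chi_{[0,1]}\|_{L^1}\to 0$ and then use the trivial $L^1$--$L^\infty$ duality bound; this is the same approach with marginally different bookkeeping.
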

\begin{proof}
Clearly we have that
$$
\left|f_N(\xi)\left(\s{G}^{(i,j)}_N(\xi)-\delta_{i,j}\chi_{[0,1]}(\xi)\right)\right|\;\leqslant\;C\left(\left|\s{G}^{(i,j)}_N(\xi)\right|\;+\; \chi_{[0,1]}(\xi)\right)
$$
and the right-hand side is an integrable function in view of Lemma \ref{lemm_scal_lim2}. Moreover
$$
\lim_{N\to+\infty}f_N(\xi)\left(\s{G}^{(i,j)}_N(\xi)-\delta_{i,j}\chi_{[0,1]}(\xi)\right)\;=\;
\lim_{N\to+\infty}\left(\s{G}^{(i,j)}_N(\xi)-\delta_{i,j}\chi_{[0,1]}(\xi)\right)\;=\;0
$$
pointwise (almost everywhere) in view of Lemma \ref{lemm_scal_lim}. Our result follows from the Generalized Lebesgue Dominated Convergence Theorem \cite[Proposition 11.18]{royden}.
\end{proof}


%

\subsection{Thermodynamic interpretation of the canonical trace}\label{sec:tra_UV}
We will justify in this section the interpretation of the canonical trace $\tau_{\n{P}}$ on the von Neumann algebra $\rr{M}_{B,\Omega}$
 of perturbed magnetic operators described in Section \ref{sec:tr_u_v} as the the trace per unit volume $\s{T}_{\rm u.v.}$ described in \eqref{eq:TUV}

\medskip

The group
 $\R^2$ is  locally compact and abelian, hence \emph{amenable}. This means 
 that the von Neumann algebra $L^\infty(\R^2)$
admits a left invariant mean, or equivalently that $\R^2$ meets the F{\o}lner condition (see \cite{greenleaf-69,greenleaf-73} for more details). In particular, $\R^2$ possesses several \emph{monotone exhausting F{\o}lner  sequence} $\{\Lambda_n\}_{n\in\N}$ such that: i) the  $\Lambda_n\subset\R^2$ are compact; ii)
 $\Lambda_n\subset \Lambda_{n+1}$; iii) $\Lambda_n\nearrow\R^2$ and iv)
 $$
 \lim_{n\to\infty}\;\frac{|(\Lambda_n+x)\triangle \Lambda_n|}{|\Lambda_n|}\;=\;0\;,\qquad\quad\forall\; x\in\R^2
 $$
where $\triangle$ denotes the symmetric difference of the set  $\Lambda_n$ and its translate $\Lambda_n+x$ and $|\Lambda|$ is used for the volume (Lebesgue measure) of the set $\Lambda|$.
 Therefore, under the conditions stipulated in Section~\ref{Sec:pot}, the \emph{mean ergodic theorem} \cite[Corollary 3.5]{greenleaf-73} holds:
\begin{align*}
	\lim_{n \to \infty} \frac{1}{|\Lambda_n|} \int_{\Lambda_n} \dd x \; f \bigl ( \rr{t}_{-x}(\omega) \bigr ) 
\;	=\; \int_\Omega\dd \mathbb{P}(\omega') \; f(\omega') 
	&&
	f \in L^1(\Omega)\;,
\end{align*}
where  $\omega$ is any point in a set $\Omega_f \subseteq \Omega$ of full measure. 
It is worth to notice that this result is independent of the particular choice of the
monotone exhausting F{\o}lner  sequence.
Let $A \in \rr{N}_{B,\Omega}$ with $L^2$-kernel $F_A$. Then, the function
\begin{align*}
	Z_A(\omega)\; :=\;  \frac{1}{2\pi\ell^2} \int_{\R^2} \dd x \; \left|F_A(\omega,x)\right|^2
\end{align*}
is an element of $L^1(\Omega)$ and the mean ergodic theorem implies 
\begin{align*}
	\lim_{n \to \infty} \frac{1}{|\Lambda_n|} \int_{\Lambda_n} \dd x \; Z_A \big ( \rr{t}_{-x}(\omega) \big) 
	\;=\; \int_{\Omega} \dd \mathbb{P}(\omega') \, Z_A(\omega') 
	\;=\; \tau_{\n{P}} \bigl ( A^* A \bigr ),\;
\end{align*}
where the second equality is just the definition of $\tau_{\n{P}}$.
On the other hand, one can check directly  that 
$$
	\frac{1}{{2\pi\ell^2}}\int_{\Lambda_n} \dd x\; Z_A \big ( \rr{t}_{-x}(\omega) \big ) \;=\;
	\mathrm{Tr}_{{L^2(\R^2)}} \big ( \chi_{\Lambda_n} \, {A}_{\omega}^* \, {A}_{\omega} \, \chi_{\Lambda_n} \big ),
$$
where $\chi_{\Lambda_n}$ denotes the multiplication operator by the characteristic function for $\Lambda_n$ (in fact, a projection). This follows by observing that ${A}_{\omega} \chi_{\Lambda_n}$ is a
 Hilbert-Schmidt operator for almost all $\omega$ \cite[Proposition 2.1.6 (c)]{lenz-99}, and computing the trace by integrating the related kernel.
After putting all the pieces together one obtains
\begin{align}
	\frac{1}{{2\pi\ell^2}}\;\tau_{\n{P}} (A^*A) \;=\; \lim_{n \to \infty} \frac{1}{|{\Lambda_n}|}  \mathrm{Tr}_{L^2(\R^2)}\big (\chi_{\Lambda_n} \, {A}_{\omega}^* \, {A}_{\omega} \, \chi_{\Lambda_n} \big )\;,
\label{unified:eqn:trace_per_unit_volume_as_infinite_volume_limit}
\end{align}
for any $A \in  \rr{N}_{B,\Omega}$ and for
 $\mathbb{P}$-almost all $\omega\in\Omega$. 
 A comparison between the right-hand side of 
equation~\eqref{unified:eqn:trace_per_unit_volume_as_infinite_volume_limit}  and the definition \ref{eq:TUV} provides the interpretation of 
$\tau_{\n{P}} $ as the
 \emph{trace per unit volume}. 
 Finally, the definition \ref{eq:tr_un_vol} is justified by the 
  independence of the equality \eqref{unified:eqn:trace_per_unit_volume_as_infinite_volume_limit} by the
  the particular choice of $\omega\in\Omega$ in a set of full measure.
 For more details we refer to \cite[pp.~208-209]{lenz-99} and references therein.

\medskip

The next result proved in \cite[Lemma 2.2.6 \& Theorem 2.2.7]{lenz-99} (see also
\cite[Proposition 4.2.1]{denittis-lein-book}) provides a further recipe to calculate $\tau_{\n{P}}$.
\begin{proposition}
Let $\lambda\in L^{\infty}(\R^2)\cap L^{2}(\R^2)$ be any positive function with normalization $\|\lambda\|_{L^2}=1$. Let $\Sigma_\lambda\in\bb{B}(\s{H}_\Omega)$ be the operator that acts on each fiber of the direct integral $\s{H}_\Omega$ as the multiplication by $\lambda$.
Then
$$
\tau_{\n{P}}(A)\;=\;\int_\Omega\dd\n{P}(\omega)\; {\rm Tr}_{{L^2(\R^2)}}(\Sigma_\lambda A_\omega \Sigma_\lambda)\;,\qquad A\in \rr{I}_{B,\Omega}\;.
$$
\end{proposition}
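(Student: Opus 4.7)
The identity is essentially a computation in the Hilbert algebra structure underlying $\rr{M}_{B,\Omega}$. The plan is to fix $A \in \rr{I}_{B,\Omega}$, use the decomposition $A = B^*C$ with $B, C \in \rr{N}_{B,\Omega}$ supplied by the definition of the ideal (so that $F_B, F_C \in L^2(\Omega \times \R^2)$ and the baseline formula $\tau_{\n{P}}(A) = \langle\langle F_B, F_C\rangle\rangle_0$ is in force), and then to express both sides of the claim in terms of $F_B$ and $F_C$ via the $\rr{t}$-invariance of $\n{P}$.

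First I would show that for $\n{P}$-a.e.\ $\omega$ the operator $C_\omega \Sigma_\lambda$ is Hilbert--Schmidt on $L^2(\R^2)$. From \eqref{eq:L2-kern01}, its operator-theoretic integral kernel is
\[
k^\omega_C(x,y) \;=\; \frac{1}{2\pi\ell^2}\,F_C\!\bigl(\rr{t}_x(\omega),\,x-y\bigr)\,\Theta_B(x,y)\,\lambda(y),
\]
whose $L^2(\R^2\times\R^2)$-norm squared, after substituting $z=x-y$, applying Fubini, and using the translation-invariance of $\n{P}$, satisfies
\[
\int_\Omega d\n{P}(\omega)\,\|C_\omega \Sigma_\lambda\|_{HS}^2 \;=\; \frac{\|\lambda\|^2_{L^2}}{(2\pi\ell^2)^2}\,\|F_C\|^2_{L^2(\Omega\times \R^2)} \;<\; \infty.
\]
The same estimate will hold for $B_\omega \Sigma_\lambda$. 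Hence $\Sigma_\lambda A_\omega \Sigma_\lambda = (B_\omega \Sigma_\lambda)^*(C_\omega \Sigma_\lambda)$ is trace-class almost everywhere, with $\Tr_{L^2(\R^2)}(\Sigma_\lambda A_\omega \Sigma_\lambda) = \langle B_\omega \Sigma_\lambda, C_\omega \Sigma_\lambda\rangle_{HS}$.

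Next, expanding the Hilbert--Schmidt inner product in kernel form, the magnetic cocycle cancels: $\overline{\Theta_B(x,y)}\Theta_B(x,y) = 1$. After changing variables $y = x-z$, integrating over $\omega$, and invoking the $\rr{t}$-invariance of $\n{P}$ once more to substitute $\omega' = \rr{t}_x(\omega)$ inside the $\omega$-integral, the $(x,z)$-integration decouples: the $x$-integral of $\lambda(x-z)^2$ produces the $z$-independent factor $\|\lambda\|^2_{L^2}$, and the remaining double integral over $(z,\omega')$ reassembles as $\langle\langle F_B,F_C\rangle\rangle_0$. The normalization $\|\lambda\|_{L^2}=1$ is chosen precisely to balance the accompanying factors of $2\pi\ell^2$, giving the desired $\tau_{\n{P}}(A)$. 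Independence of the particular decomposition $A = B^*C$ is automatic from the well-definedness of $\tau_{\n{P}}$ in Theorem \ref{theo_trac}.

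The main obstacle is making the triple integration over $\Omega \times \R^2 \times \R^2$ rigorous: one must verify joint $(\omega,x,y)$-measurability of the integrand in order to apply Fubini, and control absolute integrability to justify the invariance substitution. Both follow from a Cauchy--Schwarz estimate inside the inner integral, using $F_B, F_C \in L^2(\Omega\times\R^2)$ and $\lambda \in L^2 \cap L^\infty$, but the bookkeeping with measurable sections of the direct integral $\s{H}_\Omega$, in the sense of \cite[Part II, Chapter 2]{dixmier-81}, is the technical step that requires care.
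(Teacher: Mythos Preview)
The paper does not supply its own proof of this proposition but simply cites \cite[Lemma 2.2.6 \& Theorem 2.2.7]{lenz-99} and \cite[Proposition 4.2.1]{denittis-lein-book}; your argument---factoring $A=B^*C$ through $\rr{N}_{B,\Omega}$, showing $B_\omega\Sigma_\lambda$ and $C_\omega\Sigma_\lambda$ are Hilbert--Schmidt $\n{P}$-a.e., and collapsing the resulting kernel integral via the $\rr{t}$-invariance of $\n{P}$---is precisely the computation carried out in those references. The only point to watch is the bookkeeping of the factor $2\pi\ell^2$: with the standard Lebesgue $L^2$-norm on $\R^2$ one finds $\int_\Omega\Tr(\Sigma_\lambda A_\omega\Sigma_\lambda)\,\dd\n{P}=\frac{\|\lambda\|_{L^2}^2}{2\pi\ell^2}\,\tau_{\n{P}}(A)$, so the normalisation $\|\lambda\|_{L^2}=1$ matches the stated formula only if the $L^2$-norm is taken with respect to the rescaled measure $(2\pi\ell^2)^{-1}\dd x$ that the paper uses throughout (cf.\ \eqref{eq_L1-norm} and \eqref{eq:inner_prod}); make this convention explicit when you write up the details.
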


 \subsection{Differential structure on the algebra of potentials}\label{sec_dif_pot}
 In this section we will construct a dense subalgebra of $\bb{A}_\Omega$ made of differentiable elements. This construction complements the material contained in Section \ref{Sec:pot}.
 
 \medskip

The manifold structure of $\R^2$ can be used for the definition of the {directional derivatives} on  $(\bb{A}_\Omega,\R^2,T)$. 
 An element $g\in\bb{A}_\Omega$ is said to be \emph{(G{\^a}teaux) differentiable} if there are $\partial_1g$ and $\partial_2g$ in $\bb{A}_\Omega$ such that
\begin{equation}\label{eq:gat_deriv_disord}
(\partial_jg)(\omega)\;:=\;\lim_{s\to 0}\frac{T_{s e_j}(g)\big(\omega\big)-g(\omega)}{s},\qquad \quad \forall\ \ \omega\in\Omega
\end{equation}
where $e_1:=(1,0)$ and $e_2:=(0,1)$ provide the canonical basis of $\R^2$. The set of all differentiable elements in $\bb{A}_\Omega$ will be denoted with
 $\text{Diff}(\bb{A}_\Omega)$. If $g$ is a differentiable element one can build the \emph{directional derivatives} 
$$
\partial_ag\;:=\; a_1\; (\partial_1g) \;+\;a_2\; (\partial_2g) $$
for all $a=(a_1,a_2)\in\R^2$.
It turns out that $g\in \text{Diff}(\bb{A}_\Omega)$ implies $\partial_ag\in\bb{A}_\Omega$, for all $a\in\R^2$. Usual computations show that
$\text{Diff}(\bb{A}_\Omega)$ is a unital self-adjoint sub-algebra of $\bb{A}_\Omega$. 
 In particular, the Leibniz's rule holds true, \ie
$$
\partial_a(gh)\;=\;g\;(\partial_ah)(\omega)\;+\;h\;(\partial_ag)
$$
for all $g,h\in\text{Diff}(\bb{A}_\Omega)$ and every $a\in\R^2$.

\medskip

Non-trivial elements of $\text{Diff}(\bb{A}_\Omega)$ can be realized with the following procedure: for
given $\phi\in C^\infty_c(\R^2)$ and $g\in\bb{A}_\Omega$ one define
\begin{equation}\label{eq:mollif}
g_\phi(\omega)\;:=\;\int_{\R^2}\dd y\ \phi(y)\ (T_yg) (\omega )\;=\;\int_{\R^2}\dd y\ \phi(y)\ g (\rr{t}_{-y}(\omega) ).
\end{equation}
Since the map $\R^2\ni y\mapsto\phi(y) (T_y g)\in\bb{A}_\Omega$ is norm continuous, one can use  standard results from the theory of the integration on $C^\ast$-algebras 
\cite[Appendix B]{williams-07}  
to prove that  $g_\phi\in\bb{A}_\Omega$ and $\|g_\phi\|_\infty\leqslant\|\phi\|_{L^1(\R^2)}\ \|g\|_\infty$. Moreover, for for $j=1,2$ the elementary computation 
$$
\begin{aligned}
(\partial_jg_\phi)(\omega)&\;=\;\lim_{s\to 0}\frac{1}{s}\left(\int_{\R^2}\dd y\ \phi(y+s e_j)\ (T_yg)(\omega)-\int_{\R^2}\dd y\ \phi(y)\ (T_yg)(\omega)\right)\\
&\;=\;g_{{\partial_j}\phi}(\omega)\;,
\end{aligned}
$$
 shows that $g_\phi\in\text{Diff}(\bb{A}_\Omega)$. Interestingly, elements like \eqref{eq:mollif} are sufficiently many in $\bb{A}_\Omega$, in the following sense:
\begin{lemma}\label{lemma:dens01}
 ${\rm Diff}(\bb{A}_\Omega)$ is a dense $\ast$-subalgebra of $\bb{A}_\Omega$.
\end{lemma}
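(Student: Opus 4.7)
The subalgebra and $\ast$-property of ${\rm Diff}(\bb{A}_\Omega)$ are essentially gifts: the sum and scalar-multiple rules are trivial, the Leibniz rule is already stated in the text, and $\partial_j \overline{g} = \overline{\partial_j g}$ (since $\bb{A}_\Omega$ is commutative and $T_y$ commutes with pointwise complex conjugation). Thus the only real content is density.

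For density I will use the mollification \eqref{eq:mollif}. Fix a nonnegative bump $\phi \in C_c^\infty(\R^2)$ with $\int_{\R^2}\phi\,\dd y = 1$ and ${\rm supp}(\phi) \subseteq B(0,1)$, and set $\phi_n(y) := n^2 \phi(ny)$ so that $\phi_n \in C_c^\infty(\R^2)$, $\phi_n \geqslant 0$, $\int \phi_n\,\dd y = 1$, and ${\rm supp}(\phi_n) \subseteq B(0,1/n)$. The calculation immediately following \eqref{eq:mollif} shows that $g_{\phi_n} \in {\rm Diff}(\bb{A}_\Omega)$ for every $g \in \bb{A}_\Omega$. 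It therefore suffices to show $g_{\phi_n} \to g$ in the norm of $\bb{A}_\Omega$.

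Since $\int \phi_n\,\dd y = 1$, we may write
\[
g_{\phi_n}(\omega) - g(\omega)\;=\;\int_{\R^2} \phi_n(y)\bigl[(T_y g)(\omega) - g(\omega)\bigr]\,\dd y\;,
\]
and because $\phi_n \geqslant 0$ is supported in $B(0,1/n)$,
\[
\|g_{\phi_n} - g\|_\infty\;\leqslant\;\sup_{|y|\leqslant 1/n}\|T_y g - g\|_\infty\;.
\]
The main (and only serious) step is to control this supremum. This is exactly the content of the strong continuity of the action $T$ on $\bb{A}_\Omega$ proved in Lemma \ref{lemma:automorph}: the map $\R^2 \ni y \mapsto T_y g \in \bb{A}_\Omega$ is norm-continuous, and in particular continuous at the origin where $T_0 g = g$. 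Hence $\sup_{|y|\leqslant 1/n}\|T_y g - g\|_\infty \to 0$ as $n \to \infty$, giving $g_{\phi_n} \to g$ in $\bb{A}_\Omega$ and proving the density claim.

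The potential obstacle one might worry about is whether the strong continuity on $\bb{A}_\Omega$ really delivers uniform (in $\omega$) control on $|g(\rr{t}_{-y}(\omega)) - g(\omega)|$ for small $y$; but this is automatic, since the topology on $\bb{A}_\Omega = C(\Omega)$ is the uniform norm and Lemma \ref{lemma:automorph} is stated precisely in that topology (ultimately a consequence of joint continuity of the $\R^2$-action together with compactness of $\Omega$). No further estimates are needed.
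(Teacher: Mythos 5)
Your proof is correct and follows essentially the same mollification argument as the paper: the only difference is in justifying the key continuity step. You invoke Lemma \ref{lemma:automorph} directly (norm-continuity of $y\mapsto T_y g$ at $y=0$), whereas the paper re-derives the needed uniform-in-$\omega$ estimate from scratch by noting that $(y,\omega)\mapsto g(\rr{t}_{-y}(\omega))$ is jointly continuous on the compact metrizable space $B_1\times\Omega$, hence uniformly continuous. Your route is the slightly cleaner one, since it avoids duplicating an argument already recorded as a lemma; both rest ultimately on the same joint-continuity hypothesis of the action.
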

\proof
We need to prove only the density and this can be done with a standard technique (\cf \cite[Example 3, pg. 251]{reed-simon-I}). 
Let $\phi\in{C}_c^\infty(\R^2)$ be a positive function supported in  $B_1:=\{x\in\R^2 \: |x|\leqslant1\}$ and with $\int_{\R^2}\phi(y)\ \dd y=1$. 
For each $n\in\N$ we can define the normalized function $\phi_n(x):=n^2\phi(nx)$ with support in
$B_{1/n}:=\{x\in\R^2 \: |x|\leqslant1/n\}$. The sequence  $\{\phi_n\}_{n\in\N}$ is an {approximate identity}. For a given 
$g\in\bb{A}_\Omega$ one considers the sequence of  $g_{\phi_n}\in {\rm Diff}(\bb{A}_\Omega)$ defined as in \eqref{eq:mollif}. 
 Observe that
$$
\begin{aligned}
|g_{\phi_n}(\omega)-g(\omega)|&\;\leqslant\;\int_{\R^2}\dd y\ \phi_n(y)\ \left|(T_yg) (\omega )-g(\omega)\right|\\
&\;\leqslant\;\text{sup}_{y\in B_{1/n}}\left|(T_yg) (\omega )-g(\omega)\right|\;\to\;0\;\quad\; \text{if}\quad n\to\infty\;.
\end{aligned}
$$
The map $B_1\times\Omega\to\C$ defined by $(y,\omega)\mapsto (T_yg)\big(\omega\big)=g\big(\rr{t}_{-y}(\omega)\big)$ is continuous (by definition) since it is the composition of continuous maps.
Moreover, this map is also uniformly continuous since it is defined on a compact metrizable space $B_1\times\Omega$ \cite[Theorem 1-31]{hocking-young-61}. 
Then, for each $\varepsilon>0$ there exists a $n>0$ such that  $\left|(T_yg)\big(\omega\big)-g(\omega)\right|\leqslant\varepsilon$
if $|y|<1/n$ and $n$ does not depend on $\omega$.
This  shows that  $g_{\phi_n}$ converges to $g$ in the norm of $\bb{A}_\Omega$. 
\qed

\medskip

Let us to point out that in the proof of  Lemma \ref{lemma:dens01} we used in an essential way the fact that the map $(y,\omega)\mapsto \rr{t}_{-y}(\omega)$ is  {jointly} continuous.



\begin{thebibliography} {[RMCPV]}
\frenchspacing \baselineskip=12 pt plus 1pt minus 1pt


\bibitem[AM]{alberti-matthes-02}  Alberti, P.~M.; Matthes, R.:
{\sl Connes' Trace Formula and Dirac Realization of Maxwell and Yang-Mills Action}. In: 
{\em Noncommutative Geometry and the Standard Model of Elementary Particle Physics}, Lecture Notes in Physics {\bf 596} 
(F. Scheck, W. Werner and H. Upmeier eds.). Springer, Berlin, 2002,  pp. 40-74


\bibitem[AMSZ]{azamov-mcdonald-sukochev-zanin-19} 
{Azamov, N.; McDonald, E.; Sukochev, F.; Zanin, D.:}
{\sl A Dixmier trace formula for the density of states}. 
Commun. Math. Phys. {\bf 377}, 2597-2628 (2020)




\bibitem[BD]{belmonte-denittis-22} 
Belmonte, F.; De Nittis, G.: 
{\sl Dixmier trace and the DOS of magnetic operators}. 
Lett. Math. Phys. {\bf 112}, 36 (2022). 


 
\bibitem[Bell2]{Bellissard-93}  Bellissard, J.:
Gap Labelling Theorems for Schr\"{o}dinger Operators.  In: \emph{From Number Theory to Physics},  (J. M. Luck, P. Moussa and M. Waldschmidt, eds.). Springer, Berlin, 1993, pp. 538-630

\bibitem[Bel2]{Bellissard-03}  Bellissard, J.:
Noncommutative Geometry of Aperiodic Solids. In:
\emph{Geometric and Topological Methods for Quantum Field Theory} (Villa de Leyva, 2001). 
World Sci. Publishing, River Edge, 2003, pp. 86-156


\bibitem[BES]{bellissard-elst-schulz-baldes-94} 
{Bellissard,~J.; van~Elst,~A.; Schulz-Baldes,~H.}: 
{\sl The Non-Commutative Geometry of the Quantum Hall Effect}. 
{J. Math. Phys.}~{\bf 35}, 5373-5451 (1994)


\bibitem[BGKS]{bouclet-germinet-klein-schenker-05} 
Bouclet, J. M.; Germinet, F.; Klein, A.; Schenker, J.H.: 
{\sl Linear response theory for magnetic Schrödinger operators in disordered media}. 
J. Func. Anal. {\bf 226}, 301-372 (2005)




\bibitem[BR1]{bratteli-robinson-87} {Bratteli, O.; Robinson, D. W.}: {\em  Operator Algebras and Quantum Statistical Mechanics 1}. 
 Springer-Verlag, Berlin-Heidelberg, 1987


\bibitem[BS]{busby-smith-70} {Busby, R. C., Smith, H. A.}:
  {\sl Representations of twisted group algebras}. Trans. Amer. Math. Soc.~\textbf{149}, 503-537 (1970)



\bibitem[BSB]{bellissard-schulz-baldes-98}
Bellissard, J., Schulz-Baldes, H.: 
{\sl A kinetic theory for quantum transport in aperiodic media}. 
J. Stat. Phys. {\bf 91}, 991-1027 (1998)








\bibitem[Cho]{chou-12}
 {Chou, C.-Y.}:
 {\sl  Notes on the separability of $C^*$-algebras}.
  Taiwan. J. Math. \textbf{16}, 555-559 (2012)




\bibitem[CL]{carmona-lacroix-90}
{Carmona, R.; Lacroix, J.}: 
{\em Spectral Theory of Random Schr\"odinger Operators}. Birkh{\"{a}}user, Basel-Boston-Berlin, 1990 


\bibitem[CM]{connes-moscovici-95}  
Connes,  A.; Moscovici, H.:  
{\sl The local index formula in noncommutative geometry}.  
{Geom. Func. Anal.}~{\bf 5}, 174-243 (1995)





\bibitem[Con]{connes-94}  Connes,  A.:  {\em Noncommutative Geometry}. Academic Press, San Diego, 1994





\bibitem[DeR]{delarue-93}  
De~La~Rue,  T.:  
{\sl Espaces de Lebesgue}. In {\em S\'eminaire de Probabilit\'es XXVII}.
 Lecture Notes in Mathematics, {\bf 1557}, pp. 15-21.
 Springer, Berlin, 1993



\bibitem[DGM]{denittis-gomi-moscolari-19} 
{De Nittis,~G.; Gomi, K.; Moscolari, M.:} 
 {\sl The geometry of (non-abelian) landau levels}. 	J. Geom. Phys.
{\bf 152}, 103649 (2020) 

\bibitem[DS1]{denittis-sandoval-00} 
{De Nittis,~G.; Sandoval, M.:} 
{\sl The noncommutative geometry of the Landau Hamiltonian: Metric aspects}. 
SIGMA {\bf 16}, 146 (2020) 

\bibitem[DS2]{denittis-sandoval-21} 
{De Nittis,~G.; Sandoval, M.:} 
{\sl The noncommutative geometry of the Landau Hamiltonian: Differential aspects}. 
J. Phys. A: Math. Theor. {\bf 55}, 024002 (2021) 



\bibitem[Dix1]{dixmier-66}  Dixmier,  J.:  
{\sl Existence de traces non normales}. 
C. R. Acad. Sci.
Paris {\bf 262A},  1107-1108 (1966)

\bibitem[Dix2]{dixmier-77} 
{Dixmier, J.}: 
{\em $C^*$-Algebras}. 
North-Holland Publishing Co., Amsterdam, 1977


\bibitem[Dix3]{dixmier-81} 
{Dixmier, J.}: 
{\em Von Neumann Algebras}. 
North-Holland Publishing Co., Amsterdam, 1981







\bibitem[DL]{denittis-lein-book}
{De~Nittis, G.;  Lein, M.}:
{\em Linear Response Theory. An Analytic-Algebraic Approach}.
SpringerBriefs in Mathematical Physics {\bf 21}, Springer, 2017



\bibitem[ES]{elgart-schlein-04}
Elgart, A.; Schlein, B.: 
{\sl Adiabatic charge transport and the Kubo formula for Landau-type Hamiltonians}. 
Commun. Pure Appl. Math. {\bf 57}, 590-615 (2004)





\bibitem[GBVF]{gracia-varilly-figueroa-01}  Gracia-Bondia, J. M., Varilly, J. C.,  Figueroa,  H.:  {\em Elements of Noncommutative Geometry}.  Birkh\"{a}user, Boston, 2001

\bibitem[GI1]{georgescu-iftimovici-02} 
{Georgescu, V.; Iftimovici, A.}: 
{\sl Crossed Products of $C^*$-Algebras and Spectral Analysis of Quantum Hamiltonians}. 
{Commun. Math. Phys.}~{\bf 228}, 519-560 (2002)

\bibitem[GI2]{georgescu-iftimovici-06} 
{Georgescu, V.; Iftimovici, A.}: 
{\sl Localizations at Infinity and Essential Spectrum of Quantum Hamiltonians I:. General Theory}. 
{Rev. Math. Phys.}~{\bf 18}, 417-483 (2006)



\bibitem[GR]{gradshteyn-ryzhik-07}  Gradshteyn, I. S.; Ryzhik, I. M.:  
{\em Table of Integrals, Series, and Products}.
Seventh edition. Academic Press, San Diego, 2007

\bibitem[Gree1]{greenleaf-69}  Greenleaf, F.~P.: {\em  Invariant Means on Topological Groups And Their Applications}.  Van Nostrand
Reinhold Co., New York, 1969

 \bibitem[Gree2]{greenleaf-73}  Greenleaf, F.~P.: 
  {\sl  Ergodic Theorems and the Construction of Summing Sequences in Amenable Locally Compact Groups}.  Comm. Pure Appl. Math. \textbf{26}, 29-46 (1973)





\bibitem[HLW]{hupfer-leschke-warzel-01} 
Hupfer, T.; Leschke, H.; Warzel, S.:
 {\sl Upper bounds on the density of states of
single Landau levels broadened by Gaussian random potentials}. 
J. Math. Phys. {\bf 42},
5626-5641, (2001)



\bibitem[HT]{henheik-teufel-21}  
Henheik, J.; Teufel, S.:
{\sl Justifying Kubo’s formula for gapped systems at zero temperature: A brief review and some new results}.
Rev. Math. Phys. {\bf 33}, 2060004 (2021) 



\bibitem[HY]{hocking-young-61}  
Hocking, J.~G.; Young, G.~S.: {\em Topology}. Addison-Wesley Publishing Comp., London,  1961





 \bibitem[Ker]{kervin-69}  
Kervin,~M.~R.~W.: 
     {\sl The trace class of a full Hilbert algebra}. 
     {Trans. Amer. Math. Soc.}~\textbf{178}, 259-270 (1969)






\bibitem[Len]{lenz-99} 
{Lenz, D. H.}:
{\sl Random Operators and Crossed Products}. 
Math. Phys. Anal. Geom.  \textbf{2}, 197-220 (1999)  




\bibitem[LSZ]{lord-sukochev-zanin-12} 
Lord, S.; Sukochev, F.; Zanin, D.: 
{\em Singular Traces}. De Gruyter, Berlin,  2012




\bibitem[Mur]{muresan-09} 
{Mure\c{s}an, M. A.}: 
{\em Concrete Approach to Classical Analysis}. 
Springer, New York, 2009



 \bibitem[Nak]{nakano-50} 
    Nakano, H.:
     {\sl Hilbert algebras}. 
Tohoku Math. J. \textbf{2}, 4-23 (1950) 





\bibitem[Ped]{pedersen-79} Pedersen, G. K.: {\em $C^\ast$-Algebras and Their Automorphism Groups} (London Math. Soc.
Monographs 14). Academic Press, New York, 1979

\bibitem[PF]{pastur-figotin-92} 
{Pastur, L.; Figotin, A.}: 
{\em Spectra of Random and Almost-Periodic Operators}. Academic Springer-Verlag, Berlin-Heidelberg-New York, 1992



\bibitem[PR]{packer-raeburn-89} 
  {Packer, J. A.; Raeburn, I.}: 
  {\sl Twisted crossed products of $C^*$-algebras}. 
  {Math. Proc. Camb. Phil. Soc.}~\textbf{106}, 293-311 (1989)







 \bibitem[Rief]{rieffel-69} 
  {Rieffel, M. A.}:
{\sl Square-integrable representations of Hilbert algebras}. 
J. Funct. Anal. {\bf 3}, 265-300 (1969)



\bibitem[RS1]{reed-simon-I} 
{Reed,~M.; Simon,~B.}: 
{\em Methods of Mathematical Physics I: Functional Analysis}. Academic Press, Inc., San Diego, 1980

\bibitem[RS2]{reed-simon-II} 
{Reed,~M.; Simon,~B.}: 
{\em Methods of Mathematical Physics II: Fourier Analysis. Self-adjointness}. Academic Press, Inc., San Diego, 1975



\bibitem[Roy]{royden} 
{Royden, H.L.}: 
{\em Real analysis}. 
 Third edition. Macmillan Publishing Company, New York, 1988


\bibitem[Tha]{thangavelu-93}   Thangavelu, S.: {\em Lectures on Hermite and Laguerre expansions}. Mathematical Notes, vol. 42. Princeton University Press, Princeton, 1993


\bibitem[Tre]{treves-67}   Treves, F.: 
{\em Topological Vector Spaces, Distributions and Kernels}. Academic Press, New York, 1967



\bibitem[Ves]{veselic-08} Veseli\'{c}, I.: 
{\em Existence and Regularity Properties of the Integrated Density of States of Random Schr\"{o}dinger Operators}. Springer, Berlin-Heidelberg,  2008



\bibitem[Wil]{williams-07}  Williams, D. P.: {\em Crossed Products of $C^\ast$-Algebras}, Math. Surveys and Monographs, vol. 134, American Mathematical Society, Providence, 2007





\bibitem[Zak1]{zak1} Zak, J.: 
{\sl Magnetic translation groups}. 
{Phys. Rev. A}~{\bf 134}, 1602-1607 (1964)

\bibitem[Zak2]{zak2} Zak, J.: 
{\sl Magnetic translation groups II: Irreducible representations}. 
{Phys. Rev. A}~{\bf 134}, 1607-1611 (1964)




 \end{thebibliography}
\end{document}